\documentclass{article}

\usepackage{arxiv}

\usepackage[utf8]{inputenc} 
\usepackage[T1]{fontenc}    
\usepackage[numbers]{natbib}

\usepackage{hyperref}       
\usepackage{url}            
\usepackage{booktabs}       
\usepackage{amsfonts}       
\usepackage{nicefrac}       
\usepackage{microtype}      
\usepackage{graphicx}
\usepackage{doi}

\usepackage{amsmath}
\usepackage{amssymb}
\usepackage{latexsym}
\usepackage{enumerate}
\usepackage{colortbl}
\usepackage{amsthm}
\usepackage{bm}
\usepackage{xcolor}
\usepackage{tikz, pgfplots, standalone} 
\pgfplotsset{compat=1.18} 
\usepackage{subcaption}  

\newtheorem{thm}{Theorem}
\newtheorem{lem}{Lemma}
\newtheorem{prop}{Proposition}

\newtheorem{cor}{Corollary}

\newcommand{\argmax}{\mathop{\rm arg~max}\limits}

\newcommand{\zb}{\bm{z}}

\newcommand{\ub}{\bm{u}}
\newcommand{\rhob}{\bm{\rho}}
\newcommand{\phib}{\bm{\phi}}

\newcommand{\R}{\mathbb R}

\title{The trivariate wrapped Cauchy copula}


\author{ \href{https://orcid.org/0000-0003-2648-6769}{\includegraphics[scale=0.06]{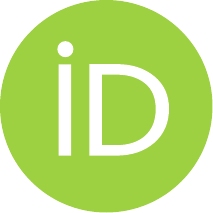}\hspace{1mm}Shogo Kato} \\
    Institute of Statistical Mathematics \\
    Japan \\
	\texttt{skato@ism.ac.jp} \\
	\And
	\href{https://orcid.org/0000-0002-2290-8437}{\includegraphics[scale=0.06]{orcid.pdf}\hspace{1mm}Christophe Ley} \\
	University of Luxembourg\\
    Luxembourg \\
	\texttt{christophe.ley@uni.lu} \\
    \And
	\href{https://orcid.org/0009-0006-1372-7032}{\includegraphics[scale=0.06]{orcid.pdf}\hspace{1mm}Sophia Loizidou} \\
	University of Luxembourg\\
    Luxembourg \\
	\texttt{sophia.loizidou@uni.lu} \\
    \And
	\href{https://orcid.org/0000-0003-0090-6235}{\includegraphics[scale=0.06]{orcid.pdf}\hspace{1mm}Kanti V. Mardia} \\
    University of Leeds \\
    United Kingdom \\
	\texttt{k.v.mardia@leeds.ac.uk} \\
}



\hypersetup{
pdftitle={The trivariate wrapped Cauchy copula},
pdfauthor={S. Kato, C. Ley, S. Loizidou, K. V. Mardia},
pdfkeywords={Angular data, copula, directional statistics, flexible modeling, wrapped Cauchy distribution},
}

\begin{document}
\maketitle

\begin{abstract}
In this paper, we propose a new flexible distribution for data on the three-dimensional torus which we call a trivariate wrapped Cauchy copula. Our trivariate copula has several attractive properties. It has a simple form of density and desirable modality properties. Its parameters  allow for an adjustable degree of dependence between every pair of variables and these can be easily estimated. The conditional distributions of the model are well studied bivariate wrapped Cauchy distributions. Furthermore, the distribution can be easily simulated. Parameter estimation via maximum likelihood for the distribution is given and we highlight the simple implementation procedure to obtain these estimates. We illustrate our trivariate wrapped Cauchy copula on data from protein bioinformatics of conformational angles.
\end{abstract}

\keywords{Angular data \and copula \and directional statistics \and flexible modeling \and wrapped Cauchy distribution}

\section{Introduction}\label{sec:Intro}

Angular data occurs frequently in domains such as environmental sciences (e.g., wind directions, wave directions), bioinformatics (dihedral angles in protein backbone structures), zoology (animal movement studies), medicine (circadian body clock, secretion times of hormones), or political/social sciences (times of crimes during the day), see for example \cite{LV19}. However, dealing with data that are angles requires special care in order to take into account their topology (domain $[0,2\pi)$ where the end points coincide). 
Classical statistical concepts from the real line no longer hold for such data \cite{jammalamadaka_topics_2001, mardia2000}; in particular, the building blocks of statistical modeling and inference, probability distributions, need to be properly defined. For data involving a single angle, called circular data, a large body of literature proposing circular distributions exists (see for example \cite{mardia_directional_2018} and \cite{pewsey_recent_2021}), among which the popular von Mises, wrapped Cauchy and cardioid models. 
Though less in number, there also exist several interesting distributions for data consisting respectively of two angles, called toroidal data, and an angle and a linear part, called cylindrical data. 
Popular examples for toroidal distributions are the bivariate von Mises \cite{mardia1975}, the Sine \cite{singh2002} and Cosine models \cite{mardia2007} and the bivariate wrapped Cauchy \cite{KP15}, while examples for cylindrical models are {the Johnson--Wehrly \cite{johnson_angular-linear_1978}, the Mardia--Sutton \cite{mardia_cylindrical_1978}, the Kato--Shimizu \cite{kato_dependent_2008} and the Abe--Ley \cite{abe_tractable_2017} distributions. }

The situation is somewhat different for the case of trivariate distributions and as far as we know  there is no model which has all of  the properties our  model has. Namely, explicit normalizing constant, easy way to simulate, and  interpretable parameters.   However, many  datasets {exist} which require such a model. An important example stems from structural bioinformatics.  Predicting the three-dimensional folding structure of a protein from its known one-dimensional amino acid structure is among the most important yet hardest scientific challenges, with impacts in drug development, vaccine design, disease mechanism understanding, human cell injection, and enzyme engineering, to cite but these. So far most statistical advances, in particular on flexible and tractable probability distributions, have considered the two dihedral angles $\phi$ and $\psi$, and considered the torsion angle of the side chain $\omega$ to be fixed at either 0 or $\pi$ (which are the only two realistic values for this angle). One exception is  \cite{BMTFKH08} where  $\omega$ is taken as binary random variable with value either 0 or $\pi$ sampled using the hidden Markov model  though the distribution is discrete not continuous. In practice this angle $\omega$ is often measured with some noise, hence a model for $\phi, \psi$ and $\omega$ is required.


In order to fill this important gap in the literature, we propose in this paper the trivariate wrapped Cauchy copula. Note that \cite{circulas} coined the term \emph{circulas} for copulas on the torus.
Circulas differ fundamentally from rescaled copulas in that they also require the periodicity of the densities, see \cite{circulas}, Section 1. Our trivariate wrapped Cauchy copula has the following further benefits: (i) simple form of density, (ii) adjustable degree of dependence between the variables, (iii) interpretable and well-estimatable parameters, (iv) well-known conditional distributions, and (v) a simple data generating mechanism. Thus we are proposing a new model that satisfies the requirements as laid out in \cite{LBC21}.

The paper is organized as follows. In Section~\ref{sec: construction} we review some well-known  bivariate distributions which are relevant for our construction before proposing the trivariate wrapped Cauchy copula in Section~\ref{sec: definition}.
In Section~\ref{sec:Props}, several properties of the proposed distribution are given, including conditional distributions, random variate generation, and modality. Parameter estimation is considered in Section~\ref{sec: parameter estimation} and a real-data analysis provided in Section~\ref{sec:real_data}. Section~\ref{sec:conclusion} concludes the paper with a discussion. We provide  additional information in the Supplementary Material, namely an essential lemma about parameter space, trigonometric moments, correlation coefficients, further properties, method of moment estimation, Monte Carlo simulation results, a multivariate extension and a generalization of our new distribution.


\section{Bivariate circular probability distributions and copulas} \label{sec: construction}

Before we introduce our trivariate wrapped Cauchy copula, we will first review the most popular bivariate circular distributions.

A popular class of toroidal (circular-circular) distributions that allows specifying the marginal distributions has been put forward by Wehrly and Johnson (1980) \cite{WJ80}. 
Their general families have the probability density functions 
\begin{equation}
	f(\theta_1,\theta_2)= 2 \pi g [2 \pi \{ F_1(\theta_1) -q
	F_2(\theta_2) \}] f_1 (\theta_1) f_2 (\theta_2), \label{eq:wj1}
\end{equation}
where $q$ equals 1 or -1 (leading to two distinct families), $0 \leq
\theta_1,\theta_2 <  2 \pi,$ $f_1$ and $f_2$ are specified densities on the circle $[0,2\pi)$, $F_1$ and
$F_2$ are their distribution functions defined with respect to
fixed, arbitrary, origins, and $g$ is also a specified density on the circle. Both families (\ref{eq:wj1}) have the nice property that their marginal densities are given by $f_1$ and $f_2$.
More precisely, let a bivariate circular random vector $(\Theta_1, \Theta_2)$ have the distribution (\ref{eq:wj1}).
Then the marginal densities of $\Theta_1$ and $\Theta_2$ are given by $f_1$ and $f_2$, respectively. Note that expression~\eqref{eq:wj1} with $\theta_2$ replaced by $x\in\R$ (and accordingly $f_2$ and $F_2$ are densities and distribution functions on $\R$) had been proposed by Johnson and Wehrly (1978) \cite{johnson_angular-linear_1978} to build general circular-linear distributions with specified marginals.

The  families (\ref{eq:wj1})  can be readily transformed into copulas for bivariate circular data, assuming $(U_1 ,U_2)' = (2 \pi F_1(\Theta_1), 2 \pi F_2(\Theta_2))'$.
Throughout this paper, unlike the case of ordinary copulas, we define the ranges of the copula variables to be $[0, 2\pi)$ or equivalent ranges modulo $2\pi$, rather than $[0, 1]$.
Then the density of $(U_1,U_2)'$ is given by
\begin{equation}
	c(u_1,u_2) = \frac{1}{2 \pi} g (u_1 - q u_2), \quad 0 \leq u_1,u_2 < 2\pi. \label{eq:jones_etal}
\end{equation}
 Simple integration shows that $c(u_1,u_2)$ integrates to 1 and that each marginal is uniform on the circle (see \cite{circulas} for further properties of the distribution (\ref{eq:jones_etal})).
Therefore the distribution (\ref{eq:jones_etal}) can be viewed as an equivalent of a copula for bivariate circular data.
From Sklar's theorem (see page 18 in \cite{nelsen_introduction_2006}), the distribution with density (\ref{eq:jones_etal}) can be transformed into a distribution with prespecified marginal distributions, as is the case for the distributions of Wehrly and Johnson (\ref{eq:wj1}).

In practice, it is necessary to take specific densities for $f_1$, $f_2$ and $g$ in the families of Wehrly and Johnson, or equivalently, the choice of $g$ and marginal distributions is important for the copula-based version (\ref{eq:jones_etal}).
Various proposals have been put forward in the literature, many of which are based on the use of the von Mises distribution.
We refer the reader to \cite{ley2017}, Section 2.4, for a review of these models. Further, \cite{K09} and \cite{KP15} adopted the wrapped Cauchy density as the function $g$ in (\ref{eq:jones_etal}).
This leads to density (\ref{eq:jones_etal}) being of the form
\begin{equation}
c(u_1,u_2) = \frac{1}{4 \pi^2} \frac{|1-\rho^2|}{1+\rho^2 - 2 \rho \cos (u_1-u_2 - \mu)}, \quad 0 \leq u_1,u_2 < 2\pi,  \label{eq:wc_wj}
\end{equation}
where $\mu \in [0,2\pi )$ is the location parameter and $\rho \in \mathbb{R} \setminus \{\pm 1\}$ the dependence parameter between the two variables.
{The distribution (\ref{eq:wc_wj}) is called a bivariate wrapped Cauchy copula which we label as BWC($\mu,\rho$).}
In fact, \cite{K09} showed that this distribution is derived from a problem in Brownian motion and possesses various tractable properties.
Moreover, \cite{KP15} transformed this model via the M\"obius transformation and showed that the transformed distribution (known as Kato--Pewsey model) has the wrapped Cauchy marginal and conditional distributions.

\section{Our proposal: the trivariate wrapped Cauchy copula} \label{sec: definition}

Having described some of the attractive properties of the bivariate wrapped Cauchy distribution of \cite{KP15} resulting from the copula~\eqref{eq:wc_wj}, we extend~\eqref{eq:wc_wj} to a density on the three-dimensional torus in the following theorem, with anticipation that it will inherit these properties.  

\begin{thm} \label{thm:density}
For $\ub = \left( u_1, u_2, u_3 \right)'$ and $\rhob = \left( \rho_{12}, \rho_{13}, \rho_{23} \right)'$, let
\begin{align}
	\hspace{-0.1cm} t(\ub; \rhob) & =  c_2 \Bigl[ c_1 + 2 \left\{ \rho_{12} \cos (u_1 - u_2) + \rho_{13} \cos (u_1 - u_3) + \rho_{23} \cos (u_2 - u_3) \right\} \Bigr]^{-1},    \nonumber \\
	& \hspace{8cm} 0 \leq u_1,u_2, u_3 < 2\pi,  \label{eq:tri_density} 
\end{align}
where $\rho_{12},\rho_{13},\rho_{23} \in \mathbb{R} \setminus \{0\} $, $\rho_{12}\rho_{13} \rho_{23} >0$,
\begin{equation}
c_1 = \frac{\rho_{12} \rho_{13}}{ \rho_{23}} + \frac{\rho_{12} \rho_{23}}{\rho_{13}} + \frac{ \rho_{13} \rho_{23} }{\rho_{12}}, \label{eq:c1}
\end{equation}
and
\begin{equation} \label{eq:c2}
c_2 = \frac{1}{(2\pi)^3} \left\{ \left( \frac{\rho_{12} \rho_{13}}{ \rho_{23}} \right)^2 + \left( \frac{\rho_{12} \rho_{23}}{\rho_{13}} \right)^2 + \left(\frac{\rho_{13} \rho_{23}}{\rho_{12}} \right)^2 - 2 \rho_{12}^2 - 2 \rho_{13}^2 - 2 \rho_{23}^2 \right\}^{1/2}.
\end{equation}
Suppose that there exists one of the permutations of $(1,2,3)$, $(i,j,k)$, such that 
\begin{equation}\label{conditions}
    |\rho_{j k}| < |\rho_{ij} \rho_{i k}| / ( |\rho_{ij}| + |\rho_{i k}|),\end{equation} where $\rho_{ji} = \rho_{ij}$ for $1 \leq i < j \leq 3$.
Then the function (\ref{eq:tri_density}) is a probability density function on the three-dimensional torus $[0,2 \pi)^3$. Its parameters $\rhob$ are not identifiable but without loss of generality we take our identifiability constraint given by \begin{equation}\label{identcond}
\rho_{12}\rho_{13}\rho_{23}=1.
\end{equation}
\end{thm}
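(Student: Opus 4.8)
The plan is to verify the two defining properties of a density separately, nonnegativity of $t(\ub;\rhob)$ and that it integrates to one over $[0,2\pi)^3$, and then to dispatch the identifiability claim by a short scaling argument. Since $\rho_{12}\rho_{13}\rho_{23}>0$ forces each of the three ratios in \eqref{eq:c1} to be positive (e.g. $\rho_{12}\rho_{13}/\rho_{23}=\rho_{12}\rho_{13}\rho_{23}/\rho_{23}^2>0$), we have $c_1>0$ and $c_2\ge 0$, so nonnegativity of $t$ is equivalent to positivity of the bracketed denominator
\[
D(\ub)=c_1+2\left\{\rho_{12}\cos(u_1-u_2)+\rho_{13}\cos(u_1-u_3)+\rho_{23}\cos(u_2-u_3)\right\}
\]
for every $\ub$. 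I treat this positivity as the crux and return to it last.

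For the normalization I would integrate iteratively. Fixing $u_2,u_3$, the two terms of $D$ containing $u_1$ combine into a single sinusoid $2\rho_{12}\cos(u_1-u_2)+2\rho_{13}\cos(u_1-u_3)=R\cos(u_1-\phi)$ with $R=2\sqrt{\rho_{12}^2+\rho_{13}^2+2\rho_{12}\rho_{13}\cos(u_2-u_3)}$, so that, writing $A=c_1+2\rho_{23}\cos(u_2-u_3)$ and using the standard integral $\int_0^{2\pi}(a+b\cos\theta)^{-1}\,d\theta=2\pi/\sqrt{a^2-b^2}$ (valid when $a>|b|$), the inner integral equals $2\pi/\sqrt{A^2-R^2}$. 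The key algebraic simplification is that $A^2-R^2$ is a perfect square: writing $P,Q,S$ for the three ratios in \eqref{eq:c1} and setting $d=-P+Q+S$, a direct computation using $c_1=P+Q+S$, $PQ=\rho_{12}^2$, $PS=\rho_{13}^2$, $QS=\rho_{23}^2$ gives $A^2-R^2=(d+2\rho_{23}\cos(u_2-u_3))^2$. Because the integrand then depends on $u_2,u_3$ only through $v=u_2-u_3$, integrating over $(u_2,u_3)\in[0,2\pi)^2$ reduces by periodicity to $2\pi\int_0^{2\pi}2\pi\,|d+2\rho_{23}\cos v|^{-1}\,dv$, and a second application of the standard integral yields $(2\pi)^3/\sqrt{d^2-4\rho_{23}^2}$. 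Finally $d^2-4\rho_{23}^2=P^2+Q^2+S^2-2PQ-2PS-2QS$, which is exactly $\{(2\pi)^3c_2\}^2$ by \eqref{eq:c2}; collecting factors gives $\int_{[0,2\pi)^3}D(\ub)^{-1}\,d\ub=1/c_2$, hence $\int t=1$.

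The main obstacle is establishing $D(\ub)>0$ everywhere, which also retroactively justifies the two uses of the standard integral: if $D>0$ for all $\ub$ then minimizing over $u_1$ gives $A-|R|>0$, so $A>|R|$; this forces $A^2-R^2>0$, hence $d+2\rho_{23}\cos v$ never vanishes and so keeps a constant positive sign, giving $d>2|\rho_{23}|$ and thus $d^2-4\rho_{23}^2>0$, i.e. $c_2>0$. To prove $D>0$ I would use that $D$ depends only on the angular differences, so setting $\beta=u_1-u_3$ and $\gamma=u_2-u_3$ reduces the claim to
\[
c_1+2\left\{\rho_{12}\cos(\beta-\gamma)+\rho_{13}\cos\beta+\rho_{23}\cos\gamma\right\}>0\qquad\text{for all }\beta,\gamma.
\]
I would locate the minimum of the bracketed function by solving the stationarity equations $\partial_\beta=\partial_\gamma=0$ and then check that its minimal value $m$ satisfies $c_1+2m>0$ precisely when the parameter restriction \eqref{conditions} holds for some permutation $(i,j,k)$. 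This two-angle minimization, and the verification that \eqref{conditions} is the sharp condition guaranteeing positivity, is the technical heart of the argument (the content of the parameter-space lemma announced for the Supplementary Material), and I expect it to be the hardest step.

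For identifiability, I would observe that under the rescaling $\rho_{jk}\mapsto\lambda\rho_{jk}$ with $\lambda>0$ each ratio satisfies $P\mapsto\lambda P$ (and likewise $Q,S$), so that $c_1\mapsto\lambda c_1$, the denominator obeys $D\mapsto\lambda D$, and the quantity under the square root in \eqref{eq:c2} scales by $\lambda^2$ so that $c_2\mapsto\lambda c_2$; hence $t(\ub;\lambda\rhob)=\lambda c_2/(\lambda D)=t(\ub;\rhob)$. Thus $t$ is constant along the rays $\lambda>0$ and $\rhob$ is not identifiable. Choosing $\lambda=(\rho_{12}\rho_{13}\rho_{23})^{-1/3}$, which preserves the sign constraint $\rho_{12}\rho_{13}\rho_{23}>0$, normalizes the product to one and yields \eqref{identcond} without loss of generality.
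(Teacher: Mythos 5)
Your normalization computation and your scaling argument for non-identifiability are both correct, and the former is essentially the paper's own calculation seen from a different starting variable: you integrate out $u_1$ first where the paper integrates out $u_3$, and in both cases the engine is the collapse of $A^2-R^2$ to the perfect square $(d+2\rho_{23}\cos(u_2-u_3))^2$, which the paper records implicitly in passing from \eqref{eq:marginal2_proof} to \eqref{eq:marginal_proof}. Your explicit verification via $P,Q,S$ and your derivation of $c_2>0$ as a corollary of $|d|>2|\rho_{23}|$ are clean (indeed tidier than the paper's direct expansion of the radicand in \eqref{eq:radicand}). One small slip: $d+2\rho_{23}\cos v$ need not have \emph{positive} constant sign --- the paper shows $c_3+2\rho_{12}\cos(u_1-u_2)\leq 0$ in one of its cases --- but since you only use its absolute value this is harmless.

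The genuine gap is exactly where you flag it: the positivity $D(\ub)>0$ is never proved, only announced as a two-angle minimization you ``would'' carry out by solving $\partial_\beta=\partial_\gamma=0$. Every other step is conditional on it --- it licenses both uses of $\int_0^{2\pi}(a+b\cos\theta)^{-1}\,d\theta=2\pi/\sqrt{a^2-b^2}$ and makes $c_2$ real --- so as written the proof is incomplete at its load-bearing point. The stationarity system $\rho_{12}\sin(\beta-\gamma)+\rho_{13}\sin\beta=0$, $\rho_{12}\sin(\beta-\gamma)-\rho_{23}\sin\gamma=0$ is solvable but messy, and relating the resulting minimum to condition \eqref{conditions} is not routine. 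The paper dispatches all of this with a single identity: setting $\phi_i=\mathrm{sgn}(\rho_{jk})\,|\rho_{ij}\rho_{ik}/\rho_{jk}|^{1/2}$, so that $\phi_i\phi_j=\rho_{ij}$ and $\phi_i^2=\rho_{ij}\rho_{ik}/\rho_{jk}$, one gets $D(\ub)=\bigl\|\phi_1(\cos u_1,\sin u_1)'+\phi_2(\cos u_2,\sin u_2)'+\phi_3(\cos u_3,\sin u_3)'\bigr\|^2$, and condition \eqref{conditions} becomes $|\phi_i|>|\phi_j|+|\phi_k|$, whence the reverse triangle inequality gives $D>0$ immediately (sharpness being Lemma~\ref{lem_App} of the Supplement). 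If you want to complete your argument, adopting this norm-squared representation is by far the shortest route; your proposed brute-force minimization is likely to be considerably harder than the rest of the proof combined.
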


\begin{proof}
In order to prove the theorem, it suffices to see that the function (\ref{eq:tri_density}) satisfies: 
\begin{itemize}
\item[(i)] $t(\ub; \rhob) \geq 0$ for any $\ub = (u_1,u_2,u_3)'$ ; 
\item[(ii)] $\int_{[0,2\pi)^3} t(\ub; \rhob) du_1 du_2$ $du_3 =1$.
\end{itemize}
\vspace{0.2cm} 

\textit{Proof of (i):}
Without loss of generality, assume for \eqref{conditions} that $|\rho_{12}| < |\rho_{13} \rho_{23}|/(|\rho_{13}| + |\rho_{23}|)$.
We note that
\begin{align*}
\lefteqn{ c_1 + 2 \left\{ \rho_{12} \cos (u_1 - u_2) + \rho_{13} \cos (u_1 - u_3) + \rho_{23} \cos (u_2 - u_3) \right\} } \hspace{1cm} \\
 & = \left\| \phi_1
\begin{pmatrix}
\cos u_1 \\ \sin u_1
\end{pmatrix} 
+ \phi_2
\begin{pmatrix}
\cos u_2 \\ \sin u_2
\end{pmatrix}
+ \phi_3 
\begin{pmatrix}
\cos u_3 \\ \sin u_3
\end{pmatrix}
\right\|^2,
\end{align*}
where 
\begin{equation}\label{eqrelat}
{ \phi_i = \mbox{sgn} (\rho_{jk}) \left| \frac{\rho_{ij} \rho_{ik}}{\rho_{jk}} \right|^{1/2}, \quad i,j,k=1,2,3 , \quad j,k \neq i, \quad j <k. }
\end{equation}
Then
\begin{align}
\lefteqn{ \left\| \phi_1
\begin{pmatrix}
\cos u_1 \\ \sin u_1
\end{pmatrix} 
+ \phi_2
\begin{pmatrix}
\cos u_2 \\ \sin u_2
\end{pmatrix}
+ \phi_3 
\begin{pmatrix}
\cos u_3 \\ \sin u_3
\end{pmatrix}
\right\| } \hspace{1cm} \nonumber \\
 & \geq \left\| \phi_3 
\begin{pmatrix}
\cos u_3 \\ \sin u_3
\end{pmatrix}
\right\| - \left\| \phi_1
\begin{pmatrix}
\cos u_1 \\ \sin u_1
\end{pmatrix} 
+ \phi_2
\begin{pmatrix}
\cos u_2 \\ \sin u_2
\end{pmatrix}
\right\|  \nonumber \\
 & \geq \sqrt{ \frac{\rho_{13} \rho_{23}}{\rho_{12}} } - \left( \sqrt{\frac{\rho_{12} \rho_{13}}{\rho_{23}}} + \sqrt{ \frac{\rho_{12} \rho_{23}}{\rho_{13}} }  \right) > 0. \label{eq:positive}
\end{align}
Therefore $c_1 + 2 \left\{ \rho_{12} \cos (u_1 - u_2) + \rho_{13} \cos (u_1 - u_3) + \rho_{23} \cos (u_2 - u_3) \right\} > 0$ for any $(u_1,u_2,u_3)'$.

Next we show that the expression between curly brackets in $c_2$ given in (\ref{eq:c2}) is positive.
It is easy to see that
\begin{align}
 \lefteqn{\rho_{12}^2\rho_{13}^2\rho_{23}^2 \left(\left( \frac{\rho_{12} \rho_{13}}{ \rho_{23}} \right)^2 + \left( \frac{\rho_{12} \rho_{23}}{\rho_{13}} \right)^2 + \left(\frac{\rho_{13} \rho_{23}}{\rho_{12}} \right)^2 - 2 \rho_{12}^2 - 2 \rho_{13}^2 - 2 \rho_{23}^2  \right)}  \hspace{1cm} \nonumber \\
 & = (\rho_{12}^2 - \rho_{23}^2 )^2 \left\{ \rho_{13}^2 - \frac{\rho_{12}^2 \rho_{23}^2 (\rho_{12}^2 + \rho_{23}^2)}{(\rho_{12}^2 - \rho_{23}^2)^2} \right\}^2 - \frac{4 \rho_{12}^6 \rho_{23}^6}{(\rho_{12}^2 - \rho_{23}^2)^2}. \label{eq:radicand}
\end{align}
It follows from the assumption $|\rho_{12}| < |\rho_{13} \rho_{23}|
/(|\rho_{13}| + |\rho_{23}|)$ that $|\rho_{13}| > |\rho_{12} \rho_{23}| / (|\rho_{23}|-|\rho_{12}|)$.
Also,
$
\rho_{13}^2 > 
\{ |\rho_{12} \rho_{23}| / (|\rho_{23}|-|\rho_{12}| ) \}^2.
$
Then the radicand (\ref{eq:radicand}) can be evaluated as
\begin{align*}
\lefteqn{ (\rho_{12}^2 - \rho_{23}^2 )^2 \left\{ \rho_{13}^2 - \frac{\rho_{12}^2 \rho_{23}^2 (\rho_{12}^2 + \rho_{23}^2)}{(\rho_{12}^2 - \rho_{23}^2)^2} \right\}^2 - \frac{4 \rho_{12}^6 \rho_{23}^6}{(\rho_{12}^2 - \rho_{23}^2)^2} } \hspace{1cm} \\
& > (\rho_{12}^2 - \rho_{23}^2 )^2 \left\{   \left( \frac{ |\rho_{12} \rho_{23}| }{|\rho_{23}|-|\rho_{12}|} \right)^2 - \frac{\rho_{12}^2 \rho_{23}^2 (\rho_{12}^2 + \rho_{23}^2)}{(\rho_{12}^2 - \rho_{23}^2)^2} \right\}^2 - \frac{4 \rho_{12}^6 \rho_{23}^6}{(\rho_{12}^2 - \rho_{23}^2)^2} \\
& =0.
\end{align*}
Hence $c_2 >0$.
Thus the function (\ref{eq:tri_density}) satisfies the condition (i) for $|\rho_{12}| < |\rho_{13} \rho_{23}|/(|\rho_{13}| + |\rho_{23}|)$. 
Due to the symmetry of the function (\ref{eq:tri_density}), it immediately follows that the condition (i) holds for the other two cases $|\rho_{13}|<|\rho_{12} \rho_{23}|/(|\rho_{12}| + |\rho_{23}|)$ and $|\rho_{23}|<|\rho_{12} \rho_{13}|/(|\rho_{12}| + |\rho_{13}|)$. \vspace{0.3cm}

\textit{Proof of (ii):}
Using the equation (3.613.2.6) of \cite{GR2007}, it follows that
\begin{align}
			\lefteqn{ \int_0^{2\pi}  t(\ub; \rhob) d u_3 } \nonumber \hspace{0.1cm} \\
   & =  \frac{c_2}{c_1 + 2 \rho_{12} \cos (u_1-u_2)} \int_0^{2\pi} \left[ 1 + \frac{2 \{ \rho_{13} \cos (u_1 - u_3) + \rho_{23} \cos (u_2 - u_3) \}}{ c_1 + 2 \rho_{12} \cos (u_1-u_2) } \right]^{-1} d u_3 \nonumber \\
			& =  \frac{c_2}{c_1 + 2 \rho_{12} \cos (u_1-u_2)} \int_0^{2\pi} \frac{1}{1+b' \cos (u_3 - a') } d u_3 \nonumber \\
			& = \frac{c_2}{c_1 + 2 \rho_{12} \cos (u_1-u_2)} \cdot \frac{2\pi}{|1-b'^2|^{1/2}} \label{eq:marginal2_proof}, 
		\end{align}
where $b'=2 \{ \rho_{13}^2 + \rho_{23}^2 + 2 \rho_{13} \rho_{23} \cos (u_1-u_2)\}^{1/2} / \{c_1 + 2 \rho_{12} \cos (u_1 - u_2)\}$ and $a'$ satisfies $\tan \alpha' = (\rho_{13} \sin u_1 + \rho_{23} \sin u_2)/ ( \rho_{13} \cos u_1 + \rho_{23} \cos u_2 )$.
The second equality follows from the formula $\alpha \cos u_3 + \beta \sin u_3 = \sqrt{\alpha^2 + \beta^2 } \cos (u_3 - \gamma)$, where $\gamma$ satisfies $\tan\gamma = \beta/\alpha$.
In order to see $|b'|<1$ in the last equality, it suffices to see that
\begin{align*}
\lefteqn{ c_1 + 2\rho_{12} \cos (u_1-u_2) - 2 \{ \rho_{13}^2 + \rho_{23}^2 + 2 \rho_{13} \rho_{23} \cos (u_1-u_2)\}^{1/2} } \hspace{1.5cm} \\
& =
\left|  \left\| \phi_1
\begin{pmatrix}
\cos u_1 \\ \sin u_1
\end{pmatrix} 
+ \phi_2
\begin{pmatrix}
\cos u_2 \\ \sin u_2
\end{pmatrix} \right\|
-  \left\| \phi_3 
\begin{pmatrix}
\cos u_3 \\ \sin u_3
\end{pmatrix}
\right\| \right|^2 > 0,
\end{align*}
holds for any $u_1,u_2,u_3 \in [0,2 \pi)$.
If $|\rho_{12}| < |\rho_{13} \rho_{23}|/(|\rho_{13}|+|\rho_{23}|)$, this inequality is already seen in (\ref{eq:positive}).
If $|\rho_{13}| < |\rho_{12} \rho_{23}|/(|\rho_{12}| + |\rho_{23}|)$ or $|\rho_{23}| < |\rho_{12} \rho_{13}|/(|\rho_{12}| + |\rho_{13}|)$, we have
\begin{align*}
\lefteqn{ \left\| \phi_1
\begin{pmatrix}
\cos u_1 \\ \sin u_1
\end{pmatrix} 
+ \phi_2
\begin{pmatrix}
\cos u_2 \\ \sin u_2
\end{pmatrix}
\right\| - \left\| \phi_3 
\begin{pmatrix}
\cos u_3 \\ \sin u_3
\end{pmatrix}
\right\| } \hspace{2cm} \\
& \  \geq \left| \sqrt{\frac{\rho_{12} \rho_{13}}{\rho_{23}}} - \sqrt{ \frac{\rho_{12} \rho_{23}}{\rho_{13}} }  \right| - \sqrt{ \frac{\rho_{13} \rho_{23}}{\rho_{12}} } >0.
\end{align*}
The last inequality follows from 
\begin{align*}
\lefteqn{ \left| \sqrt{\frac{\rho_{12} \rho_{13}}{\rho_{23}}} - \sqrt{ \frac{\rho_{12} \rho_{23}}{\rho_{13}} }  \right|^2 - \frac{\rho_{13} \rho_{23}}{\rho_{12}} } \hspace{1.5cm} \\
 & = 
\frac{1}{\rho_{12} \rho_{13} \rho_{23}} \left[ \rho_{12}^2 \{ \rho_{13} - \rho_{23} \}^2 - \rho_{13}^2 \rho_{23}^2 \right] \\
 & \geq 
\frac{1}{\rho_{12} \rho_{13} \rho_{23}} \left\{ \rho_{12}^2 (|\rho_{13}| - |\rho_{23}| )^2 - \rho_{13}^2 \rho_{23}^2 \right\} \\
 & > \frac{1}{\rho_{12} \rho_{13} \rho_{23}} \left\{ \left( \frac{\rho_{13} \rho_{23}}{|\rho_{13}|-|\rho_{23}|} \right)^2 (|\rho_{13}| - |\rho_{23}|)^2 - \rho_{13}^2 \rho_{23}^2 \right\} = 0.
\end{align*}
Thus we have
$
c_1 + 2\rho_{12} \cos (u_1-u_2) - 2 \{ \rho_{13}^2 + \rho_{23}^2 + 2 \rho_{13} \rho_{23} \cos (u_1-u_2)\}^{1/2} > 0,
$
which implies $|b'|<1$.

(\ref{eq:marginal2_proof}) can be simplified as
\begin{align}
\int_0^{2\pi}  t(\ub ; \rhob) d u_3  & = \frac{2 \pi \cdot c_2}{|\rho_{12} \rho_{23} / \rho_{13} +  \rho_{12} \rho_{13} / \rho_{23} -\rho_{13} \rho_{23}/ \rho_{12} + 2 \rho_{12} \cos (u_1-u_2)|} \nonumber \\
& \equiv t_2(u_1, u_2; \phi_{12}), \label{eq:marginal_proof}
\end{align}
where $\phi_{12}$ is defined in \eqref{thephi}.
Here we show that the denominator of (\ref{eq:marginal_proof}) is positive for any $(u_1,u_2)'$.
For convenience, write $c_3= \rho_{12} \rho_{23} / \rho_{13} +  \rho_{12} \rho_{13} / \rho_{23} - \rho_{13} \rho_{23} / \rho_{12}$.
Let $\rho_{12}>0$.
For the case $|\rho_{12}| < |\rho_{13} \rho_{23}| / (|\rho_{13}| + |\rho_{23}|) $, it can be seen that
\begin{align*}
 c_3 + 2 \rho_{12} \cos (u_1-u_2) & \leq c_3 + 2 \rho_{12} = \frac{\rho_{12}^2 (\rho_{13}+\rho_{23})^2 - \rho_{13}^2 \rho_{23}^2}{\rho_{12} \rho_{13} \rho_{23}} \\
 & < \frac{1}{\rho_{12} \rho_{13} \rho_{23}} \left[ \left( \frac{|\rho_{13}| |\rho_{23}|}{|\rho_{13}| + |\rho_{23}|} \right)^2 (\rho_{13} + \rho_{23})^2 - \rho_{13}^2 \rho_{23}^2 \right] \leq 0.
\end{align*}
The case $|\rho_{13}| < |\rho_{12} \rho_{23}| / (|\rho_{12}| + |\rho_{23}|) $ or $|\rho_{23}| < |\rho_{12} \rho_{13}| / (|\rho_{12}| + |\rho_{13}|) $ can be proved in a similar manner.

Similarly, when $\rho_{12}<0$, it can be seen that the denominator of (\ref{eq:marginal_proof}) remains positive for any $(u_1, u_2)'$.

Note that the discussion above implies $|2\rho_{12} / c_3| <1$.
Then it follows from the equation (3.613.2.6) of \cite{GR2007} that, for the parameters satisfying $c_3+2\rho_{12} \cos (u_1-u_2) > 0$,
\begin{align}
\begin{split} \label{eq:uniform}
\int_{0}^{2\pi}  t_2(u_1, u_2; \phi_{12}) du_2 
& = \int_0^{2 \pi} \frac{2 \pi c_2}{ c_3 + 2 \rho_{12} \cos (u_1-u_2) } du_2 \\
& = \int_0^{2 \pi} \frac{2\pi c_2}{c_3 \{ 1 + 2 \, ( \rho_{12}/c_3) \cos (u_1-u_2) \} } du_2 \\
& = \frac{(2\pi)^2 c_2}{c_3 |1 - (2 \rho_{12}/c_3)^2 |^{1/2}} = \frac{(2\pi)^2 c_2}{|c_3^2 - 4 \rho_{12}^2|^{1/2}} = \frac{1}{2\pi} . 
\end{split}
\end{align}
Similarly, we can also show $\int_0^{2\pi} t_2(u_1, u_2; \phi_{12}) du_2 = 1/(2\pi)$ for the parameters which satisfy $c_3+2\rho_{12} \cos (u_1-u_2) < 0$.
Finally, 
$$
\int_{[0,2\pi)^3} t(\ub; \rhob) du_1 du_2 du_3 = \int_0^{2\pi} \frac{1}{2\pi} du_1 =1.
$$
Thus the proposed density (\ref{eq:tri_density}) satisfies the condition (ii) as required.

Since the function (\ref{eq:tri_density}) satisfies both conditions (i) and (ii), this function is a probability density function on $[0,2\pi)^3$.

We can show that the model is not identifiable by noting that  $t(\ub; \rhob)=t(\ub; c\rhob)$ for any constant $c\in\R^+$ and the condition \eqref{identcond} takes that into account.
\end{proof}

{We will refer to this distribution, which has the density of  $(U_1,U_2,U_3)'$ given by (\ref{eq:tri_density}), as the     trivariate wrapped Cauchy copula (TWCC or TWC copula).
The distribution (\ref{eq:tri_density}) is also denoted by TWCC$(\rhob)$ in order to specify its parameters.}
In {TWCC}, different functions of parameters $\rho_{12}, \rho_{13}$ and $\rho_{23}$ control  the dependence between the $U_i$'s and the location of the modes, {see Theorem~\ref{thm:modes} in Section~\ref{sec:Props} and Theorem~\ref{thm:correlation} from the Supplementary Material.} It is also straightforward to incorporate both positive and negative associations by replacing $u_i$ with $q_i u_i$ {$(q_i=1,-1)$} and it is possible to extend the distribution to include location parameters by replacing $u_i-u_j$ with $u_i - u_j-\mu_{ij}$ in (\ref{eq:tri_density}) $(0 \leq \mu_{ij} < 2\pi)$.
{Although the parametrization used in the expression (\ref{eq:tri_density}) for the density is natural in the sense that each parameter $\rho_{ij}$ is multiplied by a function of $u_i - u_j$, this expression does not allow for {the independence} between $U_i$ and $U_j$.
However, an alternative parametrization is available that accommodates {the independence} between the variables; see equation (\ref{eq:tri_density_c}) below.}
A very appealing aspect from both a tractability and computational viewpoint is the closed form of the density which does not include any integrals or infinite sums, unlike most existing distributions on the three-dimensional torus.

Note that condition~\eqref{conditions} can be re-expressed under simpler form as $ \rho_{ij}^2 \rho_{i k}^2 >  (|\rho_{ij}| + |\rho_{i k}|)$ under \eqref{identcond}.
The following result formally establishes the identifiability of the parameters in our model.
\begin{prop} \label{prop:identifiability}
The {TWCC} density (\ref{eq:tri_density}) has identifiable parameters.
\end{prop}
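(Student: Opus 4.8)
The plan is to show that the parameter-to-density map is injective on the constrained space defined by \eqref{identcond}. Concretely, I would suppose that two parameter vectors $\rhob=(\rho_{12},\rho_{13},\rho_{23})'$ and $\rhob'=(\rho_{12}',\rho_{13}',\rho_{23}')'$, each satisfying the constraint $\rho_{12}\rho_{13}\rho_{23}=1$ together with condition \eqref{conditions}, produce the same density, i.e.\ $t(\ub;\rhob)=t(\ub;\rhob')$ for all $\ub\in[0,2\pi)^3$. Because both densities are continuous and have strictly positive denominators (as established in the proof of Theorem~\ref{thm:density}), this equality can be rewritten, after taking reciprocals, as the identity of the two affine-trigonometric denominators scaled by $1/c_2$ and $1/c_2'$ respectively.

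First I would rearrange this identity into the form
\begin{equation*}
A + B_{12}\cos(u_1-u_2) + B_{13}\cos(u_1-u_3) + B_{23}\cos(u_2-u_3) = 0,
\end{equation*}
valid for every $\ub$, where $A = c_1/c_2 - c_1'/c_2'$ and $B_{ij} = 2\rho_{ij}/c_2 - 2\rho_{ij}'/c_2'$. The key structural fact is that the four functions $1,\cos(u_1-u_2),\cos(u_1-u_3),\cos(u_2-u_3)$ are linearly independent on the torus; indeed they are mutually orthogonal with respect to the inner product $\langle f,g\rangle = \int_{[0,2\pi)^3} f g \, d\ub$, since each cosine integrates to zero while the cross-integrals vanish upon integrating out the variable absent from one factor. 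Consequently all four coefficients must vanish, which gives $A=0$ and $\rho_{ij}/c_2=\rho_{ij}'/c_2'$ for each pair $(i,j)$.

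The last relations say that $\rho_{ij}=\mu\,\rho_{ij}'$ for all three pairs, with the single common factor $\mu=c_2/c_2'>0$. To pin down $\mu$ I would invoke the identifiability constraint \eqref{identcond}: applying it to both parameter vectors gives $1=\rho_{12}\rho_{13}\rho_{23}=\mu^3\rho_{12}'\rho_{13}'\rho_{23}'=\mu^3$, hence $\mu=1$ and therefore $\rhob=\rhob'$. The condition $A=0$ is then automatically consistent, since $c_1$ and $c_2$ in \eqref{eq:c1} and \eqref{eq:c2} are each positively homogeneous of degree one in $\rhob$, so proportional parameters yield proportional constants and $c_1/c_2=c_1'/c_2'$.

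I expect the only delicate point to be the linear-independence and coefficient-matching step, as it is what legitimately reduces the functional identity to finitely many scalar equations; everything else is the bookkeeping of the scaling invariance $t(\ub;\rhob)=t(\ub;c\rhob)$ already noted after Theorem~\ref{thm:density}, now neutralized by the normalization \eqref{identcond}.
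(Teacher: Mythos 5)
Your proof is correct and follows essentially the same route as the paper's: reduce the equality of the two densities to a linear identity among $1$, $\cos(u_1-u_2)$, $\cos(u_1-u_3)$, $\cos(u_2-u_3)$, conclude that the two parameter vectors are proportional, and then use the cubic constraint \eqref{identcond} to force the proportionality factor to equal $1$. The only difference is in how the coefficient matching is justified --- you use $L^2$-orthogonality of the four functions on the torus, whereas the paper evaluates the identity at four specific points such as $(\pi/2,\pi/2,0)'$ and solves the resulting small linear system; both are valid.
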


\begin{proof}
Let $\rhob = (\rho_{12},\rho_{13},\rho_{23})'$ and $\Tilde{\rhob} = (\tilde{\rho}_{12},\tilde{\rho}_{13},\tilde{\rho}_{23})'$ be two different points in the constrained parameter space of the proposed family (\ref{eq:tri_density}) with $\rho_{12}\rho_{13}\rho_{23}= \tilde{\rho}_{12}\tilde{\rho}_{13}\tilde{\rho}_{23} = \beta$.
Assume that $\rhob$ and $\tilde{\rhob}$ represent the same distribution, namely, $t(\ub; \rhob) = t(\ub;\tilde{\rhob})$ for any $\ub = (u_1,u_2,u_3)' \in [0,2\pi)^3$, where $t$ is the density (\ref{eq:tri_density}).
This implies that there exist real-valued constants $C$ and $D$ such that, for any $\ub \in [0,2\pi)^3$,
\begin{align*}
	\lefteqn{ \rho_{12} \cos (u_1 - u_2) + \rho_{13} \cos (u_1 - u_3) + \rho_{23} \cos (u_2 - u_3) } \hspace{1cm} \\
	& = D+C \left\{ \tilde{\rho}_{12} \cos (u_1 - u_2) + \tilde{\rho}_{13} \cos (u_1 - u_3) + \tilde{\rho}_{23} \cos (u_2 - u_3) \right\}.
\end{align*}
Choosing $(u_1,u_2,u_3)'$ equal to $(\pi/2,\pi/2,0)'$, $(\pi/2,0,\pi/2)'$, and $(0,\pi/2,\pi/2)'$, respectively, yields, $\rho_{ij}=D+C\tilde{\rho_{ij}}$ for each couple $(i,j)\in\{(1,2), (1,3), (2,3)\}$. Moreover, $(u_1,u_2,u_3)'=(0,0,0)'$ gives $\rho_{12}+\rho_{13}+\rho_{23}=D+C(\tilde{\rho_{12}}+\tilde{\rho_{13}}+\tilde{\rho_{23}})$. Summing the first three equalities and subtracting the last entails $0=2D$ and hence $D=0$, leading to
\begin{align*}
	\lefteqn{ \rho_{12} \cos (u_1 - u_2) + \rho_{13} \cos (u_1 - u_3) + \rho_{23} \cos (u_2 - u_3) } \hspace{1cm} \\
	& = C \left\{ \tilde{\rho}_{12} \cos (u_1 - u_2) + \tilde{\rho}_{13} \cos (u_1 - u_3) + \tilde{\rho}_{23} \cos (u_2 - u_3) \right\}.
\end{align*}
Then it follows that, for any $(u_1,u_2,u_3)$,
$$
(\rho_{12} - C \tilde{\rho}_{12}) \cos (u_1-u_2) + (\rho_{13} - C \tilde{\rho}_{13}) \cos (u_1-u_3) + (\rho_{23} - C \tilde{\rho}_{23}) \cos (u_2-u_3) = 0.
$$
Thus
$$
\rho_{12} = C \tilde{\rho}_{12}, \ \rho_{13} = C \tilde{\rho}_{13}, \ \rho_{23} = C \tilde{\rho}_{23}. \label{eq:rho}
$$
Using this equation and the assumption $\rho_{12} \rho_{13} \rho_{23} = \tilde{\rho}_{12} \tilde{\rho}_{13} \tilde{\rho}_{23} = \beta$, we have
$$
\beta = \rho_{12} \rho_{13} \rho_{23} = C^3 \tilde{\rho}_{12} \tilde{\rho}_{13} \tilde{\rho}_{23} = C^3 \beta. 
$$
Thus we have $C=1$.
This implies $\rhob= \tilde{\rhob}$, which is contradictory to the assumption $\rhob$ and $\tilde{\rhob}$ are two different points. 
\end{proof}


Next, we show that our model given by {TWCC} is indeed a copula for trivariate circular data by establishing in Theorem~\ref{thm:marginals} that its univariate marginals are circular uniform distributions.
Prior to this, we will show that the bivariate marginals of our new model are bivariate wrapped Cauchy-type {copulas} as they are of the form~\eqref{eq:wc_wj}.
\begin{thm} \label{thm:marginals}
Let a trivariate circular random vector $(U_1,U_2,U_3)'$ follow the {TWCC} with density  (\ref{eq:tri_density}).
Then the following hold for the marginal distributions of $(U_1,U_2,U_3)'$:
\begin{enumerate}
	\item[(i)] The marginal distribution of $(U_i, U_j)'$ is of the form~\eqref{eq:wc_wj} with density
\begin{equation}
t_2(u_i,u_j; \phi_{ij}) = \frac{1}{4\pi^2} \frac{ | 1-\phi_{ij}^2 | }{1+\phi_{ij}^2 - 2 \phi_{ij} \cos (u_i-u_j)}, \quad 0 \leq u_i,u_j < 2\pi, \label{eq:marginal_density}
\end{equation}
where
\begin{equation}\label{thephi}
{\phi_{ij} =  \frac{1}{ 2 \rho_{ij} } \left\{ \frac{\rho_{i k} \rho_{j k}}{\rho_{ij}} - \frac{\rho_{ij} \rho_{i k}}{\rho_{j k}} - \frac{\rho_{ij} \rho_{j k} }{ \rho_{i k} } - (2\pi)^3 c_2 \right\} ,}
\end{equation} and $c_2$ is as in (\ref{eq:c2}).

\item[(ii)] The marginal distribution of $U_i$ is the uniform distribution on the circle with density 
$$
t_1(u_i)=\frac{1}{2\pi}, \quad 0 \leq u_i < 2\pi.
$$
Therefore the distribution of $(U_1, U_2, U_3)'$ is a copula for trivariate circular data.
\end{enumerate} 
\end{thm}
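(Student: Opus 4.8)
The plan is to derive both marginal distributions by integrating the trivariate density~\eqref{eq:tri_density}, leaning on the fact that the required integrals were essentially computed already while establishing condition~(ii) of Theorem~\ref{thm:density}. For part~(i), the full symmetry of~\eqref{eq:tri_density} under permutations of $(1,2,3)$ means it suffices to obtain the marginal of $(U_1,U_2)'$ by integrating out $u_3$; this is exactly~\eqref{eq:marginal_proof}, which gives
\[
\int_0^{2\pi} t(\ub;\rhob)\,du_3 = \frac{2\pi c_2}{\left| c_3 + 2\rho_{12}\cos(u_1-u_2)\right|}, \qquad c_3 = \frac{\rho_{12}\rho_{23}}{\rho_{13}} + \frac{\rho_{12}\rho_{13}}{\rho_{23}} - \frac{\rho_{13}\rho_{23}}{\rho_{12}},
\]
with the denominator already known to be nonvanishing (hence of constant sign) in $u_1-u_2$. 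The entire content of part~(i) is therefore to recognise this rational function of $\cos(u_1-u_2)$ as the wrapped Cauchy copula density~\eqref{eq:marginal_density} and to read off the value of $\phi_{12}$.

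To carry this out I would set the displayed expression equal to~\eqref{eq:marginal_density} and compare the two as functions of $u_1-u_2$. Matching the coefficient of $\cos(u_1-u_2)$ against the constant term cancels the normalising factors and leaves the single relation $(1+\phi_{12}^2)/(-2\phi_{12}) = c_3/(2\rho_{12})$, i.e.\ $\phi_{12}$ must satisfy the quadratic $\rho_{12}\phi_{12}^2 + c_3\,\phi_{12} + \rho_{12} = 0$, whose roots are $\phi_{12} = \{-c_3 \pm (c_3^2 - 4\rho_{12}^2)^{1/2}\}/(2\rho_{12})$. Two ingredients already proved in Theorem~\ref{thm:density} then close the argument: the non-vanishing of the denominator forces $|c_3| > 2|\rho_{12}|$, so $c_3^2 - 4\rho_{12}^2 > 0$, while the normalisation computation~\eqref{eq:uniform} established $|c_3^2 - 4\rho_{12}^2|^{1/2} = (2\pi)^3 c_2$; substituting $(c_3^2 - 4\rho_{12}^2)^{1/2} = (2\pi)^3 c_2$ into the minus-sign root reproduces formula~\eqref{thephi}. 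I would then observe that the two roots are reciprocals of one another (their product is $1$ by Vieta) and that the wrapped Cauchy copula density is invariant under $\phi_{12}\mapsto 1/\phi_{12}$, so either root yields the same bivariate marginal and~\eqref{thephi} merely fixes one representative. A final line checks that the overall multiplicative constant also matches, which is automatic once $\phi_{12}$ solves the quadratic and the discriminant identity above is used.

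For part~(ii), I would integrate the bivariate marginal over its second argument. This too is already available: equation~\eqref{eq:uniform}, together with the remark immediately following it that covers the opposite sign of $c_3 + 2\rho_{12}\cos(u_1-u_2)$, shows $\int_0^{2\pi} t_2(u_1,u_2;\phi_{12})\,du_2 = 1/(2\pi)$. Hence $U_1$, and by the permutation symmetry each $U_i$, has the circular uniform density $1/(2\pi)$. Since every univariate marginal is circular uniform, $(U_1,U_2,U_3)'$ is a copula for trivariate circular data.

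The only genuinely delicate point is the identification of $\phi_{12}$ in part~(i): one must match the coefficients correctly, invoke the discriminant identity $(c_3^2 - 4\rho_{12}^2)^{1/2} = (2\pi)^3 c_2$ supplied by the normalisation, and keep track of the signs of $c_3$ and $\rho_{12}$ (hence of $\phi_{12}$). The reciprocal-root symmetry is precisely what renders the apparent choice of sign immaterial, so that beyond careful bookkeeping no estimate new to the proof of Theorem~\ref{thm:density} is required.
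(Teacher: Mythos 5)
Your proposal is correct and follows essentially the same route as the paper: it reads the bivariate marginal off equation~\eqref{eq:marginal_proof}, identifies $\phi_{12}$ via the quadratic $-(1+\phi_{12}^2)/(2\phi_{12})=c_3/(2\rho_{12})$ together with the discriminant identity $\{c_3^2-4\rho_{12}^2\}^{1/2}=(2\pi)^3 c_2$ and the reciprocal-root invariance $t_2(\cdot;\phi_{12})=t_2(\cdot;1/\phi_{12})$, and obtains part~(ii) directly from~\eqref{eq:uniform}. No gaps.
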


\begin{proof}
Without loss of generality, we prove the case $(i,j)=(1,2)$.
The other cases can be shown in a similar manner.
	\begin{enumerate}
		\item[(i)] 
		It follows from the equation (\ref{eq:marginal_proof}) that the marginal density of $(U_1,U_2)'$ can be expressed as
		\begin{align*}
			 t_2(u_1, u_2; \phi_{12}) & = \int_0^{2\pi}  t(\ub; \rhob) d u_3 \nonumber \\
			& = \frac{2 \pi \cdot c_2}{|\rho_{12} \rho_{23} / \rho_{13} +  \rho_{12} \rho_{13} / \rho_{23} -\rho_{13} \rho_{23}/ \rho_{12} + 2 \rho_{12} \cos (u_1-u_2)|}.
		\end{align*}
		This marginal density can be rewritten as
  $$
   t_2(u_1, u_2; \phi_{12}) = \frac{1}{(2\pi)^2} \frac{|1-\phi_{12}^2|}{1+\phi_{12}^2 - 2 \phi_{12} \cos (u_1-u_2)}.
  $$
		The value of $\phi_{12}$ can be obtained as a solution to the equations $1+\phi_{12}^2=C \cdot ( \rho_{12} \rho_{23} / \rho_{13} + \rho_{12} \rho_{13} / \rho_{23}- \rho_{13} \rho_{23}/ \rho_{12}  )$ and $-2\phi_{12} = 2 C \rho_{12} $ for some $C \neq 0$.
  Specifically, since these equations imply 
  \begin{equation}
   - \frac{1+\phi_{12}^2}{2\phi_{12}} = \frac{ c_3 }{ 2\rho_{12}}, \label{eq:delta_12}
  \end{equation}
  where $c_3=\rho_{12} \rho_{23} / \rho_{13} + \rho_{12} \rho_{13} / \rho_{23}- \rho_{13} \rho_{23}/ \rho_{12}$, the solutions of this equation, say $\tilde{\phi}_{12}$, are given by
  \begin{align*}
  \tilde{\phi}_{12} & = \frac{-c_3 \pm \{ c_3^2 - 4 \rho_{12}^2 \}^{1/2} }{2 \rho_{12}} \\
  & = \frac{ \rho_{13} \rho_{23}/ \rho_{12} - \rho_{12} \rho_{23} / \rho_{13} - \rho_{12} \rho_{13} / \rho_{23} \pm (2\pi)^3 c_2 }{2 \rho_{12}}.
  \end{align*}
  Denote these solutions by $\tilde{\phi}_{12}^{+} = \{-c_3 + \{ c_3^2 - 4 \rho_{12}^2 \}^{1/2} \}/(2 \rho_{12}) $ and $\tilde{\phi}_{12}^{-} = \{-c_3 - \{ c_3^2 - 4 \rho_{12}^2 \}^{1/2} \}/(2 \rho_{12})$.
  Then by definition of $\phi_{12}$ it is straightforward to see $\tilde{\phi}_{12}^{-} = \phi_{12}$,  $\tilde{\phi}_{12}^+ \tilde{\phi}_{12}^-=1$ and $\tilde{\phi}_{12}^{+}=1/\phi_{12}$.
  Note that $ t_2(u_1, u_2; \phi_{12})$ in equation (\ref{eq:marginal_density}) with the parameter $\phi_{12} (\neq 0)$ satisfies
  \begin{equation}
  t_2(u_1,u_2 ; \phi_{12} ) = t_2(u_1,u_2 ; 1/\phi_{12} ), \quad  0 \leq u_1,u_2 < 2\pi.
  \label{eq:cauchy_equiv}
  \end{equation}
  This expression implies that the two solutions of (\ref{eq:delta_12}), i.e., $\tilde{\phi}_{12}^{+}$ and $\tilde{\phi}_{12}^{-}$, correspond to the parameters of the same distribution.
  Using the parameters derived from the solution $\tilde{\phi}_{12}^-$, we have
  $$
 t_2(u_1, u_2; \phi_{12}) \propto \left\{ 1+\phi_{12}^2 - 2 \phi_{12} \cos (u_1-u_2) \right\}^{-1}.
  $$
  Since the functional form of this marginal density is essentially the same as that of (\ref{eq:wc_wj}), it follows that the marginal density is given by (\ref{eq:marginal_density}) with $(i,j)=(1,2)$.
		
		\item[(ii)] It immediately follows from equation (\ref{eq:uniform}) that the marginal distribution of $U_1$ is the uniform distribution on the circle.
	\end{enumerate}
\end{proof}

{Note that the $\phi_{ij}$ are  invariant under $\rho_{ij}$ replaced by $c\rho_{ij}$ for some constant $c\in\R^+$, implying that they do not depend on any identifiability condition on $\rho_{ij}$.} We draw the reader's attention to the fact that the constant $c_1$ has to be of the form (\ref{eq:c1}), which guarantees that the bivariate marginal distribution belongs to the bivariate wrapped Cauchy-type family. See  equality (\ref{eq:marginal_proof}).

\section{Properties of the TWC copula}\label{sec:Props}

We will investigate distinct properties of our new copula distribution {TWCC($\boldsymbol{\rho}$)}. In order to do so, it is often convenient to express its density using complex variables.
Let  $(U_1,U_2,U_3)'$ have the {TWCC} density (\ref{eq:tri_density}), and assume that $(Z_1,Z_2,Z_3)' = (e^{{\rm i} U_1}, e^{{\rm i} U_2}, e^{{\rm i} U_3})'$.
Then some simple algebra yields that the {TWCC} density  can be expressed as
\begin{equation}
tc(\zb; \phib) = \frac{1}{(2\pi)^3} \frac{ \{ \phi_1^4 + \phi_2^4 + \phi_3^4 - 2 \phi_1^2 \phi_2^2 - 2 \phi_1^2 \phi_3^2 - 2\phi_2^2 \phi_3^2 \}^{1/2} }{|\phi_1 z_1 + \phi_2 z_2 + \phi_3 z_3|^2}, \label{eq:tri_density_c}
\end{equation}
where $\zb = (z_1,z_2,z_3)' \in \Omega = \{ z \in \mathbb{C} \, ; \, |z|=1 \}$ is the unit circle in the complex plane and $\phib = (\phi_1, \phi_2, \phi_3)'$ and $\phi_i$ is defined as in (\ref{eqrelat}).
We call this the complex {TWCC}. 
Note that, in terms of the original parameters, we have $\rho_{12}=\phi_1\phi_2$, $\rho_{13}=\phi_1\phi_3$ and $\rho_{23}=\phi_2\phi_3$.
The inequality \eqref{conditions} on the parameters in Theorem \ref{thm:density} then simplifies to $|\phi_i| > |\phi_j| + |\phi_{k}|$ for $(i,j,k)$ a certain permutation of $(1,2,3)$.
In fact, this is equivalent to the condition that the denominator of (\ref{eq:tri_density_c}) satisfies $\phi_1 z_1 + \phi_2 z_2 + \phi_3 z_3 \neq 0$ for all $(z_1,z_2,z_3)'$, see Lemma~\ref{lem_App} in \ref{sec:technical lemmas} for a statement and proof.
Equivalently, the condition on the parameter $|\rho_{j k}| < |\rho_{ij} \rho_{i k}| / ( |\rho_{ij}| + |\rho_{i k}|)$ for some $(i,j,k)$ is necessary to guarantee the boundedness of the {TWCC($\boldsymbol{\rho}$)} for all $(u_1,u_2,u_3)'$. 
The identifiability constraint~\eqref{identcond} turns into $(\phi_1\phi_2\phi_3)^2=1.$
Note that the parameter space of the complex TWCC must satisfy $|\phi_i| > |\phi_j| + |\phi_{k}|$.

If the parameter space is extended to include $(\phi_i \phi_j)^2 = 1$ and $\phi_k = 0$, the complex TWCC includes the case where $Z_k$ is independent of both $Z_i$ and $Z_j$.
The additional assumptions $\phi_i = 1 + \varepsilon$ and $\phi_j = -(1 + \varepsilon)^{-1}$ for small $\varepsilon > 0$ imply strong positive dependence between $Z_i$ and $Z_j$.
As $\varepsilon \rightarrow \infty$, the complex TWCC converges to the circular equivalent of the comonotonic copula of $(Z_i, Z_j)$, that is, $P(Z_i = Z_j) = 1$.

The parameter space can also be extended to include $\phi_i = 1$ and $\phi_j = \phi_k = 0$, which corresponds to the independence copula.

In the next sections, we will study and discuss conditional distributions, random variate generation and modality; {in Sections \ref{Sup_trigmom}--\ref{sec:correlation_coef} of the Supplementary Material we further investigate trigonometric moments, correlation coefficients, the shapes of our distribution, and limiting cases.}

\subsection{Conditional distributions and regression}

In this subsection we consider the conditional distributions of the  model {TWCC($\boldsymbol{\rho}$)}. As we will show, just like the {bivariate and univariate} marginal distributions, all conditional distributions belong to well-known families, namely to wrapped Cauchy distributions on the circle and to the Kato--Pewsey distribution on the torus. This   property  also  highlights  the versatility of the {TWCC}. 

\begin{thm} \label{thm:conditionals}
	Let $(U_1,U_2,U_3)'$ be a trivariate random vector having the distribution {TWCC($\boldsymbol{\rho}$)}.
	Then the conditional distributions of $(U_1,U_2,U_3)'$ are given below.
	\begin{enumerate}
    		\item[(i)] The conditional distribution of $(U_i,U_j)'$ given $U_{k}=u_{k}$ is a reparametrized version of the distribution of \cite{KP15} with density
		{\begin{align}
		\lefteqn{ t_{2|1}(u_i,u_j|u_{k}; \rhob) } \hspace{0.7cm} \nonumber \\
		 = \ &  2 \pi c_2 \, \Bigl[ c_1 + 2 \left\{ \rho_{ij} \cos (u_i - u_j) + \rho_{i k} \cos (u_i - u_{k}) + \rho_{j k} \cos (u_j - u_{k}) \right\} \Bigr]^{-1} \label{eq:conditional_1st} \\
		 	= \ & 2\pi c_2 \, \Bigl[ c_1 + 2 \{  \rho_{i k} \cos (u_i - u_{k}) + \rho_{j k} \cos (u_j - u_{k}) \nonumber \\
		 & + \rho_{ij} \cos (u_i - u_{k}) \cos (u_j - u_{k}) + \rho_{ij} \sin (u_i - u_{k}) \sin (u_j - u_{k}) \} \Bigr]^{-1}, \label{eq:conditional_2nd}\\
		 & \hspace{9cm} 0 \leq u_i,u_j < 2\pi . \nonumber
		\end{align}}
        
	\item[(ii)] The conditional distribution of $U_i$ given $U_j=u_j$ is the wrapped Cauchy distribution with density
	\begin{equation}
		{ t_{1|1}(u_i | u_j; \phi_{ij}) = \frac{1}{2\pi} \frac{ | 1-\phi_{ij}^2 | }{1+\phi_{ij}^2 - 2 \phi_{ij} \cos (u_i- u_j ) }, \quad 0 \leq u_i < 2\pi, }\label{eq:conditional_density1}
	\end{equation}
	where $\phi_{ij}$ is as in  \eqref{thephi}.

    	\item[(iii)] 
	The conditional distribution of $U_i$ given $(U_j,U_{k})' =(u_j,u_{k})'$ is the wrapped Cauchy distribution with density
	\begin{equation}
	t_{1|2}(u_i | u_j,u_{k}; \eta_{i|jk} , \delta_{i|jk}) = \frac{1}{2 \pi} \frac{ | 1-\delta_{i|j k}^2 | }{1+\delta_{i|j k}^2 - 2 \delta_{i|j k} \cos (u_i- \eta_{i|j k})}, \label{eq:conditional_density2}
	\end{equation}
    where $0 \leq u_i < 2\pi,$ and, for $\phi_{i|j k} = - \rho_{j k} (\rho_{ik}^{-1} e^{{\rm i} u_j} + \rho_{i j}^{-1} e^{{\rm i} u_{k}}) $, 
    \begin{align}
        & \eta_{i|j k} =  \arg (\phi_{i|j k}) \quad \text{and} \quad \label{eq: params_conditional} \delta_{i|j k} =|\phi_{i | j k}|.
    \end{align}
	
	\end{enumerate}
\end{thm}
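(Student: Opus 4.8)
The plan is to obtain each conditional density as a ratio of the joint density to an already-identified marginal, so that most of the work is done by Theorem~\ref{thm:marginals}. For part (i), I would write $t_{2|1}(u_i,u_j\mid u_k;\rhob) = t(\ub;\rhob)/t_1(u_k)$. Since $t_1(u_k)=1/(2\pi)$ by Theorem~\ref{thm:marginals}(ii), this equals $2\pi\,t(\ub;\rhob)$, which is precisely (\ref{eq:conditional_1st}). The alternative form (\ref{eq:conditional_2nd}) then follows by applying the angle-subtraction identity $\cos(u_i-u_j)=\cos(u_i-u_k)\cos(u_j-u_k)+\sin(u_i-u_k)\sin(u_j-u_k)$ to the single term $\rho_{ij}\cos(u_i-u_j)$, after which one recognises the result as a reparametrised Kato--Pewsey density by matching it against the density of \cite{KP15}.

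For part (ii), I would again use the ratio $t_{1|1}(u_i\mid u_j;\phi_{ij}) = t_2(u_i,u_j;\phi_{ij})/t_1(u_j)$, where the numerator is the bivariate marginal (\ref{eq:marginal_density}) from Theorem~\ref{thm:marginals}(i) and $t_1(u_j)=1/(2\pi)$. Multiplying by $2\pi$ turns the prefactor $1/(4\pi^2)$ into $1/(2\pi)$ and yields (\ref{eq:conditional_density1}) directly, with no further computation.

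The substantive step is part (iii), for which I would switch to the complex form (\ref{eq:tri_density_c}). Writing $z_\ell=e^{{\rm i}u_\ell}$ and fixing $z_j,z_k$, the conditional density of $U_i$ is proportional, as a function of $u_i$ alone, to $|\phi_i z_i + w|^{-2}$ with $w:=\phi_j z_j+\phi_k z_k$, since both the numerator of (\ref{eq:tri_density_c}) and the conditioning marginal $t_2(u_j,u_k;\phi_{jk})$ are free of $u_i$. Expanding $|\phi_i z_i+w|^2=\phi_i^2+|w|^2+2\phi_i|w|\cos(u_i-\arg w)$ via $|z_i|=1$ exhibits the wrapped Cauchy shape, and because the density must integrate to one over $u_i\in[0,2\pi)$, its normalising constant is forced to be that of a wrapped Cauchy, giving (\ref{eq:conditional_density2}) once the parameters are identified. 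To make the identification I would use $\rho_{ij}=\phi_i\phi_j$, $\rho_{ik}=\phi_i\phi_k$, $\rho_{jk}=\phi_j\phi_k$ to rewrite $\phi_{i|jk}=-\rho_{jk}(\rho_{ik}^{-1}e^{{\rm i}u_j}+\rho_{ij}^{-1}e^{{\rm i}u_k})=-w/\phi_i$, so that $|\phi_{i|jk}|=|w|/|\phi_i|$ and $\arg\phi_{i|jk}=\arg(-w/\phi_i)$ match $\delta_{i|jk}$ and $\eta_{i|jk}$ respectively.

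The main obstacle I anticipate is the sign bookkeeping in part (iii): converting the positive coefficient $2\phi_i|w|$ in the expansion into the standard $-2\delta_{i|jk}\cos(u_i-\eta_{i|jk})$ requires tracking the sign of $\phi_i$, since $\arg(-w/\phi_i)$ absorbs a shift of $\pi$ precisely when $\phi_i>0$. I would therefore treat the cases $\phi_i>0$ and $\phi_i<0$ separately and check that each produces the same pair $(\eta_{i|jk},\delta_{i|jk})$; everything else reduces to the ratio computation and the two marginal identities already established in Theorem~\ref{thm:marginals}.
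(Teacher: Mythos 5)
Your proposal is correct and follows essentially the same route as the paper: parts (i) and (ii) are the identical ratio computations against the circular uniform marginal (with the same cosine angle-subtraction identity for the second form in (i)), and part (iii) uses the same reduction of the complex density \eqref{eq:tri_density_c} to something proportional to $|\phi_i z_i + w|^{-2}$ with the same identification $\phi_{i|jk}=-w/\phi_i$. The only difference is cosmetic: the paper sidesteps your sign bookkeeping in (iii) by matching directly against the complex form of the wrapped Cauchy density $\frac{1}{2\pi}\,|1-\delta^2|/|z-\delta e^{{\rm i}\eta}|^2$ from \cite{MCC96}, from which $\delta e^{{\rm i}\eta}=-w/\phi_i$ is read off without splitting into cases on the sign of $\phi_i$.
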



\begin{proof}
	Without loss of generality, we consider the case $(i,j,k)=(1,2,3)$.
	\begin{enumerate}
		\item[(i)] It is straightforward to derive the first expression of the conditional density (\ref{eq:conditional_1st}) from the equation $t_{2|1}(u_1,u_2 | u_3; \rhob) = t(\ub; \rhob) / t_1(u_3)$, where $t(\ub; \rhob)$ is the trivariate density (\ref{eq:tri_density}) and $t_1(u_3)$ is the density of $U_3$, namely, the circular uniform density (see Theorem \ref{thm:marginals}(ii)).
		It follows from equation (2) of \cite{KP15} that the second expression of the conditional density (\ref{eq:conditional_2nd}) has the same functional form apart from parametrization.
		\item[(ii)] Theorem \ref{thm:marginals} implies that the density of $(U_1,U_2)'$ is given by (\ref{eq:marginal_density}) and the density of $U_2$ is the circular uniform density.
		Then it follows from the expression $t_{1|1}(u_1|u_2;\phi_{12})=t_2(u_1,u_2; \phi_{12})/t_1(u_2)$ that the conditional density of $U_1$ given $U_2=u_2$ is the wrapped Cauchy density (\ref{eq:conditional_density1}).  
		\item[(iii)] Using the complex expression of the density (\ref{eq:tri_density_c}), the conditional density of $Z_1$ given $(Z_2,Z_3)' = (z_2,z_3)'$ is of the form
			$$
			tc_{1|2}(z_1|z_2,z_3) \propto \left| z_1 + \frac{\phi_2 z_2 + \phi_3 z_3}{\phi_1} \right|^{-2}, \quad z_1 \in \Omega.
			$$
			Note that the density of the wrapped Cauchy distribution can be expressed as
			$$
			f(z) = \frac{1}{2\pi} \frac{|1-\delta^2|}{|z-\delta e^{{\rm i} \eta}|^2}, \quad z \in \Omega,
			$$
			where $ \eta \in [0, 2\pi) $ is the location parameter and $\delta \geq 0$ is the concentration parameter (see \cite{MCC96}).
			It follows that the conditional of $Z_1$ given $(Z_2,Z_3)' = (z_2,z_3)'$ is the wrapped Cauchy distribution with the location parameter $\arg(\phi_{1|23})$ and concentration parameter $|\phi_{1|23}|$, where $\phi_{1|23}=- \phi_1^{-1} ( \phi_2 z_2 + \phi_3 z_3) = - \rho_{2 3} (\rho_{13}^{-1} e^{{\rm i} u_2} + \rho_{12}^{-1} e^{{\rm i} u_3})$. 
	\end{enumerate}
\end{proof}


As this theorem shows, the univariate conditionals in Theorem \ref{thm:conditionals}(ii) and (iii) have the wrapped Cauchy distributions.
Note that the univariate conditional given in Theorem~\ref{thm:conditionals}(ii) does not follow the wrapped Cauchy in general if $c_1$ is not defined as in (\ref{eq:c1}).
The bivariate conditional in Theorem \ref{thm:conditionals}(i) has various tractable properties as discussed in \cite{KP15}.

The well-known form of the conditional distributions paves the way for regression purposes with one or two angular dependent variables and/or one or two angular regressors. Indeed, if we wish to predict one angular component based on two angular components, then from {Theorem} \ref{thm:conditionals}(iii) we find that the mean direction and circular variance of $U_i$ given $(U_j,U_k)'$ are simply $\eta_{i|j k}$ and $1-\delta_{i|j k}$, respectively. Similarly, if we wish to predict two angular components based on a third angular component, then from {Theorem} \ref{thm:conditionals}(i) we see that the toroidal mean and variance of $(U_i,U_j)'$ given $U_k$ can be calculated in the same way as in Section 2.5 of \cite{KP15}. Note that circular-circular regression {of, for example, $U_i$ given $U_j$} can also be obtained in a straightforward way from {Theorem} \ref{thm:conditionals}(ii). These properties  are not explored here  but we emphasize that they can be {used} in practice easily.


\subsection{Random variate generation} \label{sec:random_variate}

The fact that all the marginal and conditional distributions belong to existing tractable families lays the foundations for random variate generation. Indeed, random variates from the proposed trivariate model {TWCC($\boldsymbol{\rho}$)} can be efficiently generated from uniform random variates on $(0,1)$.

\begin{thm} \label{thm:random}
	The following algorithm generates random variates from the distribution {TWCC($\boldsymbol{\rho}$)} without rejection.
	\begin{enumerate}[Step 1.] 
		\item[Step 1.] Generate uniform $(0,1)$ random variates $\omega_1,\omega_2$ and $\omega_3$.
		\item[Step 2.] Compute
		\begin{align*}
		u_1 & =2\pi \omega_1, \\
        u_2 & = u_1 + \arg (\phi_{12}) + 2 \arctan \left[ \left(\frac{1-|\phi_{12}|}{1+|\phi_{12}|}\right) \tan \left\{ \pi (\omega_2-0.5) \right\} \right], \\
		u_3 & = \eta_{3|12} +  2 \arctan \left[ \left(\frac{1-\delta_{3|12}}{1+\delta_{3|12}}\right) \tan \left\{ \pi (\omega_3-0.5) \right\} \right],
		\end{align*}
	where $\phi_{12}$ {is as in \eqref{thephi}} and $(\eta_{3|12}, \delta_{3|12})$ {are as in \eqref{eq: params_conditional}}.
		\item[Step 3.] Output $(u_1,u_2,u_3)'$ as the random variate from the distribution {TWCC($\boldsymbol{\rho}$)}.
	\end{enumerate}
\end{thm}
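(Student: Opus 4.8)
The plan is to use the classical conditional (composition) method for simulation, exploiting the fact that every factor in a sequential decomposition of the TWCC density has already been identified as a tractable, directly invertible distribution. Concretely, I would first record the chain-rule factorization
$$t(\ub; \rhob) = t_1(u_1)\, t_{1|1}(u_2 \mid u_1; \phi_{12})\, t_{1|2}(u_3 \mid u_1, u_2; \eta_{3|12}, \delta_{3|12}),$$
which is legitimate because all three factors are genuine densities supplied by the earlier results: $t_1(u_1)=1/(2\pi)$ is the circular uniform marginal of $U_1$ from Theorem~\ref{thm:marginals}(ii); $t_{1|1}$ is the wrapped Cauchy conditional of $U_2$ given $U_1$ from Theorem~\ref{thm:conditionals}(ii), equation~\eqref{eq:conditional_density1}; and $t_{1|2}$ is the wrapped Cauchy conditional of $U_3$ given $(U_1,U_2)$ from Theorem~\ref{thm:conditionals}(iii), equation~\eqref{eq:conditional_density2}. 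Once this factorization is in place it suffices to draw $u_1$ from the uniform marginal, then $u_2$ from its conditional, then $u_3$ from its conditional; the resulting triple then has joint density $t(\ub;\rhob)$ by construction, and since each draw is produced by inverse transform there is no rejection.

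The only genuine work is to verify that Step~2 implements \emph{exact} inverse-transform sampling for each factor. For $u_1$ this is immediate, since $\omega_1\sim U(0,1)$ gives $u_1=2\pi\omega_1\sim U(0,2\pi)$. For $u_2$ and $u_3$ I would establish the wrapped Cauchy quantile formula: if $\omega\sim U(0,1)$ and $WC(\mu,\rho)$ denotes the wrapped Cauchy with mean direction $\mu$ and concentration $\rho\in[0,1)$, then
$$\mu + 2\arctan\!\left[\frac{1-\rho}{1+\rho}\tan\{\pi(\omega-0.5)\}\right] \sim WC(\mu,\rho).$$
I would prove this by the half-angle substitution $s=\tan((\theta-\mu)/2)$ in the wrapped Cauchy density which, after rescaling $s$ by $(1+\rho)/(1-\rho)$, reduces it to the standard Cauchy density $\pi^{-1}(1+s^2)^{-1}$; inverting the standard Cauchy distribution function $\tfrac12+\pi^{-1}\arctan(\cdot)$ then yields the displayed formula. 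Matching parameters, the $u_2$-step uses $\mu=u_1+\arg(\phi_{12})$ and $\rho=|\phi_{12}|$, which are exactly the mean direction and concentration of the density~\eqref{eq:conditional_density1} (the term $\arg(\phi_{12})$ accounting for the possible negativity of the real parameter $\phi_{12}$), while the $u_3$-step uses $(\eta_{3|12},\delta_{3|12})$ as defined in~\eqref{eq: params_conditional}, i.e. the parameters of~\eqref{eq:conditional_density2}.

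The step I expect to require the most care — a subtlety rather than a true obstacle — is that the concentrations $|\phi_{12}|$ and $\delta_{3|12}$ need not lie in $[0,1)$, whereas the quantile formula above is stated for $\rho\in[0,1)$. I would resolve this using the equivalence~\eqref{eq:cauchy_equiv}, namely $t_2(\cdot,\cdot;\phi)=t_2(\cdot,\cdot;1/\phi)$, so that the conditional is the same wrapped Cauchy whether parametrized by its concentration or its reciprocal; and I would observe that substituting $\rho>1$ into the formula merely makes the factor $(1-\rho)/(1+\rho)$ negative, which by the symmetry of the standard Cauchy distribution (equivalently, the symmetry of $\tan\{\pi(\omega-0.5)\}$ about $0$ when $\omega\sim U(0,1)$) yields the identical law to the reciprocal concentration $1/\rho\in[0,1)$. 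Hence the formulas in Step~2 are valid verbatim for all admissible parameter values, and the algorithm returns an exact, rejection-free draw from TWCC($\rhob$).
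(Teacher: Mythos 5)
Your proposal is correct and follows essentially the same route as the paper: the chain-rule factorization $t = t_1\, t_{1|1}\, t_{1|2}$ into the circular uniform marginal and the two wrapped Cauchy conditionals from Theorems~\ref{thm:marginals} and~\ref{thm:conditionals}, followed by the $2\arctan$ transformation of uniform variates. The only difference is that you derive the wrapped Cauchy quantile formula and the reciprocal-concentration equivalence explicitly, whereas the paper states the transformation result directly for all $\delta\in\mathbb{R}^+\setminus\{1\}$ (via the $|1-\delta^2|$ form of the density) and cites \cite{MCC96} and \cite{KP15} for the details.
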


\begin{proof}
	The trivariate density (\ref{eq:tri_density}) can be decomposed as
	$$
	t(\ub; \rhob) = t_{1|2}(u_3|u_1,u_2; \eta_{3|12} , \delta_{3|12}) t_{1|1}(u_2|u_1; \phi_{12}) t_1(u_1).
	$$
	Theorems \ref{thm:marginals} and \ref{thm:conditionals} imply that $t_{1|2}(u_3|u_1,u_2; \eta_{3|12}, \delta_{3|12})$ is the wrapped Cauchy density (\ref{eq:conditional_density2}), $t_{1|1}(u_2|u_1; \phi_{12})$ is also the wrapped Cauchy density (\ref{eq:conditional_density1}), and $t_1(u_1)$ is the circular uniform density.
	This expression implies that the random variate generation from the proposed trivariate distribution (\ref{eq:tri_density}) is equivalent to that from the circular uniform and wrapped Cauchy distributions.
	
	It is straightforward to see that $u_1$ computed in Step 2 is a random variate from the circular uniform distribution.
	In order to generate random variates $u_2$ and $u_3$ from the conditional wrapped Cauchy distributions, we apply the following result:
	if a random variable $U$ follows the circular uniform distribution on $(-\pi,\pi)$, then the random variable defined by
	$$
	\Theta =  \eta + 2 \arctan \left\{ \left(\frac{1-\delta}{1+\delta}\right) \tan \left( \frac{U}{2} \right) \right\}
	$$ 
	has the wrapped Cauchy distribution with location parameter $\eta \in [0,2\pi)$ and concentration parameter $\delta \in \mathbb{R}^+ \setminus \{  1\}$ with density
	$$
	f(\theta) = \frac{1}{2\pi} \frac{|1-\delta^2|}{1+\delta^2-2 \delta \cos (\theta - \eta)}, \quad 0 \leq \theta < 2\pi.
	$$
	See \cite{MCC96} and \cite{KP15} for details.
	Using this result, it is straightforward to see that $u_2$ and $u_3$ computed in Step 2 are variates from the conditional wrapped Cauchy distributions with location parameters $u_1+\arg (\phi_{12} )$ and $\eta_{3|12}$ and concentration parameters $|\phi_{12}|$ and $\delta_{3|12}$, respectively.
\end{proof}

{We note} the simplicity and efficiency of the algorithm in which a variate from the proposed distribution can be generated through a transformation of three uniform random variates without rejection.

{


	
}

\subsection{Modality}

In this section we investigate the modes of {TWCC($\boldsymbol{\rho}$)}.

\begin{thm} \label{thm:modes}
For $\rho_{ij}>0$ and $\rho_{i k},\rho_{j k}<0$, the modes of the density {TWCC($\boldsymbol{\rho}$)} are given by
	{\begin{equation}
	\begin{array}{cll}
		\mbox{(i)} & u_i=u_j=u_{k} &  \mbox{if \ } |\rho_{ij}| < |\rho_{i k} \rho_{j k}|/(|\rho_{i k}| + |\rho_{j k}|), \\
		\mbox{(ii)} & u_i=u_j+\pi = u_{k}+\pi \quad &  \mbox{if \ } |\rho_{i k}| < |\rho_{i j} \rho_{j k}|/(|\rho_{i j}| + |\rho_{j k}|),		
	\end{array} \label{eq:modes}
\end{equation} }
and the antimodes of the density {TWCC($\boldsymbol{\rho}$)} are given by
\begin{equation}
u_i=u_j=u_{k}+\pi. \label{eq:antimodes}
\end{equation}
If $\rho_{ij},\rho_{i k}, \rho_{j k} > 0$, then $u_{k}$ in the modes (\ref{eq:modes}) and antimodes (\ref{eq:antimodes}) is replaced by $u_{k} + \pi$.
\end{thm}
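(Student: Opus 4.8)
The plan is to use the fact that the density is $t(\ub;\rhob)=c_2/D(\ub)$ with $c_2>0$ and denominator $D(\ub)>0$ everywhere (both established in the proof of Theorem~\ref{thm:density}); consequently the modes of the density are exactly the global minimizers of $D$ and the antimodes its global maximizers. The decisive simplification, already recorded inside that proof, is the identity
\[
D(\ub)=\left\|\phi_i\,(\cos u_i,\sin u_i)'+\phi_j\,(\cos u_j,\sin u_j)'+\phi_k\,(\cos u_k,\sin u_k)'\right\|^2,
\]
with $\phi_i$ as in~\eqref{eqrelat}. Thus the whole problem reduces to extremizing the length of a sum of three planar vectors of fixed magnitudes $|\phi_i|,|\phi_j|,|\phi_k|$ and freely chosen directions.

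First I would pin down the signs. For $\rho_{ij}>0$ and $\rho_{ik},\rho_{jk}<0$, formula~\eqref{eqrelat} gives $\phi_i,\phi_j<0$ and $\phi_k>0$, so the three summands have magnitudes $|\phi_i|,|\phi_j|,|\phi_k|$ and \emph{effective directions} $u_i+\pi$, $u_j+\pi$, $u_k$, respectively. Next I would rewrite the two hypotheses in terms of the $\phi$'s: substituting $|\rho_{ij}|=|\phi_i||\phi_j|$, $|\rho_{ik}|=|\phi_i||\phi_k|$, $|\rho_{jk}|=|\phi_j||\phi_k|$ and cancelling shows that condition~(i) is equivalent to $|\phi_k|>|\phi_i|+|\phi_j|$ and condition~(ii) to $|\phi_j|>|\phi_i|+|\phi_k|$. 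The only remaining boundedness case, $|\phi_i|>|\phi_j|+|\phi_k|$, is just~(ii) after the relabelling $i\leftrightarrow j$, which preserves the sign pattern; so no generality is lost.

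The extremization is then elementary planar geometry. The length of the vector sum is maximal, equal to $|\phi_i|+|\phi_j|+|\phi_k|$, exactly when all three effective directions coincide; solving $u_i+\pi=u_j+\pi=u_k$ yields the antimode $u_i=u_j=u_k+\pi$ of~\eqref{eq:antimodes}, the same in both cases. For the minimum, strict domination (the very inequality that forces $D>0$) makes the minimal length equal to the dominant magnitude minus the sum of the other two, attained precisely when the two smaller vectors are anti-parallel to the dominant one. Under~(i) the dominant vector is the $\phi_k$-term, and anti-parallelism reads $u_i+\pi=u_k+\pi$ and $u_j+\pi=u_k+\pi$, i.e.\ $u_i=u_j=u_k$; under~(ii) the dominant vector is the $\phi_j$-term, giving $u_i=u_j+\pi=u_k+\pi$. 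These are exactly the modes~\eqref{eq:modes}.

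Finally, the all-positive case follows by a one-line reduction rather than a repeat of the argument. When $\rho_{ij},\rho_{ik},\rho_{jk}>0$ all three $\phi$'s are positive; since $D$ is unchanged by a global sign flip and $-(\cos u_k,\sin u_k)'=(\cos(u_k+\pi),\sin(u_k+\pi))'$, the case-1 denominator satisfies $D(u_i,u_j,u_k)=\tilde D(u_i,u_j,u_k+\pi)$, where $\tilde D$ denotes the denominator with all-positive signs. Hence each extremum of $\tilde D$ is obtained from the corresponding case-1 extremum by the single replacement $u_k\mapsto u_k+\pi$, exactly as stated. I expect the main obstacle to be purely bookkeeping: keeping the $\pi$-shifts contributed by the negative $\phi$'s consistent (all identities hold modulo $2\pi$), and confirming that the stated configurations are the \emph{unique} global extrema up to the common rotation $u_i,u_j,u_k\mapsto u_i+\alpha,u_j+\alpha,u_k+\alpha$ that leaves $D$ invariant---this is precisely where strict domination, rather than a non-strict triangle inequality, is needed.
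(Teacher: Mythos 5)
Your proof is correct, and it takes a genuinely different route from the paper's. The paper argues analytically: it first maximizes the density over $u_3$ (obtaining $u_3=\tilde{\eta}_{3|12}+\pi$), then reduces the remaining problem to minimizing the univariate function $B(x)=\rho_{12}x-\{\rho_{13}^2+\rho_{23}^2+2\rho_{13}\rho_{23}x\}^{1/2}$ over $x=\cos(u_1-u_2)\in[-1,1]$ by examining the sign of $B'(x)$, and finally transfers the result to the mixed-sign case via the symmetry $t(u_i,u_j,u_k;\rho_{ij},\rho_{ik},\rho_{jk})=t(u_i,u_j,u_k+\pi;\rho_{ij},-\rho_{ik},-\rho_{jk})$ (the same symmetry you invoke, applied in the opposite direction). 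You instead recycle the identity from the proof of Theorem~\ref{thm:density} writing the denominator as $\bigl\|\phi_i(\cos u_i,\sin u_i)'+\phi_j(\cos u_j,\sin u_j)'+\phi_k(\cos u_k,\sin u_k)'\bigr\|^2$ with $\phi_i$ as in \eqref{eqrelat}, and reduce everything to the triangle inequality and its equality cases for three planar vectors of fixed lengths; your translations of conditions (i) and (ii) into $|\phi_k|>|\phi_i|+|\phi_j|$ and $|\phi_j|>|\phi_i|+|\phi_k|$, and of the sign pattern into the effective directions $u_i+\pi,u_j+\pi,u_k$, all check out, as does the observation that the third domination case is (ii) under the relabelling $i\leftrightarrow j$. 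Your approach buys transparency: the equality cases of the (reverse) triangle inequality immediately give both the extremal configurations and their uniqueness up to common rotation, and they make visible why the positivity condition \eqref{conditions} is exactly the strict domination needed for the minimum to be attained at an isolated configuration; the paper's calculus route is more mechanical but stays entirely within the original $\rho$-parametrization and yields the intermediate threshold $\rho_{12}>\rho_{13}\rho_{23}/|\rho_{13}-\rho_{23}|$ as a by-product.
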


\begin{proof}
	First we consider the case $\rho_{ij},\rho_{ik},\rho_{j k}>0$.
	Without loss of generality, assume $(i,j,k)=(1,2,3)$.
	It is clear that the antimodes of the density (\ref{eq:tri_density}) are $u_1=u_2=u_3$ because $\cos (u_i-u_j)$ $(1 \leq i<j \leq 3)$ is maximized at $u_i=u_j$.
	In order to derive the modes of the density (\ref{eq:tri_density}), we first note that the density (\ref{eq:tri_density}) can be expressed as
    \begin{align*}
	t(\ub; \rhob) \propto & \Bigl[ c_1 + 2 \rho_{12} \cos (u_1 - u_2)  \\
	& + 2 \{ \rho_{13}^2 + \rho_{23}^2 + 2 \rho_{13} \rho_{23} \cos (u_1 - u_2) \}^{1/2} \cos (u_3 - \tilde{\eta}_{3|12}) \Bigr]^{-1},
	\end{align*}
	where $\tilde{\eta}_{3|12} =  \arg \{ \rho_{13} \cos u_1 + \rho_{23} \cos u_2 + {\rm i} (\rho_{13} \sin u_1 + \rho_{23} \sin u_2) \}$.
	It immediately follows from this expression that the density (\ref{eq:tri_density}) is maximized at $u_3=\tilde{\eta}_{3|12}+\pi$. 
	Then the maximization of the density (\ref{eq:tri_density}) reduces to the minimization of its functional part
	$$
	B(x) = \rho_{12} x -  \{ \rho_{13}^2 + \rho_{23}^2 + 2 \rho_{13} \rho_{23} x \}^{1/2},
	$$
	where $x = \cos (u_1-u_2) \in [-1,1]$.
	The first derivative of this function is
	\begin{equation}\label{deriv}
	\frac{d}{dx} B(x) = \rho_{12} - \rho_{13} \rho_{23} (\rho_{13}^2+\rho_{23}^2+2 \rho_{13} \rho_{23} x)^{-1/2}.
	\end{equation}
	Now it is straightforward to see that this derivative can be upper bounded by $\rho_{12}-\rho_{13}\rho_{23}/(\rho_{13}+\rho_{23})$ (by replacing $x$ with 1 in~\eqref{deriv}), and that the derivative is thus negative if $\rho_{12}<\rho_{13}\rho_{23}/(\rho_{13}+\rho_{23})$, leading to a minimization of $B(x)$
 at $x=1$ if $ \rho_{12} < \rho_{13} \rho_{23} / ( \rho_{13} + \rho_{23} )$. By using similar arguments, one can show that $B(x)$ is minimized at $x=-1$ if $ \rho_{12} > \rho_{13} \rho_{23}  / |\rho_{13} - \rho_{23}| $.
	It then follows that the modes of the density (\ref{eq:tri_density}) are given at $u_1=u_2$ for $ \rho_{12} < \rho_{13} \rho_{23}  / (\rho_{13} + \rho_{23} )$ and at $u_1=u_2+\pi$ for $ \rho_{12} > \rho_{13} \rho_{23}  / |\rho_{13} - \rho_{23}|$.
	Finally, $u_3=\tilde{\eta}_{3|12}+\pi$ implies that $u_3= u_1+\pi $ if  $ \rho_{12} < \rho_{13} \rho_{23}  / (\rho_{13} + \rho_{23} ) $.
	If $ \rho_{12} > \rho_{13} \rho_{23}  / |\rho_{13} - \rho_{23}|$, then we have $u_3=u_1$ for $\rho_{13}-\rho_{23}<0$.
	Noticing that $ \rho_{12} > \rho_{13} \rho_{23}  / |\rho_{13} - \rho_{23}|$ and $\rho_{13} - \rho_{23} <0$ imply $\rho_{13}<\rho_{12} \rho_{23}/(\rho_{12} + \rho_{23})$, we obtain the modes of the density (\ref{eq:tri_density}) for the case $(i,j,k)=(1,2,3).$
	If $ \rho_{12} > \rho_{13} \rho_{23}  / |\rho_{13} - \rho_{23}|$ and $\rho_{13}-\rho_{23} > 0$, the modes of the density (\ref{eq:tri_density}) can be obtained by setting $(i,j,k)=(2,1,3)$ rather than $(i,j,k)=(1,2,3)$ due to the symmetry of the density (\ref{eq:tri_density}) with respect to the permutation of $(\rho_{ik},\rho_{j k})$.
	
	The modes and antimodes of the density (\ref{eq:tri_density}) for the case {$\rho_{ij}>0$ and $\rho_{ik},\rho_{j k}<0$} can be obtained in the same manner by noticing the straightforward equality $ t(u_i,u_j,u_k; \rho_{ij}, \rho_{ik},\rho_{jk}) 
		 = t(u_i, u_j, u_k+\pi; \rho_{ij},-\rho_{ik},-\rho_{jk})$.
    \end{proof}

Since the conditions in \eqref{eq:modes} {are the same as \eqref{conditions} in Theorem~\ref{thm:density}}, one of them is always satisfied, which makes the result very strong as it shows that our trivariate copula has modes lying on a single line. Moreover, the modes of {TWCC($\boldsymbol{\rho}$)} have the most natural form in the case (i) of equation (\ref{eq:modes}), and very simple forms in the other cases. This  property simplifies working with mixtures.

\section{Parameter Estimation} \label{sec: parameter estimation}

In this section we consider  maximum likelihood estimation, and {method of moment estimation is discussed in Section \ref{sec:mom} of the Supplementary Material.}
Throughout this section, let $\{ (u_{1m},u_{2m},u_{3m})' \}_{m=1}^n $ be a realization of a random sample $\{ (U_{1m},U_{2m},U_{3m})' \}_{m=1}^n $ from the distribution {TWCC($\rhob$)}.

For the original model (\ref{eq:tri_density}), the likelihood function for $\{ \ub_m \}_{m=1}^n = \{ (u_{1m},u_{2m},u_{3m})' \}_{m=1}^n$ is given by
\begin{align} \label{eq:likelihood}
\log L (\rhob) = \ \log \prod_{m=1}^n t(\ub_m)
= \ n \log c_2 - \sum_{m=1}^n \log F_m,
\end{align}
where $F_m = c_1 + 2 \{ \rho_{12} \cos (u_{1m} - u_{2m}) + \rho_{13} \cos (u_{1m} - u_{3m}) + \rho_{23} \cos (u_{2m} - u_{3m}) \}$. Its score function is
\begin{align*}
& \frac{\partial}{\partial \rho_{ij}} \log L (\rhob)
	 = n \frac{ \frac{\partial c_2}{\partial \rho_{ij}}}{c_2} - \sum_{m=1}^n \frac{ \frac{\partial c_1}{\partial \rho_{ij}} + 2 \cos (u_{im} - u_{jm})}{F_m}
\end{align*}
where
\begin{align*}
    & \frac{\partial c_1}{\partial \rho_{ij}}
    = \frac{\rho_{ik}}{\rho_{jk}} + \frac{\rho_{jk}}{\rho_{ik}} - 
    \frac{\rho_{jk}\rho_{ik}}{\rho_{ij}^2}, \\
    & \frac{\partial c_2}{\partial \rho_{ij}}
    = \frac{1}{(2\pi)^3} \left\{ \rho_{ij} \left( \left( \frac{\rho_{ik}}{\rho_{jk}} \right)^2 + \left( \frac{\rho_{jk}}{\rho_{ik}} \right)^2 \right) -  \frac{(\rho_{jk}\rho_{ik})^2}{\rho_{ij}^3} - 2 \rho_{ij} \right\} \left((2\pi)^3 c_2\right)^{-1}.
\end{align*}
{The maximum likelihood estimates $\hat{\rho}_{12}, \hat{\rho}_{13}, \hat{\rho}_{23}$ are obtained by equating the score function to zero and numerically solving the system of three equations, see below. 
For the reader's convenience, we provide {in Section~\ref{sec:fisher_information} in the Supplementary Material} the associated expected Fisher information matrix.}

Although the proposed density (\ref{eq:tri_density}) has a simple and closed form, it is essential to keep  the parameter constraints \eqref{conditions} and \eqref{identcond} in mind. The simplified constraint
$ |\rho_{ij}| + |\rho_{ik}| <  |\rho_{ij}|^2 |\rho_{ik}|^2$ can further be turned into
$$
| \rho_{ij} | > \frac{1 + \{ 1 + 4|\rho_{ik}|^3 \}^{1/2} }{2 |\rho_{ik}|^2}
$$
by solving an inequality of the second order.
Using the expression 
$$
\zeta_{ij} = \frac{1 + \{ 1 + 4|\rho_{ik}|^3 \}^{1/2} }{2 |\rho_{ik}|^2} \frac{1}{\rho_{ij}}, \quad \zeta_{ij} \in \left(-1,0\right)\cup\left(0,1\right),
$$
it is straightforward to see that the parameters of the  model (\ref{eq:tri_density}) can be expressed in terms of $\rho_{ik}$ and $\zeta_{ij}$ alone (remember that this holds for one choice of $i, j, k$).
The parameter space of this reparametrized model under the constraint 
 $|\rho_{j k}| < |\rho_{ij} \rho_{i k}| / ( |\rho_{ij}| + |\rho_{i k}|)$ and $\rho_{12} \rho_{13} \rho_{23} = 1$ is thus
$$
\tilde{\Omega} = \left\{ (\zeta_{ij}, \rho_{ik}) \, ; \, \zeta_{ij} \in \left(-1,0\right) \cup \left(0,1\right), \rho_{ik} \in \mathbb{R} \setminus \{0\} \right\}.
$$
For notational convenience, write $\log L (\rho_{12}, \rho_{13}, \rho_{23}) = \log L (\zeta_{ij}, \rho_{ik})$ if the log-likelihood function (\ref{eq:likelihood}) is represented in terms of $(\zeta_{ij}, \rho_{ik})$.
Then the maximum likelihood estimation for the proposed model can be carried out as follows:
\begin{enumerate}
\item[Step 1.] For $(j,k)$ successively being equal to $(1,2),(2,3),(3,1)$, obtain the following estimates
\begin{align*}
(\tilde{\zeta}_{ij}, \tilde{\rho}_{ik}) & = \argmax_{(\zeta_{ij}, \rho_{ik}) \in \tilde{\Omega}}  \log L(\zeta_{ij}, \rho_{ik}).
\end{align*}
\item[Step 2.] Among the three obtained maximized quantities, calculate
$$
(\hat{\zeta}_{i^*j^*} , \hat{\rho}_{i^*k^*}) =  \argmax_{(\tilde{\zeta}_{ij}, \tilde{\rho}_{ik})} \, \log L(\tilde{\zeta}_{ij}, \tilde{\rho}_{ik}).
$$
\item[Step 3.] Record the maximum likelihood estimate $(\hat{\rho}_{12},\hat{\rho}_{13},\hat{\rho}_{23})'$ as
$$
\hat{\rho}_{i^*k^*} = \hat{\rho}_{i^*k^*} , \quad 
\hat{\rho}_{i^*j^*} = \frac{1 + \{ 1 + 4|\hat{\rho}_{i^*k^*}|^3 \}^{1/2} }{2 |\hat{\rho}_{i^*k^*}|^2} \frac{1}{\hat{\zeta}_{i^*j^*}}, 
\quad \hat{\rho}_{j^*k^*} =\frac{1}{ \hat{\rho}_{i^*j^*} \hat{\rho}_{i^*k^*}},
$$
where $i^* \neq j^* \neq k^*$.

\end{enumerate}

The algorithm is repeated with different initial values, to make sure that the global maximum is achieved. 
The initial values for the parameters $\zeta_{ij}$ and $\rho_{ik}$ are uniformly chosen from the intervals they are allowed to take values in {(where of course the infinite intervals for $\rho_{ik}$ are limited to a large maximal value)}. 
The function for calculating the ML estimates was written in the programming language \textsf{R}, using the optimizer solnp from the library Rsolnp \cite{Rsolnp} and is available in the GitHub repository
\url{https://github.com/Sophia-Loizidou/Trivariate-wrapped-Cauchy-copula}.
{The consistency of this approach is shown and its finite-sample performance is investigated by means of Monte Carlo simulations, which we provide in {Section~\ref{sec:simulations} in the Supplementary Material.}}

{

{ 
\section{Real-data analysis} \label{sec:real_data}


Predicting the 3D structure of proteins is a cutting-edge area of research in bioinformatics and computational biology. \cite{Mardia13} has provided a recent overview on statistical approaches to three major areas  in protein structural bioinformatics and listed the main challenges lying ahead. An essential challenge is the ability to jointly model the conformational angles $\phi,\psi$ (dihedral angles) and $\omega$ (torsion angle of the side chain). We will use our new distribution to precisely model those angles.

For the present data analysis, we consider position 55 at 2000 randomly selected times in the molecular dynamic trajectory of the SARS-CoV-2 spike domain from \cite{genna_sars-cov-2_2020}.
The position occurs in $\alpha$-helix throughout the trajectory. DPPS \cite{kabsch_dictionary_1983} is used to compute the secondary structure and \cite{cock_biopython_2009} to verify the chains. The parameters of our model are estimated using MLE as explained in Section~\ref{sec: parameter estimation}.

Figure \ref{fig:protein_contour_plots} shows the plots of the data of two of the angles given the third one. 
The values of the fixed angles in radians ($\phi = 1.93$, $\psi = 2.8$ and $\omega = 0$) are chosen to be the mode of the data, and only points that are within 0.1 radians of the selected value of the fixed angle are plotted.
In order to make the plots clearer, the range of values on each axis is not between $0$ and $2\pi$, like the traditional Ramachandran plot, but it is chosen such that both the contour plots and the points are visible (in summary, we have re-scaled the plots). 
In order to fit the data with our model, we shift them to have mean 0. The estimates of the copula parameters for this example are $\hat{\rho}_{12} = 0.611, \hat{\rho}_{13} = -1.31, \hat{\rho}_{23} = -1.25 $.
The contour plots in Figure \ref{fig:protein_contour_plots} correspond to our copula density \eqref{eq:tri_density}.
The observed data points are plotted on top of the contours. 
This gives a visual indication of how good our estimated model fits this protein dataset. We attract the reader's attention to the fact that our modality result from Theorem~\ref{thm:modes} implies that, conditionally on one fixed component, the conditional distribution of the other two is necessarily unimodal, which can nicely be seen from Figure~\ref{fig:protein_contour_plots}. 


\begin{figure}
\begin{subfigure}{.3\textwidth}
  \centering
  \includegraphics[width = 4.6cm, height = 3.5cm]{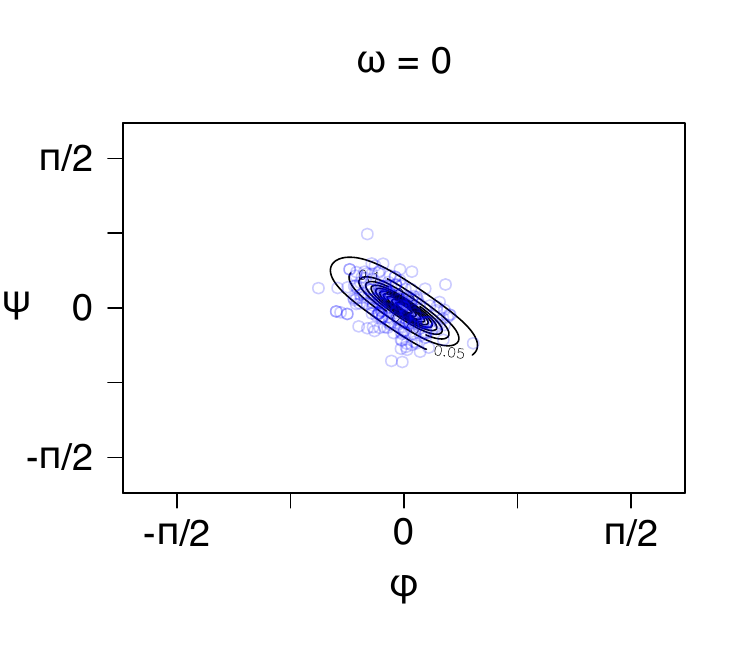}
  \caption{\hspace*{-0.7cm}}
  \label{fig:sfig1}
\end{subfigure}
\begin{subfigure}{.3\textwidth}
  \centering
  \includegraphics[width = 4.6cm, height = 3.5cm]{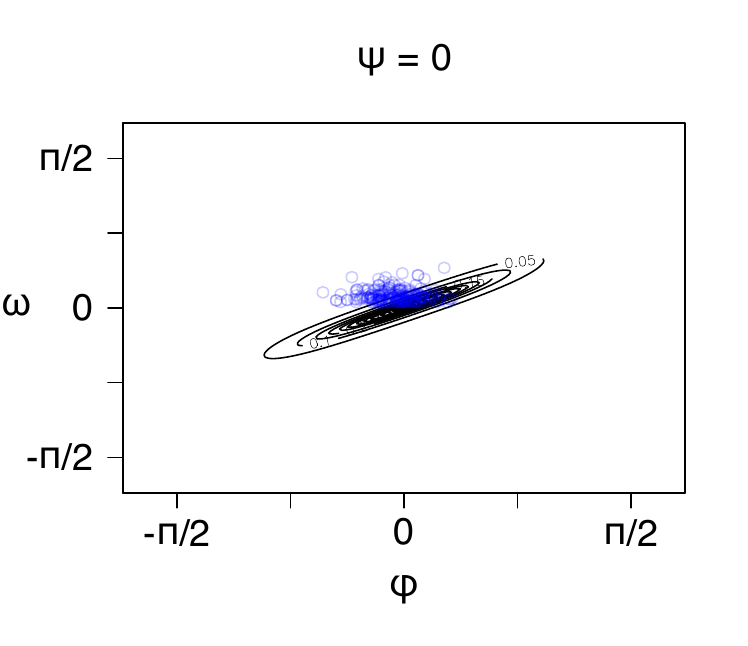}
  \caption{\hspace*{-0.7cm}}
  \label{fig:sfig2}
\end{subfigure}
\begin{subfigure}{.3\textwidth}
  \centering
  \includegraphics[width = 4.6cm, height = 3.5cm]{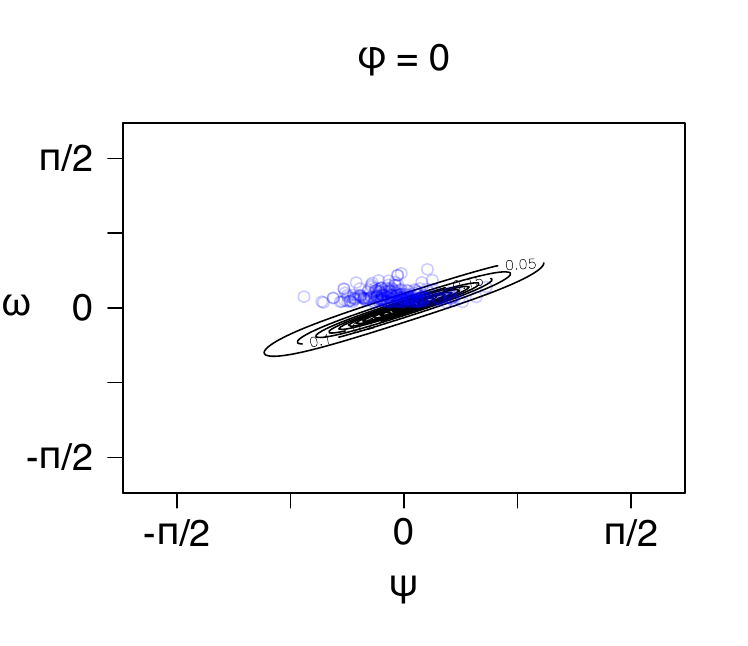}
  \caption{\hspace*{-0.7cm}}
  \label{fig:sfig3}
\end{subfigure}
\caption{Contour plots of the bivariate conditional density of the distribution \eqref{eq:tri_density}, with parameter values estimated by maximum likelihood. 
The  parameter estimates are $\hat{\rho}_{12} = 0.611, \hat{\rho}_{13} = -1.31, \hat{\rho}_{23} = -1.25 $. The values of the fixed angles (shown in each title) are chosen to be the mode of the data, and only points that are within 0.1 radians of the chosen value of the fixed angle are plotted. For illustration purposes, the three plots are portrayed on different scales.} \label{fig:protein_contour_plots} 
\label{fig:fig}
\end{figure}

\section{Discussion} \label{sec:conclusion}
In this paper we have proposed a new distribution on the three-dimensional torus, the trivariate wrapped Cauchy copula.
The marginal and conditional distributions of the copula are known distributions and random variate generation is simple and efficient through a transformation of uniform random variates without rejection. We have explicit expressions for the locations of the modes of our distribution.
Parameter estimation can be performed via maximum likelihood estimation. A natural question is how to extend our construction to any dimension $d$. We investigate such a proposal in the Supplementary Material. In future work, we intend to explore choices of marginal distributions which will extend the domain of potential applicability of our new copula. 



\textbf{Code availability:} The relevant code of the paper is available in \url{https://doi.org/10.5281/zenodo.15675162} as well as in \url{https://github.com/Sophia-Loizidou/Trivariate-wrapped-Cauchy-copula}.

\textbf{Acknowledgements:} The authors would like to thank Thomas Hamelryck and Ola R\o nning for the protein data. Kanti Mardia would like to thank the Leverhulme Trust for the Emeritus Fellowship.

\textbf{Funding:} Shogo Kato was supported by JSPS KAKENHI Grant Number 20K03759 and 25K15037.
Sophia Loizidou was supported by the grant PRIDE/21/16747448/MATHCODA from the Luxembourg National Research Fund.


\bibliographystyle{abbrv}
\bibliography{references_multcop}  






\clearpage
\begingroup
\renewcommand{\thepage}{S\arabic{page}} 
\setcounter{page}{1}                    
\renewcommand{\thesection}{S\arabic{section}} 
\setcounter{section}{0}
\renewcommand{\thefigure}{S\arabic{figure}}
\renewcommand{\thetable}{S\arabic{table}}
\renewcommand{\theequation}{S\arabic{equation}}
\renewcommand{\thethm}{S\arabic{thm}}
\renewcommand{\thelem}{S\arabic{lem}}

\begin{center}
  \LARGE \textbf{Supplement to ``The trivariate wrapped Cauchy copula''} \\[1em]
  \normalsize
  Shogo Kato\textsuperscript{1}, Christophe Ley\textsuperscript{2}, Sophia Loizidou\textsuperscript{2}, Kanti V. Mardia\textsuperscript{3} \\
  \textsuperscript{1}Institute of Statistical Mathematics, Japan\\
  \textsuperscript{2}University of Luxembourg, Luxembourg \\
  \textsuperscript{3}University of Leeds, United Kingdom \\
\end{center}
\vspace{2em}
\begin{abstract}
    In this Supplementary Material, we provide additional information to the main paper ``A versatile trivariate wrapped Cauchy copula with applications to toroidal and cylindrical data''.
Section~\ref{sec:technical lemmas} provides the parameter space of the complex TWCC distribution,
Section~\ref{Sup_trigmom} contains results concerning the trigonometric moments of our TWCC density, Section~\ref{sec:interpretation} provides other properties of our proposed distribution and contour plots and Sections \ref{sec:correlation_coef} and \ref{sec:mom} consist of the correlation coefficients and method of moments estimation, respectively.
Section~\ref{sec:fisher_information} details the expression for the expected Fisher information matrix associated with the TWCC, while Section~\ref{sec:simulations} consists of Monte Carlo simulations to evaluate the performance of our MLE algorithm for our trivariate wrapped Cauchy copula. A multivariate extension of the TWCC density is investigated in Section~\ref{sec: possible_extensions}. Finally Section~\ref{sec:generalc1} discusses a generalization of our distribution \eqref{eq:tri_density}.
\end{abstract}

\section{Parameter space of the complex TWCC}\label{sec:technical lemmas}

We provide a technical lemma regarding the parameter space of the complex TWCC (\ref{eq:tri_density_c}) discussed in Section \ref{sec:Props} of the main paper.

\begin{lem}\label{lem_App}
For real-valued variables $\phi_1, \phi_2, \phi_3$, we have that $|\phi_i| > |\phi_j| + |\phi_{k}|$ for $(i,j,k)$ a certain permutation of $(1,2,3)$ if and only if
 $\phi_1 z_1 + \phi_2 z_2 + \phi_3 z_3 \neq 0$ for all $(z_1,z_2,z_3)'\in\Omega^3.$
\end{lem}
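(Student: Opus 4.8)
The plan is to recast the algebraic condition as an elementary question in planar geometry. Since $|z_j|=1$, each term $\phi_j z_j$ is a vector in $\mathbb{R}^2\cong\mathbb{C}$ of fixed modulus $|\phi_j|$ whose direction ranges over the entire circle as $z_j$ varies over $\Omega$ (a negative $\phi_j$ merely reverses the direction, which is immaterial since the direction is already free). Hence $\phi_1 z_1+\phi_2 z_2+\phi_3 z_3$ can be made to vanish for some $(z_1,z_2,z_3)'\in\Omega^3$ if and only if three planar vectors of prescribed lengths $|\phi_1|,|\phi_2|,|\phi_3|$ and freely chosen directions can be placed head-to-tail to close into a (possibly degenerate) triangle. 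The lemma is thus the assertion that this closure fails precisely when the triangle inequality is violated, i.e. when one length strictly exceeds the sum of the other two.

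For the direction $(\Leftarrow)$ I would argue directly. Assume, without loss of generality, that $|\phi_i|>|\phi_j|+|\phi_k|$ for the distinguished permutation. The reverse triangle inequality then gives, for every $(z_1,z_2,z_3)'\in\Omega^3$,
\[
|\phi_1 z_1+\phi_2 z_2+\phi_3 z_3| \ \ge\ |\phi_i z_i| - |\phi_j z_j + \phi_k z_k| \ \ge\ |\phi_i| - \bigl(|\phi_j|+|\phi_k|\bigr) \ >\ 0,
\]
so the sum never vanishes.

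For the converse $(\Rightarrow)$ I would prove the contrapositive: assuming that no permutation satisfies the strict inequality, that is $|\phi_i|\le|\phi_j|+|\phi_k|$ for all three choices, I would exhibit a zero of the sum. Absorbing the signs of the $\phi_j$ into the $z_j$, I may take $\phi_j=|\phi_j|\ge 0$, and by rotational invariance I may fix $z_1=1$, so that it suffices to solve $\phi_2 z_2+\phi_3 z_3=-\phi_1$. As the relative angle between $z_2$ and $z_3$ varies, the modulus $|\phi_2 z_2+\phi_3 z_3|$ sweeps continuously over the whole interval $[\,|\phi_2-\phi_3|,\ \phi_2+\phi_3\,]$; the two inequalities $\phi_2\le\phi_1+\phi_3$ and $\phi_3\le\phi_1+\phi_2$ give $|\phi_2-\phi_3|\le\phi_1$, while $\phi_1\le\phi_2+\phi_3$ is the third, so $\phi_1$ lies in this interval. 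An intermediate-value argument then produces $z_2,z_3$ with $|\phi_2 z_2+\phi_3 z_3|=\phi_1$, and a common rotation of $z_2,z_3$ aligns this vector with the direction of $-\phi_1$, yielding the desired zero.

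The main obstacle is this last constructive step, which is where the content of the triangle-closure criterion resides; the reverse-inequality direction is immediate. I would make the construction fully explicit via the law of cosines, choosing the angle $\theta$ between $z_2$ and $z_3$ so that $\phi_2^2+\phi_3^2+2\phi_2\phi_3\cos\theta=\phi_1^2$, which is solvable precisely because $\phi_1\in[\,|\phi_2-\phi_3|,\ \phi_2+\phi_3\,]$ forces the required cosine into $[-1,1]$. A minor bookkeeping point is the degenerate case in which one of the $\phi_j$ vanishes, where the criterion reduces to $|\phi_i|=|\phi_k|$ for the two nonzero indices and the same argument applies verbatim.
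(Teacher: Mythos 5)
Your proof is correct, and while your forward direction coincides with the paper's, your converse takes a genuinely different route. The implication ``strict inequality $\Rightarrow$ nonvanishing'' via the reverse triangle inequality is exactly the paper's argument. For the other implication, you prove the contrapositive constructively: assuming all three inequalities $|\phi_i|\le|\phi_j|+|\phi_k|$ hold, you exhibit a zero by noting that $|\phi_2 z_2+\phi_3 z_3|$ sweeps the interval $\bigl[\,\bigl||\phi_2|-|\phi_3|\bigr|,\;|\phi_2|+|\phi_3|\,\bigr]$ as the relative angle varies, so the law of cosines (or the intermediate value theorem) produces $z_2,z_3$ with $\phi_2z_2+\phi_3z_3=-\phi_1$. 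The paper instead argues the implication directly: from the hypothesis that the sum never vanishes, the quantity $\bigl|(\phi_2z_2+\phi_3z_3)/\phi_1\bigr|$ never equals $1$ on the connected set $\Omega^2$, hence by continuity is uniformly $>1$ or uniformly $<1$; evaluating at sign-aligned choices of $(z_2,z_3)$ rules out the second alternative, and evaluating at anti-aligned choices then forces $\bigl||\phi_2|-|\phi_3|\bigr|>|\phi_1|$, i.e.\ one of the two remaining strict inequalities. Both arguments rest on the same geometric fact (the attainable moduli of a sum of two freely rotating planar vectors form the interval between the difference and the sum of their lengths), but your version is more elementary and explicit, and it handles the degenerate cases where some $\phi_j$ vanishes transparently, whereas the paper's dichotomy requires normalizing $|\phi_1|=\min_i|\phi_i|$ and dividing by $\phi_1$, a step that is stated somewhat loosely there. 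One small point worth making explicit in a final write-up: the condition ``no permutation satisfies the strict inequality'' is preserved when you absorb the signs of the $\phi_j$ into the $z_j$, so the reduction to $\phi_j\ge 0$ is indeed without loss of generality.
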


\begin{proof}
Let us start with the necessary condition. Without loss of generality, assume that $|\phi_1| > |\phi_2| + |\phi_{3}|$.
Then the (reverse) triangular inequality combined with straightforward calculations yields
\begin{eqnarray*}
|\phi_1 z_1 + \phi_2 z_2 + \phi_3 z_3|&\geq& ||\phi_1 z_1| - |\phi_2 z_2 + \phi_3 z_3||=||\phi_1| - |\phi_2 z_2 + \phi_3 z_3||\\
&=&|\phi_1| - |\phi_2 z_2 + \phi_3 z_3|
\geq|\phi_1| -( |\phi_2| + |\phi_3|)
>0.
\end{eqnarray*}
The sufficient condition requires some more steps. Without loss of generality assume that $\min\{\phi_1,\phi_2,\phi_3\}=\phi_1$. Then $z_1\neq-\frac{\phi_2z_2+\phi_3z_3}{\phi_1}$ and $|\phi_i/\phi_1|>1$ for $i=2,3$. From the former condition we can deduce that $\left|-\frac{\phi_2z_2+\phi_3z_3}{\phi_1}\right|<1$ or $\left|-\frac{\phi_2z_2+\phi_3z_3}{\phi_1}\right|>1$ for all $(z_2,z_3)'\in\Omega^2$. The special choices $z_2={\rm sgn}(\phi_2/\phi_1)\in\Omega$ and $z_3={\rm sgn}(\phi_3/\phi_1)\in\Omega$ lead to 
$$
\left|-\frac{\phi_2z_2+\phi_3z_3}{\phi_1}\right|=\left|\frac{\phi_2}{\phi_1}\right|+\left|\frac{\phi_3}{\phi_1}\right|>1
$$
by our second deduction above. Therefore $\left|-\frac{\phi_2z_2+\phi_3z_3}{\phi_1}\right|>1$ for all $(z_2,z_3)'\in\Omega^2$. Now choose again $z_2={\rm sgn}(\phi_2/\phi_1)\in\Omega$ but this time $z_3=-{\rm sgn}(\phi_3/\phi_1)\in\Omega$. From our established inequality we thus know that for these choices of $z_2, z_3$
$$
\left|-\frac{\phi_2z_2+\phi_3z_3}{\phi_1}\right|>1\Leftrightarrow\left|\left|\frac{\phi_2}{\phi_1}\right|-\left|\frac{\phi_3}{\phi_1}\right|\right|>1,
$$
and consequently either $|\phi_2|>|\phi_1|+|\phi_3|$ or $|\phi_3|>|\phi_1|+|\phi_2|$.
\end{proof}

\section{Trigonometric moments}\label{Sup_trigmom}
	For a trivariate circular random vector $(U_1,U_2,U_3)'$, its trigonometric moment {of order $(p_1,p_2,p_3)'$} is defined by
	$$
	\Phi (p_1,p_2,p_3) = E \left[ e^{{\rm i} (p_1 U_1 + p_2 U_2 + p_3 U_3 ) } \right],
	$$
	where $(p_1,p_2,p_3)' \in \mathbb{Z}^3$ is the order of the trigonometric moments.
	The following theorem shows that the trigonometric moments for the proposed distribution {TWCC($\boldsymbol{\rho}$)} can be expressed in simple form. 
	
	\begin{thm} \label{thm:moments}
		{Let $(U_1,U_2,U_3)'$ have the distribution {TWCC($\boldsymbol{\rho}$)} with $|\rho_{j k}| < |\rho_{i j} \rho_{i k}| / (|\rho_{ij}|+|\rho_{i k}|)$.
		Then {we have the following cases.}
  \begin{itemize} 
  \item[(i)] If $p_1+p_2+p_3 \neq 0$,
			$\Phi (p_1,p_2,p_3) = 0.$
		
		\item[(ii)] If $p_1+p_2+p_3 = 0$ and $p_i \geq 0$, then
		\begin{equation}
			\Phi (p_1,p_2,p_3)  = (-\rho_{j k})^{p_i} \sum_{n=0}^{p_i} {p_i \choose n} \rho_{i k}^{-n} \rho_{i j}^{-p_i+n} 
			\varphi_{j k}^{|p_j + n|},
			\label{eq:moments}
		\end{equation}
		where 
		\begin{equation}\label{theVARPHI}
		    \varphi_{j k}= \min \{ |\phi_{j k}| , |\phi_{j k}|^{-1} \} \phi_{j k} / |\phi_{j k}|
		\end{equation} and $\phi_{j k}$ is given by  \eqref{thephi}.
{
\item[(iii)] If $p_1+p_2+p_3=0$ and $p_i \geq 0$ and
\begin{itemize}
    \item[-] if $p_j \geq 0$, then the trigonometric moment simplifies to
$$
\Phi (p_1,p_2,p_3) = \varphi_{jk}^{p_j} \left\{ - \rho_{jk} \left( \frac{\varphi_{jk}}{\rho_{i k}} + \frac{1}{\rho_{ij}} \right) \right\}^{p_i};
$$
\item[-] if $p_j \leq -p_i$, then 
\begin{equation}
\Phi (p_1,p_2,p_3) = \varphi_{jk}^{-p_j} \left\{ - \rho_{jk} \left( \frac{1}{\varphi_{jk} \rho_{ik}} + \frac{1}{\rho_{ij}} \right) \right\}^{p_i}.  \label{eq:tm_remark}
\end{equation}
\end{itemize}
  
		\item[(iv)] If $p_1+p_2+p_3 = 0$ and $p_i <0 $, then
		$ \Phi (p_1,p_2,p_3) = \Phi (-p_1, -p_2, -p_3).$
  }
 
 
\item[(v)] If $p_1+p_2+p_3=0$ and $p_i=0$, the trigonometric moment is given by 
$$\Phi (p_1,p_2,p_3) = \varphi_{j k}^{|p_j|}.$$
\end{itemize}}
	\end{thm}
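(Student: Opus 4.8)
The plan is to exploit the rotational and reflection symmetries of the density together with the conditional structure already established in Theorems~\ref{thm:marginals} and~\ref{thm:conditionals}. For part~(i), I would use that the density \eqref{eq:tri_density} depends on $\ub$ only through the pairwise differences $u_i-u_j$, so it is invariant under the simultaneous shift $u_\ell\mapsto u_\ell+c$. Consequently $\Phi(p_1,p_2,p_3)=e^{{\rm i}c(p_1+p_2+p_3)}\Phi(p_1,p_2,p_3)$ for every $c\in\R$, which forces $\Phi(p_1,p_2,p_3)=0$ whenever $p_1+p_2+p_3\neq 0$.

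For the main case~(ii), where $i$ denotes the dominant index satisfying \eqref{conditions} and $p_i\geq 0$ with $p_1+p_2+p_3=0$, I would condition on $(U_j,U_k)$ and write
$$\Phi(p_1,p_2,p_3)=E\!\left[e^{{\rm i}(p_jU_j+p_kU_k)}\,E\!\left(e^{{\rm i}p_iU_i}\mid U_j,U_k\right)\right].$$
By Theorem~\ref{thm:conditionals}(iii) the inner conditional law is wrapped Cauchy with complex parameter $\phi_{i|jk}=-\rho_{jk}(\rho_{ik}^{-1}e^{{\rm i}u_j}+\rho_{ij}^{-1}e^{{\rm i}u_k})$. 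The dominance condition guarantees $\delta_{i|jk}=|\phi_{i|jk}|<1$ uniformly in $(u_j,u_k)$, since $|\phi_{i|jk}|\le|\rho_{jk}|(|\rho_{ij}|^{-1}+|\rho_{ik}|^{-1})=|\rho_{jk}|(|\rho_{ij}|+|\rho_{ik}|)/(|\rho_{ij}||\rho_{ik}|)<1$ by \eqref{conditions}, so the inner expectation equals $(\phi_{i|jk})^{p_i}$ with no case distinction. Expanding $(\rho_{ik}^{-1}e^{{\rm i}u_j}+\rho_{ij}^{-1}e^{{\rm i}u_k})^{p_i}$ by the binomial theorem turns $\Phi$ into a finite sum of terms $E[e^{{\rm i}(p_j+n)U_j}e^{{\rm i}(p_k+p_i-n)U_k}]$, and since $p_1+p_2+p_3=0$ one has $p_k+p_i-n=-(p_j+n)$, so each term collapses to $E[e^{{\rm i}(p_j+n)(U_j-U_k)}]$.

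Here I would invoke Theorem~\ref{thm:marginals}(i): the bivariate marginal of $(U_j,U_k)$ is of the form \eqref{eq:wc_wj}, a function of $u_j-u_k$ alone, so $U_j-U_k \pmod{2\pi}$ is a one-dimensional wrapped Cauchy variate with parameter $\phi_{jk}$, whose trigonometric moments are $E[e^{{\rm i}m(U_j-U_k)}]=\varphi_{jk}^{|m|}$, with $\varphi_{jk}$ the canonical parameter \eqref{theVARPHI} of modulus at most one (the $\min$ handling $|\phi_{jk}|>1$). Substituting $m=p_j+n$ recovers exactly \eqref{eq:moments}. Parts~(iii) and~(v) then follow by simplifying the binomial sum: if $p_j\ge 0$ (resp.\ $p_j\le -p_i$) the exponent $|p_j+n|$ equals $p_j+n$ (resp.\ $-p_j-n$) for every $0\le n\le p_i$, so the sum refactorizes via the binomial theorem into a single power, giving the stated closed forms, while $p_i=0$ leaves only the $n=0$ term. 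For part~(iv) I would note that the density is invariant under $\ub\mapsto-\ub$ (each $\cos(u_i-u_j)$ is even), whence $\Phi(-p_1,-p_2,-p_3)=E[e^{-{\rm i}\sum p_\ell U_\ell}]=\Phi(p_1,p_2,p_3)$, reducing the case $p_i<0$ to $-p_i>0$.

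The main obstacle, and the step that makes the whole argument work cleanly, is verifying that the conditional concentration $\delta_{i|jk}$ is strictly below one; this is exactly where the dominance hypothesis \eqref{conditions} (equivalently $|\phi_i|>|\phi_j|+|\phi_k|$) enters, and it is what allows the inner conditional moment to be written as a single power $\phi_{i|jk}^{p_i}$ without splitting into $\delta$ versus $\delta^{-1}$ regimes. The remaining delicacy is the bookkeeping of the absolute value $|p_j+n|$ when collapsing to the one-dimensional wrapped Cauchy moment, which is precisely what dictates the sub-cases of~(iii) and~(v).
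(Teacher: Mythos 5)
Your proposal is correct and follows essentially the same route as the paper's proof: integrate out the dominant component $U_i$ using the wrapped Cauchy conditional of Theorem~\ref{thm:conditionals}(iii) (whose concentration is below one precisely by the dominance condition), expand $\phi_{i|jk}^{p_i}$ binomially, and evaluate the resulting terms via the trigonometric moments $\varphi_{jk}^{|m|}$ of the bivariate wrapped Cauchy marginal, with (iii)--(v) obtained by resumming and (iv) by the reflection symmetry of the density. The only cosmetic difference is part~(i), where you use the shift-invariance argument rather than reading the vanishing off the explicit integral; this is in fact the argument the paper uses for the general multivariate statement in Theorem~\ref{thm:multivariate}(ii).
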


\begin{proof}
Without loss of generality, assume that $(i,j,k)=(1,2,3)$, namely, $|\rho_{23}| < |\rho_{12} \rho_{13}|/(|\rho_{12}|+|\rho_{13}|).$

For convenience, transform the random vector $(U_1,U_2,U_3)'$ into complex form $(Z_1,Z_2,Z_3)' = (e^{{\rm i}U_1},e^{{\rm i} U_2}, e^{{\rm i} U_3})'$ which has the density (\ref{eq:tri_density_c}).
It follows from Theorems \ref{thm:marginals}(i) and \ref{thm:conditionals}(iii) that the density of $(Z_1,Z_2,Z_3)'$ can be expressed as
$$
tc(\zb ; \rhob) = \frac{1}{(2\pi)^3} \frac{|1-|\phi_{23} |^2|}{|z_2 \overline{z}_3 - \phi_{23} |^2} \frac{ \left| 1-|( \phi_2 z_2 + \phi_3 z_3)/\phi_1|^2 \right| }{|z_1 + ( \phi_2 z_2 + \phi_3 z_3)/\phi_1|^2}.
$$
Then the trigonometric moments can be calculated as
\begin{align}
	\Phi (p_1,p_2,p_3) = \ & \int_{\Omega^2} \frac{z_2^{p_2} z_3^{p_3}}{(2\pi)^2} \frac{ \left| 1-|\phi_{23}|^2 \right|}{|z_2 \overline{z}_3 - \phi_{23}|^2} \int_{\Omega} z_1^{p_1} \frac{1}{2\pi} \frac{ \left| 1-|( \phi_2 z_2 + \phi_3 z_3)/\phi_1|^2 \right| }{|z_1 + ( \phi_2 z_2 + \phi_3 z_3)/\phi_1|^2} dz_1 dz_2 dz_3 \nonumber \\
	= \ & \int_{\Omega^2} \frac{z_2^{p_2} z_3^{p_3}}{(2\pi)^2} \frac{\left| 1-|\phi_{23}|^2 \right|}{|z_2 \overline{z}_3 - \phi_{23}|^2} \left( - \frac{\phi_2 z_2 + \phi_3 z_3}{\phi_1} \right)^{p_1} dz_2 dz_3 \nonumber \\
	\begin{split}  \label{eq:tm}
	= \ & \frac{1}{(2\pi)^2} \sum_{n=0}^{p_1} {p_1 \choose n} \left( - \frac{\phi_2}{\phi_1} \right)^n \left( - \frac{\phi_3}{\phi_1} \right)^{p_1-n} \\
	& \ \times \int_{\Omega} z_2^{p_2+n} z_3^{p_3+p_1-n} \frac{\left| 1-|\phi_{23}|^2 \right|}{|z_2 \overline{z}_3 - \phi_{23}|^2} dz_2 dz_3.
	\end{split}   
\end{align}
The second equality follows from Section 1.4 of \cite{suppMCC96} and
$$
\left| \frac{-(\phi_2 z_2 + \phi_3 z_3)}{\phi_1} \right| \leq \frac{|\phi_2|+|\phi_3|}{|\phi_1|}  = |\rho_{23}| \frac{|\rho_{12}| + |\rho_{13}|}{|\rho_{12} \rho_{13}|} < 1.
$$
The integration in (\ref{eq:tm}) can be calculated using equation (4.3) of \cite{suppK09} and equation (\ref{eq:cauchy_equiv}) as
$$
	\frac{1}{(2\pi)^2}\int_{\Omega} z_2^{p_2+n} z_3^{p_3+p_1-n} \frac{\left| 1-|\phi_{23}|^2 \right|}{|z_2 \overline{z}_3 - \phi_{23}|^2} dz_2 dz_3 = \left\{
	\begin{array}{ll}
	\varphi_{23}^{|p_2+n|}, & p_1 + p_2 + p_3 =0 ,\\
	0, & p_1+p_2+p_3 \neq 0,
	\end{array}
	\right.
$$
where $\varphi_{23}=\min \{ |\phi_{23}|,|\phi_{23}|^{-1} \} \phi_{23}/|\phi_{23}|$.
Then it follows that $\Phi (p_1,p_2,p_3) = 0 $ if $p_1+p_2+p_3 \neq 0$ and therefore (i) is proved.
If $p_1+p_2+p_3 =0$, we have
\begin{align*}
	\Phi (p_1,p_2,p_3) & = \sum_{n=0}^{p_1} {p_1 \choose n} \left( - \frac{\phi_2}{\phi_1} \right)^n \left( -\frac{\phi_3}{\phi_1} \right)^{p_1-n} \varphi_{23}^{|p_2+n|}  \\
	 & = (-\rho_{23})^{p_1} \sum_{n=0}^{p_1} {p_1 \choose n} \rho_{13}^{-n} \rho_{12}^{-p_1+n} 
	 \varphi_{23}^{|p_2 + n|}
\end{align*}
as required in (ii).
The second equality follows from the equation $-\phi_i/\phi_1 = - \mbox{sgn}(\rho_{1 j})|\rho_{23}|$ $/ \{ \mbox{sgn}(\rho_{23}) |\rho_{1 j}| \} = - \rho_{23}/\rho_{1j} $, where $i,j=2,3$, $i \neq j$.
{
In particular, if the additional assumption $p_2\geq 0$ holds, the binomial theorem implies
\begin{align*}
\lefteqn{\Phi (p_1,p_2,p_3) } \hspace{0.7cm} \\
& = \sum_{n=0}^{p_1} {p_1 \choose n} \left( - \frac{\phi_2}{\phi_1} \right)^n \left( -\frac{\phi_3}{\phi_1} \right)^{p_1-n} \varphi_{23}^{p_2+n} \\
& = \varphi_{23}^{p_2} (-\rho_{23})^{p_1} \sum_{n=0}^{p_1} {p_1 \choose n} \left( \frac{\varphi_{23}}{\rho_{13}}\right)^n \left( \frac{1}{\rho_{12}} \right)^{p_1-n}  =  \varphi_{23}^{p_2} \left\{ - \rho_{23} \left( \frac{\varphi_{23}}{\rho_{13}} + \frac{1}{\rho_{12}} \right) \right\}^{p_1}.
\end{align*}
Similarly, for $p_2 \leq -p_1$, we have 
\begin{align*}
\Phi (p_1,p_2,p_3) & = \sum_{n=0}^{p_1} {p_1 \choose n} \left( - \frac{\phi_2}{\phi_1} \right)^n \left( -\frac{\phi_3}{\phi_1} \right)^{p_1-n} \varphi_{23}^{-(p_2+n)} \\
& = \varphi_{23}^{-p_2} \left\{ - \rho_{23} \left( \frac{1}{\varphi_{23} \rho_{13}} + \frac{1}{\rho_{12}} \right) \right\}^{p_1},
\end{align*}
which gives (iii).
Item (iv) of the theorem holds because 
\begin{align*}
\Phi (-p_1,-p_2,-p_3) & = E [ e^{-{\rm i}(p_1 U_1 + p_2 U_2 + p_3 U_3)}] = \overline{E [ e^{{\rm i}(p_1 U_1 + p_2 U_2 + p_3 U_3)} ]} \\
& = E [ e^{{\rm i}(p_1 U_1 + p_2 U_2 + p_3 U_3)} ] = \Phi(p_1,p_2,p_3).
\end{align*}
Finally, it is straightforward to show (v) by substituting $p_1=0$ in (ii).
}
\end{proof}

\cite{suppKP15} obtained the expression for the trigonometric moments of the bivariate wrapped Cauchy distribution using the residue theorem.
Following the approach given in the proof of Theorem 2 of their paper, it is also possible to express the trigonometric moments in the setting (ii) using the residues.
However, for our model, the expression (\ref{eq:moments}) seems more practical because this expression does not involve the calculation of derivatives which appear in the residues.

{We} note that Theorem \ref{thm:moments}(i) is not specific to the {TWCC} distribution and can be generalized to a more general multivariate family.
For details, see Theorem \ref{thm:multivariate}(ii).
 We will come back to these results when we deal with the parameter estimation by the method of moments in Section~\ref{sec:mom}.

\section{Other properties of the TWCC density and contour plots} \label{sec:interpretation}
{
Here we present some other properties of the {TWCC} density including the limiting cases, display the contour plots, and discuss the interpretation of the parameters.
{
We first consider limiting cases of the parameter values of {TWCC($\boldsymbol{\rho}$)} with the  constraint on  $\rho_{23}$, meaning
\begin{equation}\label{positivity}
|\rho_{23}| < \frac{|\rho_{12} \rho_{13}|}{ |\rho_{12}| + |\rho_{13}|}.
\end{equation}
Note that this order of the indices is selected without any loss of generality and can be permuted.
Under this assumption, the two cases of the parameters, namely, with and without the constraint $\rho_{12}\rho_{13}\rho_{23}=1$, are discussed.

\begin{prop} \label{prop:limiting_without}
Suppose that the positivity constraint (\ref{positivity}) holds and that the constraint $\rho_{12} \rho_{13} \rho_{23} =1$ does not hold in general.
Then the following limiting results hold for {TWCC($\boldsymbol{\rho}$)}:
\begin{enumerate}
\item[(i)] As $\rho_{23} \rightarrow 0$, {TWCC($\rhob$)} converges to the uniform distribution on the torus $[0,2\pi)^3$.
\item[(ii)] As $|\rho_{12}| \rightarrow \infty$, {TWCC($\rhob$)} converges to the distribution with density
\begin{equation}
t(\ub; \rhob) = \frac{1}{(2\pi)^3} \frac{1-\rho_{23.13}^2}{1 + \rho_{23.13}^2 + 2 \rho_{23.13} \cos (u_1 - u_2) }, \quad 0 \leq u_1,u_2 \leq 2\pi, \label{eq:limit}
\end{equation}
where 
{$\rho_{23.13} = \rho_{23}/\rho_{13}$ satisfies $|\rho_{23.13}| < 1$}.
Therefore the bivariate vector $(U_1, U_2)'$ of this limiting distribution is independent of $U_3$ (which is uniformly distributed) and follows the bivariate wrapped Cauchy copula (\ref{eq:wc_wj}) labelled BWC(0,$\rho_{23.13}$).
\end{enumerate}
\end{prop}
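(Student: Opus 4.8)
The plan is to prove both limits by extracting the leading-order behaviour, in the relevant parameter, of the three ingredients of the density (\ref{eq:tri_density}): the constant $c_1$ from (\ref{eq:c1}), the normalizing constant $c_2$ from (\ref{eq:c2}), and the bracketed cosine term; the limit of the density is then obtained as the limit of the ratio $c_2/\{\text{denominator}\}$, pointwise in $\ub$. I emphasize at the outset that, although the identifiability constraint $\rho_{12}\rho_{13}\rho_{23}=1$ is dropped here, the sign constraint $\rho_{12}\rho_{13}\rho_{23}>0$ and the positivity constraint (\ref{positivity}) remain in force and are exactly what resolves the sign bookkeeping below.

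For part (i), as $\rho_{23}\to 0$ I would observe that $\rho_{12}\rho_{13}/\rho_{23}$ dominates all other contributions to $c_1$; by $\rho_{12}\rho_{13}\rho_{23}>0$ this quantity is positive and diverges. Inside the braces of $c_2$ the square $(\rho_{12}\rho_{13}/\rho_{23})^2$ dominates likewise, so $c_2\sim(2\pi)^{-3}\,\rho_{12}\rho_{13}/\rho_{23}$, while the bracketed cosine term stays bounded. Hence both the numerator $c_2$ and the denominator of $t(\ub;\rhob)$ are asymptotic to $(2\pi)^{-3}\,\rho_{12}\rho_{13}/\rho_{23}$ and $\rho_{12}\rho_{13}/\rho_{23}$ respectively, so their ratio tends to $(2\pi)^{-3}$, the uniform density on $[0,2\pi)^3$.

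For part (ii), as $|\rho_{12}|\to\infty$ I would separate the $O(\rho_{12})$ contributions: in $c_1$ these give $\rho_{12}(\rho_{13}/\rho_{23}+\rho_{23}/\rho_{13})$, in the bracketed term they give $2\rho_{12}\cos(u_1-u_2)$, and the remaining pieces (the $\rho_{13}\rho_{23}/\rho_{12}$ term and the $u_3$-dependent cosines) are of lower order. The key computation is that inside the braces of $c_2$ the coefficient of $\rho_{12}^2$ is $(\rho_{13}/\rho_{23})^2+(\rho_{23}/\rho_{13})^2-2=(\rho_{13}/\rho_{23}-\rho_{23}/\rho_{13})^2$, whence $c_2\sim(2\pi)^{-3}|\rho_{12}|\,|\rho_{13}/\rho_{23}-\rho_{23}/\rho_{13}|$. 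Writing $r=\rho_{23.13}=\rho_{23}/\rho_{13}$ and forming the ratio, the leading factors of $|\rho_{12}|$ and $\rho_{12}$ leave the scalar $(2\pi)^{-3}\,\mathrm{sgn}(\rho_{12})\,|1/r-r|\,/\,(1/r+r+2\cos(u_1-u_2))$.

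The delicate point, which I expect to be the main obstacle, is showing that the residual sign factors cancel to yield the manifestly nonnegative density (\ref{eq:limit}). Rewriting $|1/r-r|=|1-r^2|/|r|$ and $1/r+r+2\cos(u_1-u_2)=(1+r^2+2r\cos(u_1-u_2))/r$, the ratio carries the sign factor $\mathrm{sgn}(\rho_{12})\,r/|r|=\mathrm{sgn}(\rho_{12})\,\mathrm{sgn}(r)=\mathrm{sgn}(\rho_{12}\rho_{13}\rho_{23})=+1$, using $\mathrm{sgn}(r)=\mathrm{sgn}(\rho_{13}\rho_{23})$. Moreover, as $|\rho_{12}|\to\infty$ the positivity constraint (\ref{positivity}) collapses to $|\rho_{23}|<|\rho_{13}|$, i.e.\ $|r|<1$, so $|1-r^2|=1-r^2$. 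These two facts turn the limit into exactly $(2\pi)^{-3}(1-r^2)/(1+r^2+2r\cos(u_1-u_2))$, which is (\ref{eq:limit}). Finally, since this limit does not involve $u_3$, it factorizes as a circular uniform density in $u_3$ times a density of the bivariate wrapped Cauchy copula form (\ref{eq:wc_wj}) in $(u_1,u_2)$, giving the asserted independence of $U_3$ from $(U_1,U_2)'$.
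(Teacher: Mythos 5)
Your proof is correct and follows essentially the same route as the paper: both compute the pointwise limit of the ratio $c_2/\{\cdot\}$ by isolating the dominant terms in $c_1$, $c_2$ and the cosine bracket. The paper merely organizes the computation differently, multiplying numerator and denominator by the positive factor $\rho_{23}/(\rho_{12}\rho_{13})$ to obtain an exact expression in $\rho_{23.13}$ and $\rho_{23}/\rho_{12}$ from which both limits are read off directly --- this makes the sign cancellation that you handle explicitly via $\mathrm{sgn}(\rho_{12})\,\mathrm{sgn}(\rho_{23}/\rho_{13})=\mathrm{sgn}(\rho_{12}\rho_{13}\rho_{23})=+1$ and the bound $|\rho_{23.13}|<1$ automatic, but your leading-order analysis with explicit sign bookkeeping is equivalent.
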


\begin{proof}
Multiplying $\rho_{23}/(\rho_{12} \rho_{13})$ to both the numerator and denominator of the {TWCC} density (\ref{eq:tri_density}), we obtain
\begin{align}
\lefteqn{ t(\ub; \rhob) } \hspace{0.6cm} \nonumber \\
= \ &  \frac{1}{(2\pi)^3} \left[ 1 + \rho_{23.13}^4 + \left( \frac{\rho_{23}}{\rho_{12}} \right)^4 - 2 \rho_{23.13}^2 - 2 \left( \frac{\rho_{23}}{\rho_{12}} \right)^2 - 2 \left( \rho_{23.13} \cdot\frac{\rho_{23}}{\rho_{12}} \right)^2 \right]^{1/2} \nonumber \\
 & \Bigg/ \Biggl[ 1 + \rho_{23.13}^2 + \left( \frac{\rho_{23}}{\rho_{12}}\right)^2 + 2 \biggl\{ \rho_{23.13} \cos (u_1-u_2) + \frac{\rho_{23}}{\rho_{12}} \cos (u_1-u_3) \nonumber \\
 & \quad \quad + \rho_{23.13} \cdot \frac{\rho_{23}}{\rho_{12}} \cos (u_2-u_3) \biggr\} \Biggr]. \label{eq:limiting_rho}
\end{align}
From this expression, it is straightforward to see that
$ t(\ub; \rhob) \rightarrow 1/(2\pi)^3 $ as $\rho_{23} \rightarrow 0$, proving (i), and that
$$
t(\ub; \rhob) \rightarrow \frac{1}{(2\pi)^3} \frac{1-\rho_{23.13}^2}{1 + \rho_{23.13}^2 + 2 \rho_{23.13} \cos (u_1 - u_2) } \quad \mbox{as} \quad |\rho_{12}| \rightarrow \infty,
$$
proving (\ref{eq:limit}).
Since the parameter constraint (\ref{positivity}) implies $|\rho_{23}| < |\rho_{13}|$, we have $|\rho_{23.13}| <1$.
Finally, it follows from the form of the limiting density (\ref{eq:limit}) that $U_3$ is independent of $(U_1,U_2)'$ and follows the uniform distribution on the circle and that the marginal of $(U_1,U_2)'$ is the bivariate wrapped Cauchy copula BWC(0,$\rho_{23.13}$).
\end{proof}


{
Proposition \ref{prop:limiting_without}(ii) implies that $\rho_{23.13}$, which is referred to as a partial dependence parameter, can be interpreted as the parameter that controls the strength of dependence between $U_1$ and $U_2$ when $|\rho_{12}|$ is large.
To be more specific, if $|\rho_{23.13}| \simeq 1$, then $U_1$ and $U_2$ have a strong dependence.
In particular, for any $\varepsilon >0$ and a fixed sign of $\rho_{23.13}$, it holds that $ \lim_{|\rho_{13}| \rightarrow |\rho_{23}| +0} P(|U_i-U_j - \pi I(\rho_{23.13} >0)| < \varepsilon) \rightarrow 1$.
On the other hand, if $|\rho_{23.13}| \ll 1 $, then the dependence between $U_1$ and $U_2$ becomes weak.
The positive and negative values of $\rho_{23.13}$ imply the modes at $u_1=u_2+\pi$ and at $u_1=u_2$, respectively; see \cite{suppK09}. 

Let us define 
$$
T_1=U_1-U_2, \quad T_2=U_2, \quad T_3=U_3,
$$
then it can be seen that $T_1$  is univariate wrapped Cauchy which is independent of   $T_2$ and $T_3$ which are independently uniformly distributed. Now as $\rho_{23.13} \rightarrow 1$, $T_1$  tends to univariate Cauchy on the real line.
This can be seen from the fact that with $t_1= \varepsilon v/ \{ 2 (1-\varepsilon)^{1/2} \}$, $\rho_{23.13}=1-\varepsilon$ for small $\varepsilon >0$,
\begin{align*}
f(t_1) & = \frac{1}{2\pi} \frac{1-\rho_{23.13}^2}{1+\rho_{23.13}^2-2 \rho_{23.13} \cos (t_1)} \simeq \frac{1}{2\pi} \frac{1-\rho_{23.13}^2}{ (1-\rho_{23.13})^2 +2 \rho_{23.13} \{ \varepsilon^2 v^2/ 4 (1-\varepsilon) \} /2} \\
 & = \frac{1}{\pi} \frac{1}{1 +v^2}. 
\end{align*}


{
Next, with the additional constraint $\rho_{12}\rho_{13}\rho_{23}=1$, the limiting cases of our model are more involved.

\begin{prop} \label{prop:limiting_with}
Let $\{ \rhob_n \}_{n=1}^{\infty} $ be a sequence of parameters of the {TWCC} density.
Assume that $ \rhob_n = (\rho_{12,n},\rho_{13,n},\rho_{23,n})'$ satisfies  $\rho_{12,n} \rho_{13,n} \rho_{23,n}=1,\ \rho_{12,n}=O(n^\iota),\ \rho_{13,n}=O(n^\kappa),$ $\rho_{23,n}=O(n^\lambda ),$ and $\iota \geq \kappa$.
Then the following hold for {TWCC($\rhob_n$)}:
\begin{enumerate}
    \item[(i)] If $\kappa > \lambda$, then {TWCC($\rhob_n$)} converges to the uniform distribution on the torus as $n \rightarrow \infty$.
\item[(ii)] Let $\rho_{12,n}=a^{-1} n, \ \rho_{13,n}=n^{-1/2}$ and $\rho_{23,n} = a n^{-1/2}$ $(0<a<(-1+\sqrt{5})/2) $.
Then $\kappa=\lambda$ holds and {TWCC($\rhob_n$)} converges to the distribution (\ref{eq:limit}) with $\rho_{23.13} = a$ as $n \rightarrow \infty$.
\end{enumerate}
\end{prop}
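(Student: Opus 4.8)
The plan is to build directly on the reparametrized form \eqref{eq:limiting_rho} of the density established in the proof of Proposition~\ref{prop:limiting_without}, since that expression depends on the parameters only through the two ratios $\rho_{23.13,n}=\rho_{23,n}/\rho_{13,n}$ and $s_n:=\rho_{23,n}/\rho_{12,n}$. Thus both parts reduce to tracking the limiting behaviour of these two ratios and substituting them into \eqref{eq:limiting_rho}. First I would record that the constraint $\rho_{12,n}\rho_{13,n}\rho_{23,n}=1$ together with the prescribed orders forces $\iota+\kappa+\lambda=0$, so that $\rho_{23.13,n}$ is of order $n^{\lambda-\kappa}$ and $s_n$ of order $n^{\lambda-\iota}$.

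For part~(i), under the hypotheses $\iota\ge\kappa>\lambda$ we have $\lambda-\kappa<0$ and $\lambda-\iota<0$, whence $\rho_{23.13,n}\to 0$ and $s_n\to 0$. Substituting these into \eqref{eq:limiting_rho}, the bracketed numerator tends to $[1-0]^{1/2}=1$ and the denominator tends to $1$, so $t(\ub;\rhob_n)\to (2\pi)^{-3}$ pointwise on $[0,2\pi)^3$. Since each $t(\cdot;\rhob_n)$ and the limit are densities on the compact torus, Scheff\'e's lemma upgrades this pointwise convergence to convergence in $L^1$, i.e.\ to the uniform distribution, as claimed.

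For part~(ii), I would first verify that the explicit choice is admissible. A direct computation gives $\rho_{12,n}\rho_{13,n}\rho_{23,n}=a^{-1}n\cdot n^{-1/2}\cdot a\, n^{-1/2}=1$ and $\iota=1$, $\kappa=\lambda=-1/2$, so $\kappa=\lambda$. The substantive admissibility check is that the positivity constraint \eqref{positivity} holds for \emph{every} $n$; reducing that inequality yields the equivalent requirement $a^2 n^{-3/2}<1-a$, whose worst case is $n=1$ and hence reduces to $a^2+a-1<0$, i.e.\ $a<(-1+\sqrt 5)/2$. This is exactly the stated range, so the sequence consists of genuine TWCC densities. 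Then $\rho_{23.13,n}=a\,n^{-1/2}/n^{-1/2}=a$ is constant while $s_n=a^2 n^{-3/2}\to 0$; inserting $\rho_{23.13,n}=a$ and $s_n\to 0$ into \eqref{eq:limiting_rho} collapses the numerator to $[(1-a^2)^2]^{1/2}=1-a^2$ (using $a<1$) and the denominator to $1+a^2+2a\cos(u_1-u_2)$, which is precisely \eqref{eq:limit} with $\rho_{23.13}=a$.

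The calculations themselves are routine once \eqref{eq:limiting_rho} is in hand; the only genuinely delicate points are (a) justifying in part~(i) that the order hypotheses really force both ratios to vanish, which relies on combining the big-$O$ orders with the exact product constraint rather than on the upper bounds alone, and (b) the uniform-in-$n$ verification of \eqref{positivity} in part~(ii), which is what pins down the golden-ratio threshold $(-1+\sqrt 5)/2$ on $a$.
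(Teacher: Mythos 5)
Your proposal is correct and follows essentially the same route as the paper: both arguments substitute $\rhob_n$ into the reparametrized density \eqref{eq:limiting_rho}, track the two ratios $\rho_{23,n}/\rho_{13,n}$ and $\rho_{23,n}/\rho_{12,n}$, and obtain the bound $a<(-1+\sqrt5)/2$ by noting that the constraint \eqref{positivity} is tightest at $n=1$. The only additions are cosmetic (the Scheff\'e upgrade to $L^1$ convergence and the slightly more explicit algebra), so there is nothing substantive to flag.
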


\begin{proof}

An alternative expression for the density of {TWCC($\rhob_n$)} can be obtained by substituting $\rhob=\rhob_n$ in (\ref{eq:limiting_rho}).
This expression, along with $\rho_{23,n}/\rho_{13,n} =O(n^{\lambda-\kappa})$ and $\rho_{23,n}/\rho_{12,n}=O(n^{\lambda-\iota}) $, implies that, under the assumption in (i), namely, $\kappa > \lambda$, $t(\ub;\rhob) \rightarrow 1/(2\pi)^3$ as $n \rightarrow \infty$.
If the assumptions in (ii) holds, then we have $\rho_{23,n}/\rho_{13,n} = a$ and $\rho_{23,n}/\rho_{12,n}= a^2 n^{-3/2}$, and hence
$$
t(\ub;\rhob) \rightarrow \frac{1}{(2 \pi)^3} \frac{1-a^2}{1+a^2+2 a \cos (u_1-u_2)} \quad \mbox{as} \quad n \rightarrow \infty.
$$
This limiting distribution is the distribution (\ref{eq:limit}) with $\rho_{23.13}=a$.

Note that for $a>0$, the constraint $a<(-1+\sqrt{5})/2$ is necessary to ensure the inequality (\ref{positivity}) for any $n$.
This can be seen by first noticing that the inequality (\ref{positivity}) implies 
$$
\frac{1}{1+a n^{-3/2}} \cdot \frac{1}{a} \geq 1.
$$
Since the left-hand side of this inequality is minimized for $n=1$, it follows that $ 1 / \{(1+a)a \} \geq 1$.
Thus we have $a<(-1+\sqrt{5})/2$.
\end{proof}

Proposition \ref{prop:limiting_with}(ii) implies that if $\kappa=\lambda$, the convergence of {TWCC($\rhob_n$)} as $n \rightarrow \infty$ differs depending on the specific forms of $\rhob_n$.


}
}

Now we present some more properties of the {TWCC} density.
{Their proofs are straightforward and omitted.}

\begin{prop} \label{prop:others}
Let $t(\ub ; \rhob)$ be the density (\ref{eq:tri_density}).
Then it can be shown that it has the following properties:
\begin{enumerate}
 \item[(i)] $
        \!
        \begin{aligned}[t]
        t(u_1,u_2,u_3; \rho_{12}, \rho_{13},\rho_{23}) & = t(u_1+\pi, u_2 ,u_3; -\rho_{12},-\rho_{13},\rho_{23}) \\
		& = t(u_1, u_2+\pi, u_3; -\rho_{12},\rho_{13},-\rho_{23}) \\
		& = t(u_1, u_2, u_3+\pi; \rho_{12},-\rho_{13},-\rho_{23}),
        \end{aligned}
        $
	\item[(ii)] $t(\ub; \rhob) = -t(\ub; -\rhob) $,
	\item[(iii)] {for $\ub \in[-2\pi,2\pi)^3$,} $t(-\ub; \rhob) = t(\ub; \rhob)$.
\end{enumerate}
\end{prop}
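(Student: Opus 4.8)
The plan is to verify all three identities by tracking how the three ingredients of the density \eqref{eq:tri_density} transform under each operation: the normalizing constant $c_2$ from \eqref{eq:c2}, the constant $c_1$ from \eqref{eq:c1}, and the cosine-weighted sum $S(\ub;\rhob) = \rho_{12}\cos(u_1-u_2) + \rho_{13}\cos(u_1-u_3) + \rho_{23}\cos(u_2-u_3)$. Since $t(\ub;\rhob) = c_2\,[c_1 + 2 S(\ub;\rhob)]^{-1}$, it suffices in each case to record the effect of the transformation on $c_1$, $c_2$, and $S$ separately and then recombine. The single observation that drives everything is that $c_2$ involves the parameters only through even powers, so it is invariant under any sign change of the $\rho_{ij}$, whereas each summand of $c_1$ is a triple ratio containing all three parameters exactly once.

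For property (i), first consider the replacement $u_1 \mapsto u_1 + \pi$ together with $(\rho_{12},\rho_{13},\rho_{23}) \mapsto (-\rho_{12},-\rho_{13},\rho_{23})$. The shift sends $\cos(u_1-u_2) \mapsto -\cos(u_1-u_2)$ and $\cos(u_1-u_3) \mapsto -\cos(u_1-u_3)$ while fixing $\cos(u_2-u_3)$, and the accompanying sign flips of $\rho_{12}$ and $\rho_{13}$ restore the first two summands of $S$, so $S$ is unchanged. Because each term of $c_1$ contains all three parameters, negating exactly two of them produces exactly two sign flips per term and leaves $c_1$ invariant; and $c_2$ is invariant by the even-power remark. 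Hence the density is unchanged. The remaining two equalities follow in exactly the same way by instead shifting $u_2$ or $u_3$ and flipping the corresponding pair of parameters (namely $\rho_{12},\rho_{23}$ for the $u_2$-shift and $\rho_{13},\rho_{23}$ for the $u_3$-shift).

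For property (ii), apply $\rhob \mapsto -\rhob$ with $\ub$ held fixed. Since $S$ is linear in the parameters, $S \mapsto -S$; since each term of $c_1$ now receives three sign flips, $c_1 \mapsto -c_1$; and $c_2$ is unchanged. Therefore $t(\ub;-\rhob) = c_2\,[-(c_1+2S)]^{-1} = -t(\ub;\rhob)$, as claimed. I would note explicitly that $-\rhob$ violates the constraint $\rho_{12}\rho_{13}\rho_{23}>0$, so this is a formal identity for the functional expression rather than a relation between two genuine densities. Finally, property (iii) is immediate from the evenness of cosine: replacing $\ub$ by $-\ub$ sends each $\cos(u_i-u_j)$ to $\cos(u_j-u_i)=\cos(u_i-u_j)$, so $S$ and hence the whole density are unchanged.

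There is no genuine obstacle here; the only point requiring attention is the sign bookkeeping in $c_1$ and $c_2$. Once one records that every summand of $c_1$ contains all three parameters (so it inherits the product of the sign changes applied) and that $c_2$ depends on the parameters purely through even powers, each assertion collapses to counting sign flips, which is why the authors can safely label these proofs straightforward.
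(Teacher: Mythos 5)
Your proof is correct and is precisely the ``straightforward'' sign-bookkeeping verification that the paper omits: $c_2$ depends on the parameters only through even powers, each summand of $c_1$ contains all three parameters exactly once, and the cosine shifts by $\pi$ supply the compensating signs. Your added remark that $-\rhob$ violates $\rho_{12}\rho_{13}\rho_{23}>0$, so that (ii) is a formal identity rather than a relation between two densities, matches the paper's own commentary following the proposition.
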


The property (i) of Proposition~\ref{prop:others} implies that changing the signs of two parameters corresponds to the location shift of a variable. Property (ii) shows how crucial the condition $\rho_{12}\rho_{13}\rho_{23}>0$ (or \eqref{identcond}) actually is as it prevents from negative densities for other parameter combinations
Finally property (iii) implies that our proposed density is symmetric about its center. 


{

\begin{figure}[p]
	\begin{tabular}{cccc}
		\centering 
		\includegraphics[trim=0.7cm 1cm 0cm 0cm,clip,width=3cm,height=3cm]{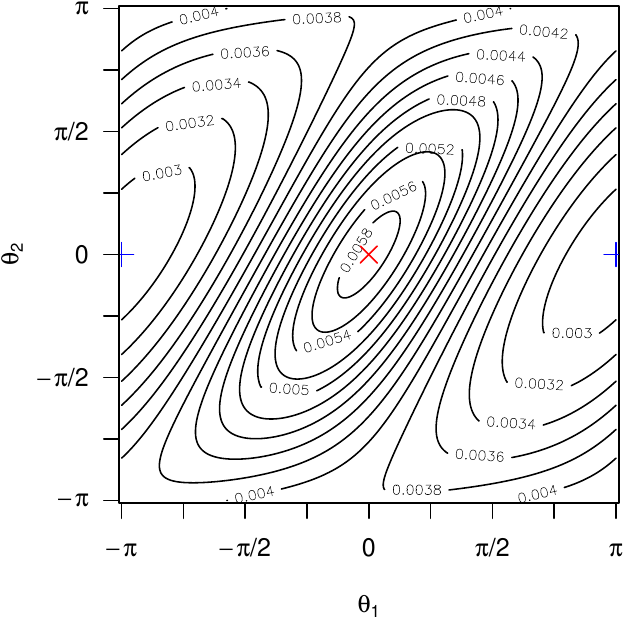} &
   \includegraphics[trim=0.7cm 1cm 0cm 0cm,clip,width=3cm,height=3cm]{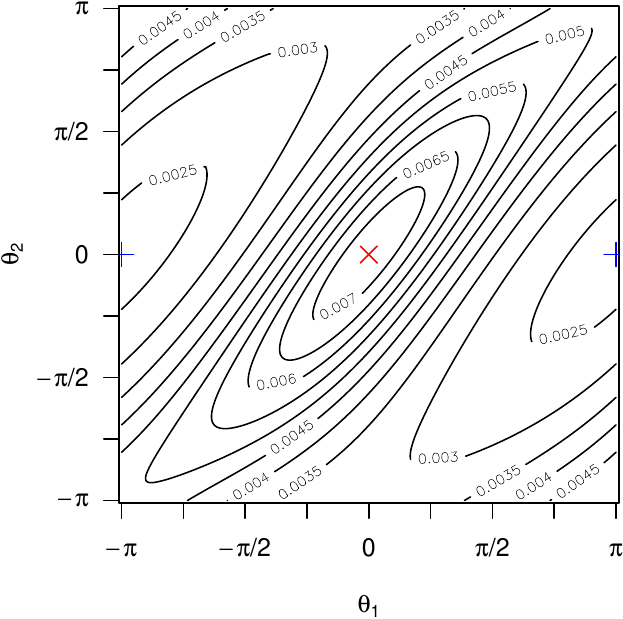} &
		\includegraphics[trim=0.7cm 1cm 0cm 0cm,clip,width=3cm,height=3cm]{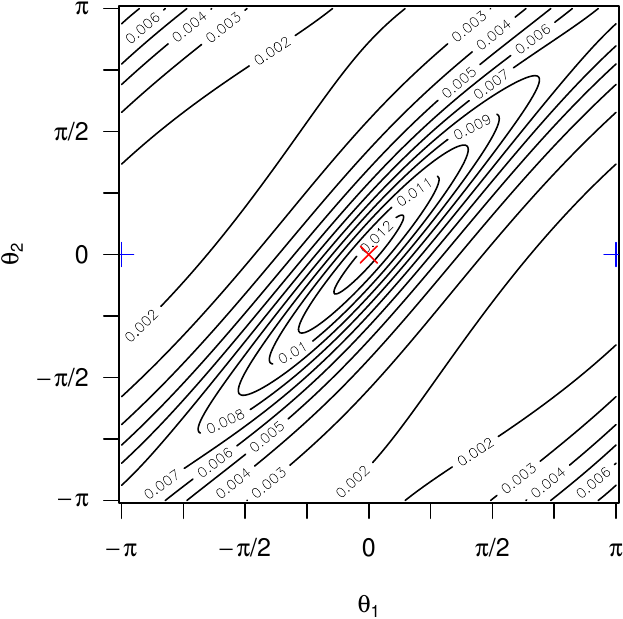} &
		\includegraphics[trim=0.7cm 1cm 0cm 0cm,clip,width=3cm,height=3cm]{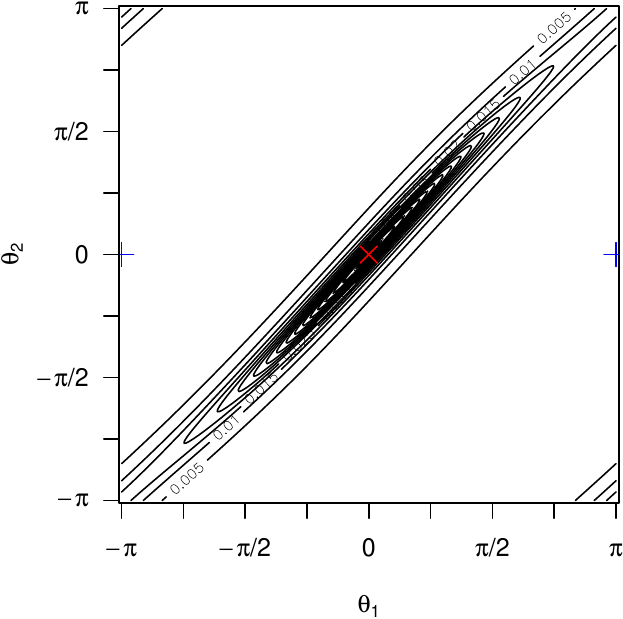}\\
		\hspace{0.3cm} (a) & \hspace{0.45cm}(b) & \hspace{0.45cm}(c) & \hspace{0.55cm}(d) \vspace{0.2cm}\\
			&
	\includegraphics[trim=0.7cm 1cm 0cm 0cm,clip,width=3cm,height=3cm]{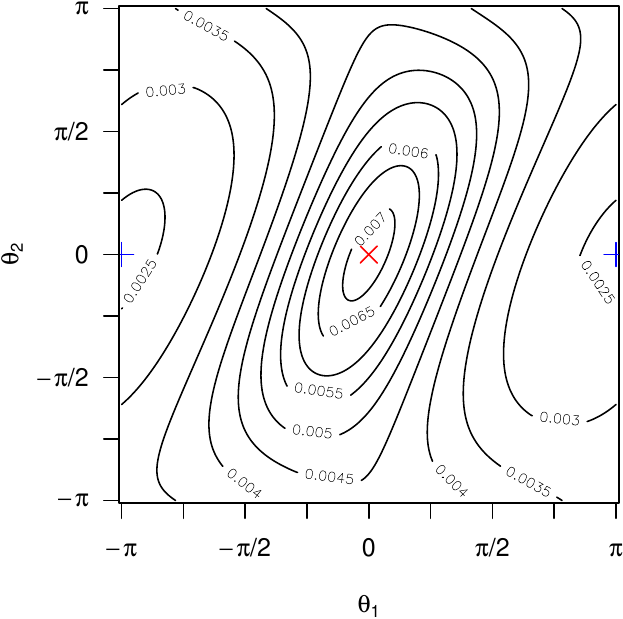} &
		\includegraphics[trim=0.7cm 1cm 0cm 0cm,clip,width=3cm,height=3cm]{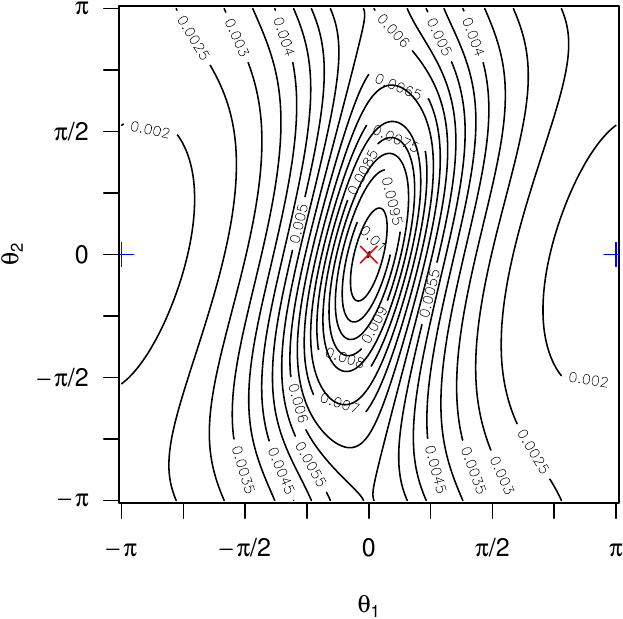} &
		\includegraphics[trim=0.7cm 1cm 0cm 0cm,clip,width=3cm,height=3cm]{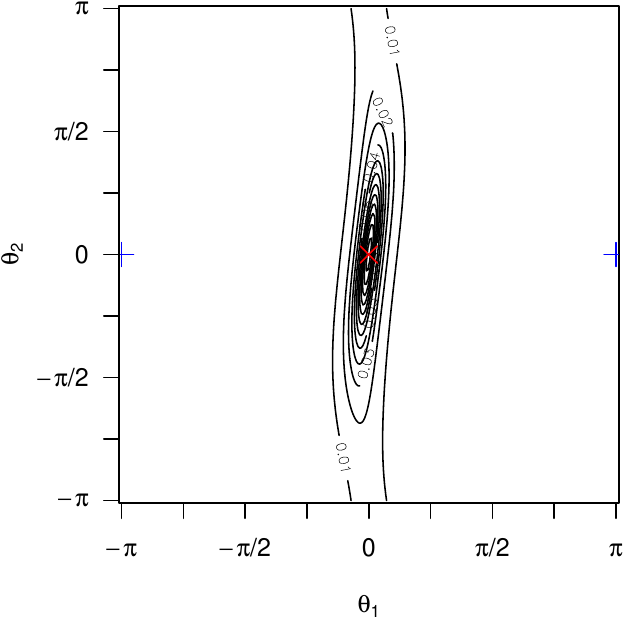} \\	
		\hspace{0.5cm} & \hspace{0.45cm}(e) & \hspace{0.45cm}(f) & \hspace{0.55cm}(g) \vspace{0.2cm}\\
	
	&
	\includegraphics[trim=0.7cm 1cm 0cm 0cm,clip,width=3cm,height=3cm]{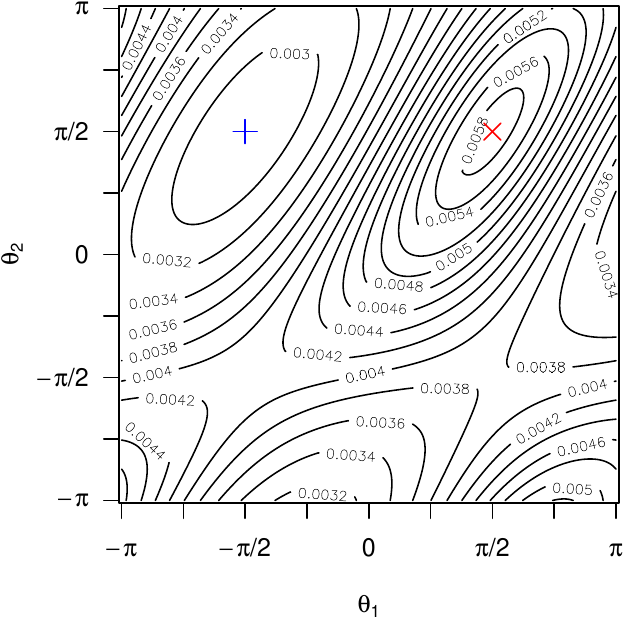} &
	\includegraphics[trim=0.7cm 1cm 0cm 0cm,clip,width=3cm,height=3cm]{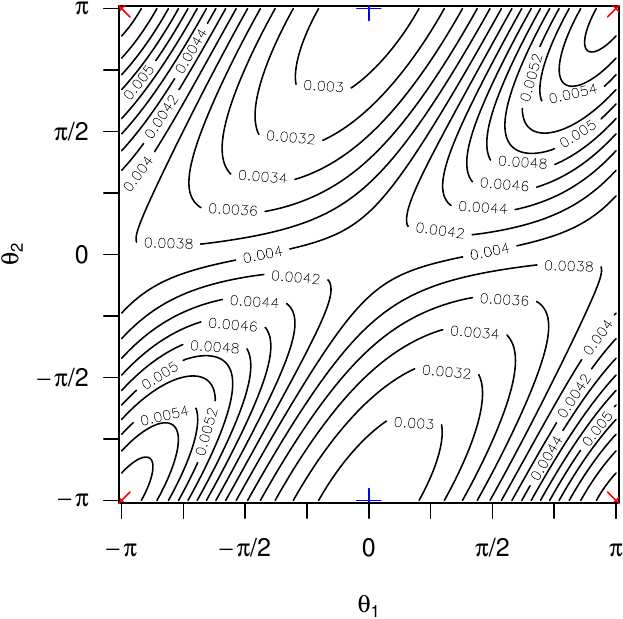} &
	\includegraphics[trim=0.7cm 1cm 0cm 0cm,clip,width=3cm,height=3cm]{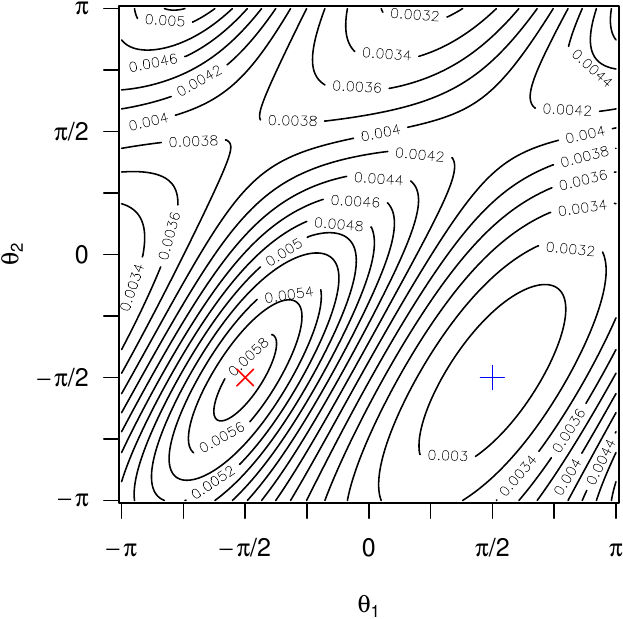} \\
	\hspace{0.35cm}  & \hspace{0.45cm}(h) & \hspace{0.45cm}(i) & \hspace{0.55cm}(j) \vspace{0.2cm}\\
\end{tabular}
\caption[]{{Contour plots of 
	the density of {TWCC($\boldsymbol{\rho}$)} with $\rho_{23}=0.2$ and: (a)--(d) $(-\rho_{12},u_3)=(2.5,0)$ and (a) $-\rho_{13}=2$, (b) $-\rho_{13}=1$ and (c) $-\rho_{13}=0.5$ and (d) $-\rho_{13}=0.25$, (e)--(g) $(-\rho_{13},u_3)=(2,0)$ and (e) $-\rho_{12}=1$, (f) $-\rho_{12}=0.5$ and (g) $-\rho_{12}=0.25$, and (h)--(j) $(-\rho_{12},-\rho_{13})=(2.5,2)$ and (h) $u_3=\pi/2$, (i) $u_3=\pi$ and (j) $u_3=3 \pi/2$.
	The $x$-axis represents the value of $u_1$, while the $y$-axis denotes the value of $u_2$.
 The symbols `$\times$' (red) and `$+$' (blue) denote the modes and antimodes of density (\ref{eq:tri_density}), respectively.\label{fig:densities} }}
\end{figure}

Figure \ref{fig:densities} shows the density of {TWCC($\boldsymbol{\rho}$)} for fixed values of the third component $u_3$
  and selected values of the parameters subject to condition (\ref{positivity}), but not generally subject to $\rho_{12} \rho_{13} \rho_{23} = 1$. 
{
The figure shows that the {TWCC($\rhob$)} density can have a wide range of dependence structure and strength of dependence.}
The plots (a)--(d) indicate that the dependence between $U_1$ and $U_2$ becomes strong when the parameter $|\rho_{13}|$ is close to $|\rho_{23}|$ or equivalently $|\rho_{23.13}|$ approaches 1. 
The comparison among the plots (a) and (e)--(g) suggests that the closer the value of $ |\rho_{12}| $ is to its boundary, namely, $ |\rho_{12}| \simeq |\rho_{13} \rho_{23}| / (|\rho_{13}| - |\rho_{23}|) \simeq 0.23 $, the greater the concentration of $ U_1$  for fixed values of $ u_3$.
Also, it can be seen from plots (a) and (h)--(j) that the values of $u_3$ control the location of $(u_1, u_2)'$ when the density {TWCC($\boldsymbol{\rho}$)} given by (\ref{eq:tri_density}) is viewed as a function of $(u_1, u_2)'$.
Finally, as visual confirmation of Theorem \ref{thm:modes}, the modes and antimodes of {TWCC($\boldsymbol{\rho}$)} in all the frames are at $u_1 = u_2 = u_3$ and $u_1 + \pi = u_2 = u_3$, respectively.
}

For a better interpretation of the parameters of the {TWCC} with the constraint $\rho_{12} \rho_{13} \rho_{23}=1$, it is advantageous to reparametrize the parameters in terms of reparametrized versions of $\rho_{12}$ and $\rho_{23.13}$ given by
\begin{equation}\label{fulldep}
\rho_{12}^* = \left\{ \frac{ 1-\left|\rho_{23.13}^*\right| }{\left|\rho_{23.13}^*\right|^{1/2}} \right\}^{2/3} \rho_{12}, \quad \rho_{23.13}^* = \frac{\rho_{23}}{|\rho_{13}|},
\end{equation}
respectively.
The ranges of these reparametrized parameters are $\left|\rho_{12}^*\right| > 1$ and $|\rho_{23.13}^*| > 0$.

Here $\rho_{12}^*$ is called the full dependence parameter.
Unlike $\rho_{12}$ itself, $\rho_{12}^*$ has a simpler parameter range and can be directly interpreted without considering the value of $\rho_{23.13}$ or $\rho_{23.13}^*$.
The parameter $\rho_{12}^*$ can be derived first by noting that $\rho_{23}= \rho_{23.13} \cdot \rho_{13} $ and the positivity constraint (\ref{positivity}) implies $|\rho_{13}| < \{ (1-|\rho_{23.13}|) / |\rho_{23.13}| \} |\rho_{12}| $.
In addition it follows from $\rho_{12} \rho_{13} \rho_{23}=1$ that $|\rho_{13}|=1/(\rho_{12} \cdot \rho_{23.13})^{1/2}$.
Summarizing these results, the range of $\rho_{12}$ can be expressed as $|\rho_{12}| > \{ |\rho_{23.13}|^{1/2} / (1-|\rho_{23.13}|) \}^{2/3} $.
Then, in order to achieve a simpler parameter range and maintain its original sign, $\rho_{12}$ can be reparameterized as (\ref{fulldep}).

The parameter $\rho_{23.13}$ is reparametrized as $\rho_{23.13}^*$ because the original signs of $\rho_{13}$ and $\rho_{23}$ are not uniquely determined by the sign of $\rho_{23.13}$ and the other parameter conditions.
For example, $\rho_{23.13} > 0$ implies that both $\rho_{13}, \rho_{23} > 0$ and $\rho_{13}, \rho_{23} < 0$ are possible, and the other parameter conditions do not eliminate either of these possibilities.
On the other hand, with the reparametrized parameters $\rho_{12}^*$ and $\rho_{23.13}^*$, there is a one-to-one correspondence between the parameter space of $(\rho_{12}, \rho_{13}, \rho_{23})$ and that of $(\rho_{12}^*, \rho_{23.13}^*)$.

\begin{figure}[p]
	\centering {\footnotesize
		\hspace{1.6cm} $\rho_{23.13}^*=0.1$ \hspace{1.3cm} $\rho_{23.13}^*=0.5$ \hspace{1.3cm} $\rho_{23.13}^*=0.9$ \vspace{0.2cm}\\
		\raisebox{1.7cm}{$\rho_{12}^*=-5$} 
		\includegraphics[trim=0.7cm 1cm 0cm 0cm,clip,width=3.1cm,height=3.1cm]{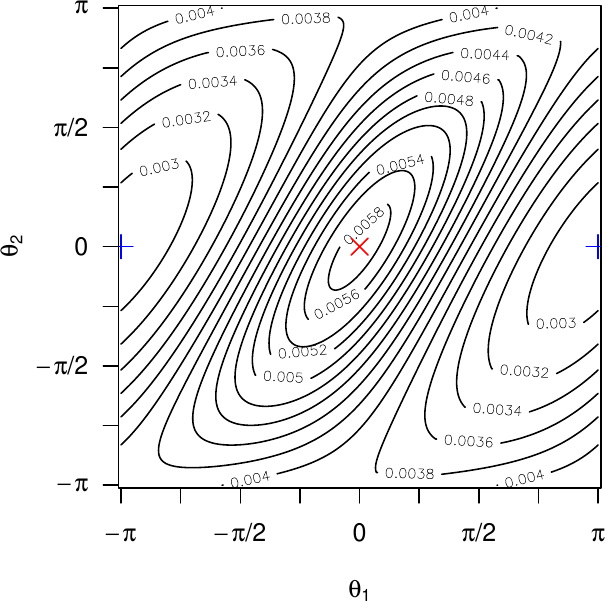}
		\includegraphics[trim=0.7cm 1cm 0cm 0cm,clip,width=3.1cm,height=3.1cm]{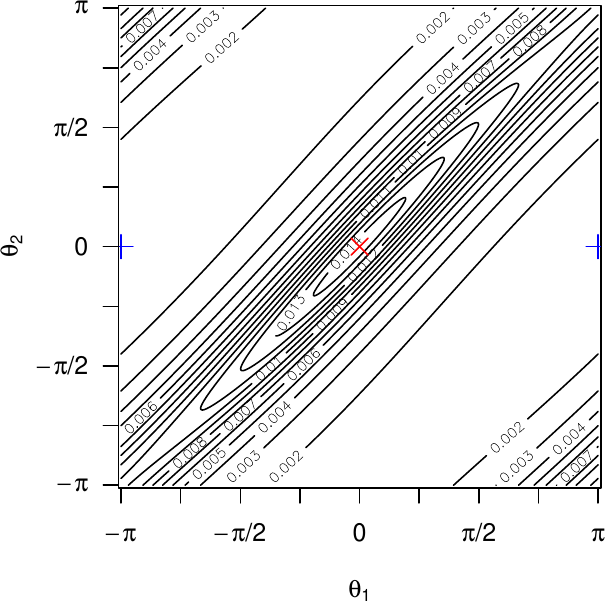}
		\includegraphics[trim=0.7cm 1cm 0cm 0cm,clip,width=3.1cm,height=3.1cm]{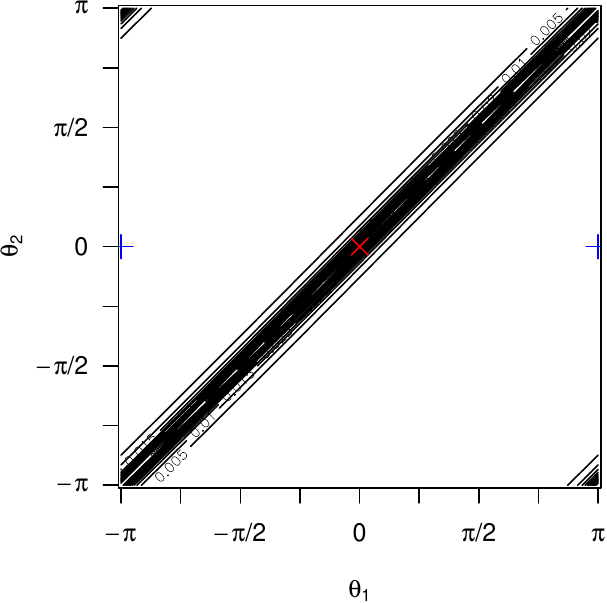} \vspace{0.2cm}\\
		\raisebox{1.7cm}{$\rho_{12}^*=-2$} 
		\includegraphics[trim=0.7cm 1cm 0cm 0cm,clip,width=3.1cm,height=3.1cm]{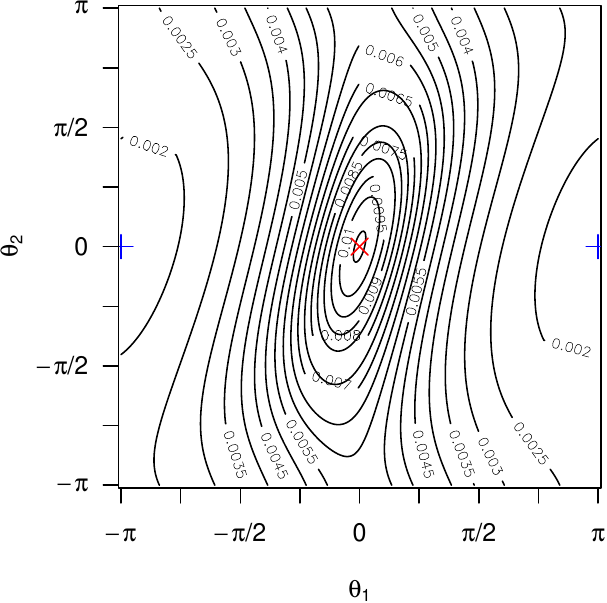}
		\includegraphics[trim=0.7cm 1cm 0cm 0cm,clip,width=3.1cm,height=3.1cm]{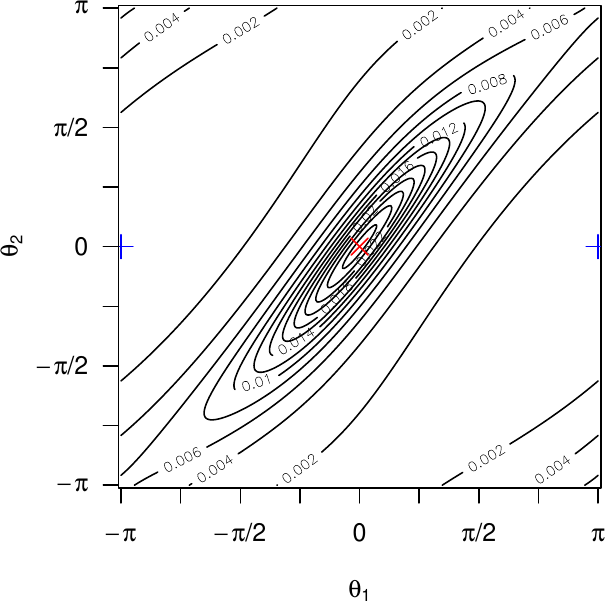}
		\includegraphics[trim=0.7cm 1cm 0cm 0cm,clip,width=3.1cm,height=3.1cm]{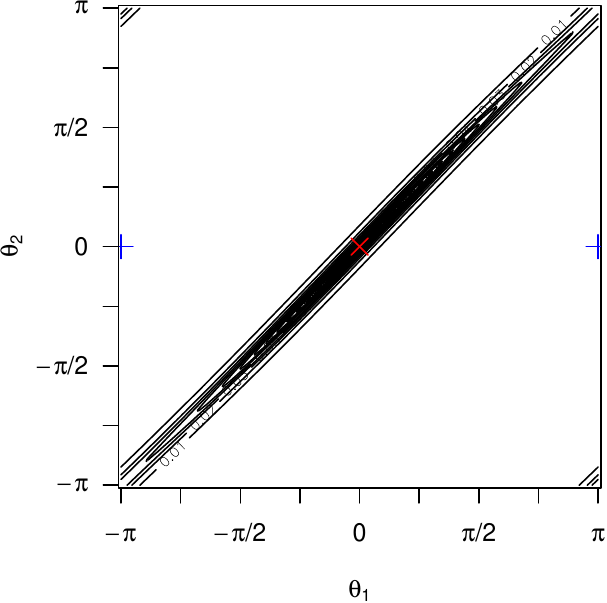} \vspace{0.2cm}\\
		\raisebox{1.7cm}{$\rho_{12}^*=-1.1$} 
		\includegraphics[trim=0.7cm 1cm 0cm 0cm,clip,width=3.1cm,height=3.1cm]{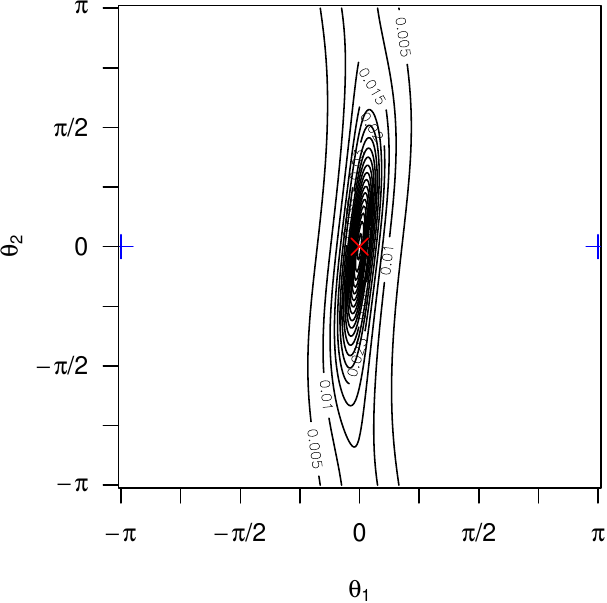}
		\includegraphics[trim=0.7cm 1cm 0cm 0cm,clip,width=3.1cm,height=3.1cm]{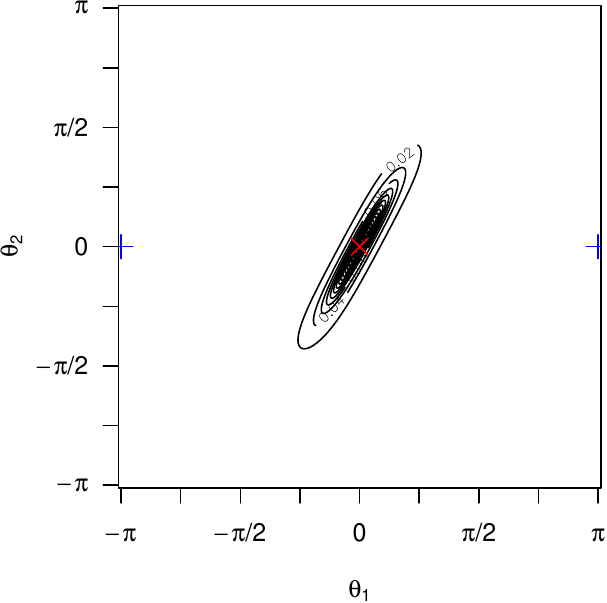}
		\includegraphics[trim=0.7cm 1cm 0cm 0cm,clip,width=3.1cm,height=3.1cm]{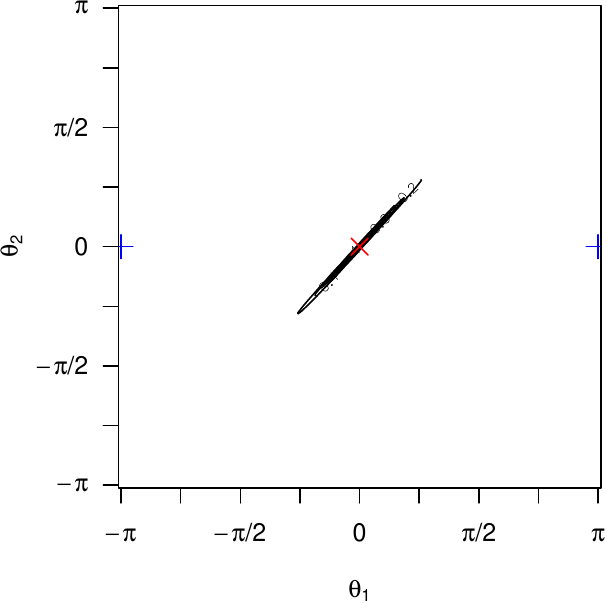} \vspace{0.2cm} \\ 
		} \caption[]{Contour plots 
		 of the density of {TWCC($\rho$)} given by \eqref{eq:tri_density} with $u_3=0$ and $\rho_{12}\rho_{13}\rho_{23}=1$ for nine combinations of $(\rho_{12}^*,\rho_{23.13}^*)$, each taking three different values.
		The horizontal axis represents the value of $u_1$, while the vertical axis stands for the value of $u_2$.
  The symbols `$\times$' (red) and `$+$' (blue) denote the modes and antimodes of density (\ref{eq:tri_density}), respectively.
		\label{fig:densities_new}}
\end{figure}

Figure \ref{fig:densities_new} plots the contour plots of the density (\ref{eq:tri_density}) with $u_3=0$ for various combinations of $(\rho_{12}^*,\rho_{23.13}^*)$.
Note that the values of the parameters $\rho_{12}^* = -5$ and $\rho_{23.13}^* = 0.1$ in one of the contour plots nearly correspond to those in Figure \ref{fig:densities}(a) of the main paper.
The strength of dependence between $u_1$ and $u_2$ increases with $|\rho_{23.13}^*|$.
Also, as $|\rho_{12}^*|$ increases, the shape of the density becomes closer to the limiting density (\ref{eq:limit}) which has  linear contours.
It appears that the convergence to the limiting density (\ref{eq:limit}) as $|\rho_{12}^*| \rightarrow \infty$ is faster for greater values of $|\rho_{23.13}^*|$.
If $|\rho_{12}^*|$ is close to 1, then the {TWCC} density appears to be more concentrated around $u_1=0$.

\section{Correlation coefficients} \label{sec:correlation_coef}
	We consider three well-known correlation coefficients for bivariate circular data.
	Let $(U_i, U_j)'$ be a bivariate circular random vector {and $X_{k} = (\cos U_{k}, \sin U_{k})' \ (k=i,j) $}.
	Then the correlation coefficients of Johnson and Wehrly \cite{suppJW77}, Jupp and Mardia \cite{suppJM80} and Fisher and Lee \cite{suppFL83} are respectively defined by 
	$$
	\rho_{{\rm JW}} = \lambda^{1/2}, \quad
	\rho_{{\rm JM}} =
	\mbox{tr} ( \Sigma_{ii}^{-1}
	\Sigma_{ij} \Sigma_{jj}^{-1}
	\Sigma_{ij}^T),
	$$
	$$ 
	\rho_{{\rm FL}} = \frac{\mbox{det} \{ E ( X_i X_j') \}}{[\mbox{det} \{ E ( X_i X_i')
		\} \mbox{det} \{ E( X_j X_j') \} ]^{1/2}},
	$$
	where $\lambda$ is
	the largest eigenvalue of $\Sigma_{ii}^{-1} \Sigma_{ij}
	\Sigma_{jj}^{-1} \Sigma_{ij}'$ and $\Sigma_{k \ell} = E ( X_{k}
	X_\ell') - E( X_{k}) E( X_\ell)'$ $(k,\ell=i,j)$.
	The correlation coefficients of our bivariate marginal distributions are given in the following theorem.
	
	\begin{thm} \label{thm:correlation}
		Let a trivariate random vector $(U_1,U_2,U_3)'$ follow the  model {TWCC($\boldsymbol{\rho}$)}.
		Then, for any pair of random variables $(U_i,U_j)'$, its correlation coefficients of Johnson and Wehrly \cite{suppJW77}, Jupp and Mardia \cite{suppJM80} and Fisher and Lee \cite{suppFL83} are given by
		$$
		{\rho_{{\rm JW}} = |\phi_{ij}|, \quad \rho_{{\rm JM}} = 2 \phi_{ij}^2 \quad \mbox{and} \quad \rho_{{\rm FL}} = \phi_{ij}^2, }
		$$ 
		respectively, where {$\phi_{ij}$ is given in  \eqref{thephi}.}
	\end{thm}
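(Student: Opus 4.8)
The plan is to exploit the fact that all three correlation coefficients depend only on the bivariate marginal distribution of $(U_i,U_j)'$, which by Theorem~\ref{thm:marginals}(i) is the bivariate wrapped Cauchy copula with density~\eqref{eq:marginal_density}. Since each $U_k$ is circular uniform by Theorem~\ref{thm:marginals}(ii), we have $E(X_k)=(E\cos U_k,\,E\sin U_k)'=(0,0)'$, so that $\Sigma_{k\ell}=E(X_kX_\ell')$ for all $k,\ell$. Computing the within-component matrices is then immediate: uniformity gives $E\cos^2U_k=E\sin^2U_k=1/2$ and $E(\cos U_k\sin U_k)=0$, whence $\Sigma_{ii}=\Sigma_{jj}=\tfrac12 I_2$ with $I_2$ the $2\times 2$ identity.

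The crux is the cross-moment matrix $\Sigma_{ij}=E(X_iX_j')$, whose four entries are linear combinations of $E[\cos(U_i\pm U_j)]$ and $E[\sin(U_i\pm U_j)]$, i.e.\ of the real and imaginary parts of $E[e^{{\rm i}(U_i-U_j)}]$ and $E[e^{{\rm i}(U_i+U_j)}]$. Here I would invoke two facts about the copula~\eqref{eq:marginal_density}: first, since its density depends on $(u_i,u_j)$ only through $u_i-u_j$, the difference $U_i-U_j\pmod{2\pi}$ follows a wrapped Cauchy law whose first trigonometric moment is real and equals $\phi_{ij}$, so that $E[e^{{\rm i}(U_i-U_j)}]=\phi_{ij}$; second, the sum-moment vanishes, $E[e^{{\rm i}(U_i+U_j)}]=0$, which is the specialization to the bivariate margin of the orthogonality relation of Theorem~\ref{thm:moments}(i). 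Using $\cos U_i\cos U_j=\tfrac12\{\cos(U_i-U_j)+\cos(U_i+U_j)\}$ together with the analogous identities for the remaining entries then yields $\Sigma_{ij}=\tfrac12\phi_{ij}I_2$.

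Once these three matrices are in hand, the remaining computation collapses because each is a scalar multiple of $I_2$. Forming $M=\Sigma_{ii}^{-1}\Sigma_{ij}\Sigma_{jj}^{-1}\Sigma_{ij}'=(2I_2)(\tfrac12\phi_{ij}I_2)(2I_2)(\tfrac12\phi_{ij}I_2)=\phi_{ij}^2I_2$, both eigenvalues equal $\phi_{ij}^2$, so $\rho_{\rm JM}=\operatorname{tr}(M)=2\phi_{ij}^2$ and $\rho_{\rm JW}=\lambda^{1/2}=|\phi_{ij}|$; the Fisher--Lee coefficient follows from $\det\Sigma_{ij}=\tfrac14\phi_{ij}^2$ and $\det\Sigma_{ii}=\det\Sigma_{jj}=\tfrac14$, giving $\rho_{\rm FL}=\tfrac14\phi_{ij}^2/\tfrac14=\phi_{ij}^2$.

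I expect the main obstacle to be the trigonometric-moment step, and in particular the sign-and-scale bookkeeping needed to confirm that $E[e^{{\rm i}(U_i-U_j)}]$ equals $\phi_{ij}$ itself rather than its reciprocal. This hinges on the fact, visible from the root selection $\tilde\phi_{ij}^{-}=\phi_{ij}$ in the proof of Theorem~\ref{thm:marginals}(i), that the parameter $\phi_{ij}$ produced by~\eqref{thephi} is the root of modulus at most one; equivalently $\phi_{ij}=\varphi_{ij}$ in the notation of~\eqref{theVARPHI}. This is precisely what guarantees that $\rho_{\rm JW}=|\phi_{ij}|\le 1$ is a genuine correlation coefficient, and it is the one point where care is required.
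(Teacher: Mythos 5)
Your proposal is correct, and it takes a more self-contained route than the paper. The paper's own proof is essentially a citation: it observes that by Theorem~\ref{thm:marginals}(i) the pair $(U_i,U_j)'$ follows the density \eqref{eq:marginal_density}, which is the Kato--Pewsey distribution specialised to circular uniform marginals, and then reads the three coefficients off Section~2.6 of that paper. You instead carry out the computation from first principles: uniform marginals give $E(X_k)=0$ and $\Sigma_{ii}=\Sigma_{jj}=\tfrac12 I_2$, the dependence of \eqref{eq:marginal_density} on $u_i-u_j$ alone kills the sum-moment $E[e^{{\rm i}(U_i+U_j)}]$ (consistent with Theorem~\ref{thm:moments}(i)), and the wrapped Cauchy law of $U_i-U_j$ supplies the difference-moment, whence $\Sigma_{ij}$ is a scalar multiple of $I_2$ and all three coefficients collapse to powers of that scalar. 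This buys transparency and independence from the external reference, at the cost of having to justify the trigonometric-moment step yourself. The one caveat — which you correctly single out — is that the first trigonometric moment of $U_i-U_j$ is $\varphi_{ij}$ as defined in \eqref{theVARPHI}, i.e.\ whichever of $\phi_{ij}$ and $\phi_{ij}^{-1}$ has modulus at most one (see Theorem~\ref{thm:moments}(v)); by the invariance \eqref{eq:cauchy_equiv} the two choices parametrize the same distribution, so the stated formulas $\rho_{\rm JW}=|\phi_{ij}|$, $\rho_{\rm JM}=2\phi_{ij}^2$, $\rho_{\rm FL}=\phi_{ij}^2$ are to be read under the convention $|\phi_{ij}|\le 1$ (equivalently with $\phi_{ij}$ replaced by $\varphi_{ij}$); otherwise $\rho_{\rm JW}$ could not be a bona fide correlation coefficient. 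This is a feature of the theorem's statement rather than a gap in your argument, and your closing remark handles it appropriately.
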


\begin{proof}
	If follows from Theorem \ref{thm:marginals}(i) that $(U_i,U_j)'$ has the density (\ref{eq:marginal_density}).
	This density is equivalent to a special case of the distribution of \cite{suppKP15} with 
 circular uniform marginals.
	Then the three correlation coefficients $\rho_{{\rm JW}}$, $\rho_{\rm JM}$ and $\rho_{\rm FL}$ of our model can be immediately calculated from those of the distribution of \cite{suppKP15} given in Section 2.6 of their paper.
\end{proof}
 
	Note the simplicity of the expressions for all the three correlation coefficients. Further  these all depend on the parameter  $\phi_{ij}$, implying that  it is a  good measure of dependence of the model. {Recall in this context the agreeable fact that $\phi_{ij}$ remains  invariant under $\rho_{ij}$ being replaced by $c\rho_{ij}$ for some constant $c$, hence  does not depend on the identifiability condition on $\rho_{ij}$.}

\section{Method of moments estimation}\label{sec:mom}
In this section, we discuss an alternative estimation procedure based on the results of Section~\ref{Sup_trigmom}.
Method of moments estimators can be obtained by equating theoretical and empirical trigonometric moments
$$
E \left\{ e^{{\rm i} (p_1 U_{1} + p_2 U_{2} + p_3 U_{3} )} \right\} = \frac{1}{n} \sum_{m=1}^n e^{{\rm i} (p_1 u_{1m} + p_2 u_{2m} + p_3 u_{3m}) }
$$
for some selected values of $(p_1,p_2,p_3)' \in \mathbb{Z}^3$.
In order to estimate the parameters of the original distribution {TWCC($\rhob$)}, possible choices of $(p_1,p_2,p_3)'$ are $(p_i,p_j,p_{k})'=(1,-1,0)'$ with $i<j$.
In this case, {equation (\ref{eq:tm_remark})} 
implies that (following lengthy but simple calculations for the second equality)
$$
E \left\{ e^{{\rm i} (U_{i} - U_{j})} \right\} = -\rho_{j k} \left( \frac{\varphi_{j k}}{\rho_{ij}} + \frac{1}{\rho_{i k}} \right) = \varphi_{ij}. 
$$
It follows that the parameters can be estimated as the solution of the following equations:
$$
{\hat{\varphi}_{ij}} = \frac{1}{n} \sum_{m=1}^n e^{{\rm i} (u_{im}-u_{jm} )}, \quad 1 \leq i < j \leq 3,
$$
where {$\hat{\varphi}_{ij}$} is the estimate of {$\varphi_{ij}$} defined in  \eqref{theVARPHI}.
Since there is no closed expression for $\{  \hat{\rho}_{ij} \}$, numerical derivation of these estimates from  $\{\varphi_{ij}\}$ is required. This is convoluted as the $\{  {\rho}_{ij} \}$  do not directly depend on  $\{\varphi_{ij}\}$ but through the two equations {related to  $\phi_{ij}$, see} \eqref{thephi} and~\eqref{theVARPHI}. Further, the parameter constraints \eqref{conditions} and \eqref{identcond} have to be satisfied, which makes the method of moments approach a numerically challenging task and we will not pursue it {since we could compute the maximum likelihood estimator as described in the section below}.

\section{Fisher Information} \label{sec:fisher_information}
Related to the maximum likelihood inference of Section \ref{sec: parameter estimation}, we now provide the associated Fisher information matrix. 
For ease of presentation, denote $c_2 = \frac{1}{(2\pi)^3} c_4^{1/2}$.
Then the expected Fisher Information matrix of the density \eqref{eq:tri_density} is given by
\begin{equation*}
    I(\rho_{12}, \rho_{13}, \rho_{23}) = n
    \begin{pmatrix}
        I_{\rho_{12}\rho_{12}} & I_{\rho_{12}\rho_{13}} & I_{\rho_{12}\rho_{23}} \\
        I_{\rho_{12}\rho_{13}} & I_{\rho_{13}\rho_{13}} & I_{\rho_{13}\rho_{23}} \\
        I_{\rho_{12}\rho_{23}} & I_{\rho_{13}\rho_{23}} & I_{\rho_{23}\rho_{23}} 
    \end{pmatrix}
\end{equation*}
where, denoting $(i,j,k)$ a permutation of $(1,2,3)$,
\begin{align*}
    & I_{\rho_{ij}\rho_{ij}} = \int_{[0,2\pi)^3} - \frac{\partial^2 \log\left(t(\ub_m;\rhob)\right)}{\partial \rho_{ij}^2}t(\ub_m;\rhob) du_1 du_2 du_3, \\
    & I_{\rho_{ij}\rho_{ik}} = \int_{[0,2\pi)^3} - \frac{\partial^2 \log\left(t(\ub_m;\rhob)\right)}{\partial \rho_{ij}\partial \rho_{ik}} t(\ub_m;\rhob) du_1 du_2 du_3, \\
  \mbox{and}\\  
    & \frac{\partial^2 \log\left(t(\ub_m;\rhob)\right)}{\partial \rho_{ij}^2} = \frac{\frac{\partial^2 c_2}{\partial \rho_{ij}^2} c_2 - \left(\frac{\partial c_2}{\partial \rho_{ij}}\right)^2}{c_2^2} - \frac{\frac{\partial^2 c_1}{\partial \rho_{ij}^2} F - \left( \frac{\partial c_1}{\partial \rho_{ij}} + 2\cos(u_{im} - u_{jm})\right)^2}{F^2}, \\
    & \frac{\partial^2 \log\left(t(\ub_m;\rhob)\right)}{\partial \rho_{ij}\partial \rho_{ik}} \\
    & \hspace{1cm} = \frac{\frac{\partial^2 c_2}{\partial \rho_{ij}\partial \rho_{ik}} c_2 - \frac{\partial c_2}{\partial \rho_{ij}} \frac{\partial c_2}{\partial \rho_{ik}}}{c_2^2} \\
    & \hspace{1.5cm} - \frac{\frac{\partial^2 c_1}{\partial \rho_{ij} \partial \rho_{ik}} F - \left( \frac{\partial c_1}{\partial \rho_{ij}} + 2\cos(u_{im} - u_{jm})\right) \left( \frac{\partial c_1}{\partial \rho_{ik}} + 2\cos(u_{im} - u_{k m})\right)}{F^2}.
    \end{align*}
Let us now write out these expressions in detail:
\begin{align*}   
    & F = c_1 + 2 \{ \rho_{12} \cos (u_{1m} - u_{2m}) + \rho_{13} \cos (u_{1m} - u_{3m}) + \rho_{23} \cos (u_{2m} - u_{3m}) \}, \\
    & \frac{\partial^2 c_1}{\partial \rho_{ij}^2} = 2\frac{\rho_{ik}\rho_{jk}}{\rho_{ij}^3}, \\
    & \frac{\partial^2 c_1}{\partial \rho_{ij} \partial \rho_{ik}} = \frac{1}{\rho_{jk}} - \frac{\rho_{jk}}{\rho_{ik}^2} - \frac{\rho_{jk}}{\rho_{ij}^2}, \\
    & \frac{\partial^2 c_2}{\partial \rho_{ij}^2} = \frac{1}{2} \frac{1}{(2\pi)^3} \left( \frac{\partial^2 c_4}{\partial \rho_{ij}^2} c_4^{-1/2} - \frac{1}{2} \left( \frac{\partial c_4}{\partial \rho_{ij}} \right)^2 c_4^{-3/2} \right),\\
    & \frac{\partial c_4}{\partial \rho_{ij}} = 2 \rho_{ij} \left( \left( \frac{\rho_{ik}}{\rho_{jk}} \right)^2 + \left( \frac{\rho_{jk}}{\rho_{ik}} \right)^2 \right) -  2 \frac{(\rho_{jk}\rho_{ik})^2}{\rho_{ij}^3} - 4 \rho_{ij}, \\
    & \frac{\partial^2 c_4}{\partial \rho_{ij}^2} = 2 \left( \frac{\rho_{ik}}{\rho_{jk}} \right)^2 + 2 \left( \frac{\rho_{jk}}{\rho_{ik}} \right)^2 +  6 \frac{(\rho_{jk}\rho_{ik})^2}{\rho_{ij}^4} - 4, \\
    & \frac{\partial^2 c_2}{\partial \rho_{ij}\partial \rho_{ik}} = \frac{1}{2} \frac{1}{(2\pi)^3} \left( \frac{\partial^2 c_4}{\partial \rho_{ij}\partial \rho_{ik}} c_4^{-1/2} - \frac{1}{2} \frac{\partial c_4}{\partial \rho_{ij}} \frac{\partial c_4}{\partial \rho_{ik}} c_4^{-3/2} \right),
    \end{align*}
    \begin{align*}
    \hspace{-6cm} \frac{\partial^2 c_4}{\partial \rho_{ij}\partial \rho_{ik}} = 4 \frac{\rho_{ij}\rho_{ik}}{\rho_{jk}^2} - 4 \rho_{ij}\frac{\rho_{jk}^2}{\rho_{ik}^3} - 4 \rho_{ik}\frac{\rho_{jk}^2}{\rho_{ij}^3}.
\end{align*}

\section{Monte Carlo Simulations} \label{sec:simulations}
In order to confirm that our MLE algorithm for the trivariate wrapped Cauchy copula from Section \ref{sec: parameter estimation} retrieves the true values of the parameters as the sample size increases, we have conducted a Monte Carlo simulation study. To this end, data has been generated from the copula in \eqref{eq:tri_density} using the algorithm described in Theorem \ref{thm:random}.
The sample sizes considered were $50,$  $100,$ $150,$ $200,$ $250,$ $300,$ $350,$ $500,$ $750,$ $1000,$ $1250,$ $1500,$ $1750,$ $2000,$ $3000,$ $4000,$ $5000$ and for each sample size we made 5000 replications.
The median of the results for the different lengths is presented in Figure~\ref{fig:simulations_increasing_n}. 
The true values of the parameters of the copula are $\rho_{12} = 1, \rho_{13} = 0.25$ and $\rho_{23} = 4$ and they are each plotted with dotted lines.
For each different sample, the MLE algorithm was repeated with 50 different initial values.
{The two smaller values converge faster to the truth compared to the larger one. However, the median of the values obtained for all three is close to the true value starting from sample size 500. }

\begin{figure}[htp]
    \centering
    \includegraphics[scale = 0.8]{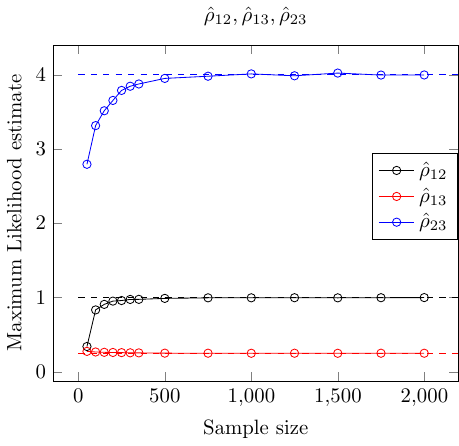}
    \caption{Plots of median values of maximum likelihood estimates for each of the parameters $\rho_{12}, \rho_{13}, \rho_{23}$ from 5000 replications for each sample size. The true parameters, which are plotted with horizontal dotted lines, are $\rho_{12} = 1, \rho_{13} = 0.25$ and $\rho_{23} = 4$.}
    \label{fig:simulations_increasing_n}
\end{figure}

Confidence intervals (CIs) for the estimates can be obtained by means of bootstrap.
Three different simulation studies were performed, with $B = 200$ bootstrap samples being obtained from a sample of size $n=100, n=500$ and $n=1000$, generated by the distribution with density \eqref{eq:tri_density}, for parameter values $\rho_{12} = 1, \rho_{13} = -4$ and $\rho_{23} = -0.25$.
The estimates of the parameters are obtained by repeating the procedure described in Section \ref{sec: parameter estimation} 200 times.
The median of the values and the 95\% CI for the estimated parameter values are shown in Table \ref{tab:bootstrap}.
The true values of the parameter are included in all bootstrap CIs. 
For all values of the sample size, the true value is included in the bootstrap CI.
{Even though the value 0 is included in some CIs, $\rho_{ij}$ cannot take this value and so we write $(a,b)$ for $(a,b) \setminus \{0\}$ where $a<0<b$.}

\begin{table}[htb]
    \centering
    \caption{The true value of the parameters is reported along with the median and 95\% bootstrap confidence interval of the ML estimates of the parameters for 200 bootstrap samples and sample size $n=100, n=500$ and $n=1000$.}
\footnotesize{
    \begin{tabular}{|c|cccc|}
    \hline
        Parameter & True value & \multicolumn{3}{c|}{Median ($95\%$ bootstrap CI)} \\
        && $n=100$ & $n=500$ & $n=1000$ \\
        \hline
         $\rho_{12}$ & 1 & 1.83 (-0.27, 3.96) & 0.99 (-0.19, 1.32) & 1.06 (0.50, 1.31) \\
         $\rho_{13}$ & -4 & -1.92 (-5.26, 0.31) & -3.92 (-8.97, 2.71) & -3.74 (-6.26, -2.63) \\
         $\rho_{23}$ & -0.25 & -0.24 (-2.36, 2.33) & -0.26 (-1.12, 0.74) & -0.25 (-0.29, -0.18) \\
         \hline
    \end{tabular}
}
    \label{tab:bootstrap}
\end{table}

\section{A multivariate extension of our model}\label{sec: possible_extensions}

It is natural albeit highly challenging to extend the model presented here to any $d$-dimensional torus. A potential model could be of the form 
\begin{equation}\label{multi}
t(\ub; \rhob)\propto\left\{c_4+2\sum_{1\leq i<j\leq d}\rho_{ij}\cos(u_i-u_j)\right\}^{-1}
\end{equation}
where $\ub = (u_1, \ldots, u_d)'$, $\rhob \in \R^{d(d-1)/2}$ contains the parameters $\rho_{ij}\in\mathbb{R}$, which need to satisfy certain conditions and $c_4$ depends on them. To make this a valid density, $c_4$ has at least to be equal to $2\sum_{1\leq i<j\leq d}|\rho_{ij}|$, and we need to find the normalizing constant. 
Note that the general form of~\eqref{multi} has been proposed  in \cite{suppMS20}  Equation (106); no properties were investigated. One of the key properties of the  copula \eqref{multi} is given in the following theorem.
\begin{thm} \label{thm:multivariate}
	Let a $[0,2\pi)^d$-valued random vector $(U_1,\ldots,U_d)'$ have the probability density function $t_d(\ub)$ for $\ub = (u_1,\ldots,u_d)'$.
	Suppose that $t_d$ is a function of $\{ u_i - u_j \, ; \, 1 \leq i < j \leq d\}$, namely,
	\begin{equation}
	t_d(\ub) = h (u_1-u_2 , u_1-u_3 , \ldots , u_{d-1}-u_d), \quad 0 \leq u_1,\ldots,u_d < 2\pi.  \label{eq:general_model}
	\end{equation}
{Then the following hold for $(U_1,\ldots,U_d)'$.

\begin{itemize}
\item[(i)] The marginal distribution of $U_i$ $(1 \leq i \leq d)$ has the uniform distribution on the circle.
\item[(ii)] If $p_1+\cdots+p_d \neq 0$, the trigonometric moments of order $(p_1,\ldots,p_d)'$ $(p_1,\ldots,p_d \in \mathbb{Z})$ are
$
E[e^{{\rm i} (p_1 U_1 + \cdots + p_d U_d)}] = 0.
$
\end{itemize}
}
\end{thm}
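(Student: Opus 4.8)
The plan is to exploit one structural symmetry that drives both parts. Assumption~\eqref{eq:general_model} says that $t_d$ depends on $\ub$ only through the pairwise differences $u_i-u_j$. Consequently $t_d$ is invariant under the \emph{diagonal shift} $\ub \mapsto \ub + \alpha\,(1,\ldots,1)'$ (read modulo $2\pi$) for every $\alpha \in \R$, since adding the same $\alpha$ to each coordinate leaves every difference $u_i-u_j$ unchanged. This rotational invariance, combined with the translation invariance of Lebesgue measure on the torus $[0,2\pi)^d$, is the engine of the whole argument. I would establish (ii) first and then deduce (i) from it.

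For (ii), I would write the trigonometric moment as
\[
\Phi(p_1,\ldots,p_d) = \int_{[0,2\pi)^d} e^{{\rm i}\sum_{j=1}^d p_j u_j}\, t_d(\ub)\, d\ub ,
\]
and perform the change of variables $u_j \mapsto u_j + \alpha$ in every coordinate. This is legitimate because each factor $e^{{\rm i} p_j u_j}$ (with $p_j\in\mathbb{Z}$) and $t_d$ are $2\pi$-periodic and the domain is the full torus. The region of integration is then unchanged, $t_d$ is unchanged by the invariance above, and the exponential contributes a constant factor, yielding
\[
\Phi(p_1,\ldots,p_d) = e^{{\rm i}\alpha (p_1+\cdots+p_d)}\,\Phi(p_1,\ldots,p_d).
\]
Since this holds for every $\alpha$ and $p_1+\cdots+p_d\neq 0$, one may pick $\alpha$ with $e^{{\rm i}\alpha(p_1+\cdots+p_d)}\neq 1$, which forces $\Phi(p_1,\ldots,p_d)=0$.

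For (i), I would read it off from (ii). The trigonometric moments of the marginal law of $U_i$ are precisely $\Phi$ evaluated at orders $(0,\ldots,p,\ldots,0)'$ with $p$ in slot $i$; for $p\neq 0$ the coordinate sum is $p\neq 0$, so these vanish by (ii). Hence every nonconstant Fourier coefficient of the law of $U_i$ is zero, which by uniqueness of the Fourier series of a circular density identifies the marginal as the uniform distribution $t_1(u_i)=1/(2\pi)$. Alternatively, (i) admits a direct proof: in $\int_{[0,2\pi)^{d-1}} t_d \,\prod_{j\neq i} du_j$ substitute $w_j = u_i - u_j$ for $j\neq i$; since every difference $u_k-u_\ell$ equals $w_\ell-w_k$ and $u_i-u_j=w_j$, the integrand no longer involves $u_i$, so the marginal is constant in $u_i$.

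The only delicate point is bookkeeping rather than depth: one must justify the coordinate shift as a measure-preserving change of variables on the torus, which rests on the $2\pi$-periodicity of both $t_d$ and the integer-frequency exponentials, and, for the Fourier route to (i), on completeness and uniqueness of circular Fourier series. I would stress that no step uses the explicit form~\eqref{multi} of the density, so the conclusion holds for the entire difference-based class~\eqref{eq:general_model}.
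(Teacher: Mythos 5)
Your proof is correct, but it takes a genuinely different route from the paper's for both parts. For (ii) the paper makes the explicit change of variables $v_i=u_i-u_d$ $(i=1,\ldots,d-1)$, observes that $h$ depends only on $v_1,\ldots,v_{d-1}$, and factors the moment integral into a $v$-integral times $\int_0^{2\pi}e^{{\rm i}(p_1+\cdots+p_d)u_d}\,du_d=0$; you instead invoke invariance of $t_d$ under the diagonal shift $\ub\mapsto\ub+\alpha(1,\ldots,1)'$ to get the functional identity $\Phi=e^{{\rm i}\alpha(p_1+\cdots+p_d)}\Phi$ and conclude $\Phi=0$. These are two faces of the same fact (the paper integrates out the diagonal direction that your symmetry argument treats abstractly), but your version is shorter, more conceptual, and makes the group-theoretic reason for the vanishing transparent, at the cost of invoking the measure-preserving property of torus rotations. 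For (i) the paper argues directly by the substitution $v_i=u_i-u_d$, showing the marginal density is a constant; your primary route deduces (i) from (ii) via vanishing of all nonzero Fourier coefficients plus uniqueness of Fourier series for circular densities, which is an extra (standard) ingredient the paper does not need --- though your stated alternative with $w_j=u_i-u_j$ is essentially the paper's argument. Both approaches share the same implicit hypothesis, which you rightly flag: $h$ must be $2\pi$-periodic in each argument for the shift (or the change of variables over the full torus) to be legitimate, as is the case for the intended class of densities.
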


\begin{proof}
(i) Without loss of generality, we calculate the marginal distribution of $U_d$.
The marginal density of $U_d$ can be expressed as
\begin{align}
 f_{u_d} (u_d) = \int_{[0,2\pi)^{d-1}} h (u_1-u_2,u_1-u_3,\ldots,u_{d-1}-u_d) du_1 \cdots du_{d-1}. \label{eq:multi_proof}
\end{align}
Putting $v_i = u_i - u_d$ $(i=1,\ldots,d-1)$, it follows that $u_i - u_j = v_i - v_j $ for $j=1,\ldots,d-1$ and therefore $\{ u_i-u_j \, ; \, 1 \leq i < j \leq d \}$ can be expressed using $d-1$ variables $v_1,\ldots,v_{d-1}$.
Thus the marginal density (\ref{eq:multi_proof}) can be expressed as
$$
f_{u_d}(u_d) = \int_{[0,2\pi)^{d-1}} h (v_1-v_2 , v_1 -v_3, \ldots, v_{d-1} ) dv_1 \cdots dv_{d-1} = C.
$$
Since $f(u_d)$ is a constant which does not depend on $u_d$, it follows that the marginal distribution of $U_d$ is the uniform distribution on the circle.

{
(ii) Using $v_1,\ldots,v_{d-1}$ and $u_d$, the trigonometric moments of order $(p_1,\ldots,p_d)$ can be expressed as
\begin{align*}
\lefteqn{ E \left[ e^{ {\rm i} (p_1 U_1 + \cdots p_d U_d)} \right] } \hspace{1cm} \\
= & \int_{[0,2\pi)^d} e^{{\rm i} (p_1 u_1 + \cdots + p_d u_d)} h (u_1 , \ldots , u_d ) du_1 \cdots du_d \\
= &  \int_{[0,2\pi)^d} e^{{\rm i} \{ p_1 (v_1+u_d) + \cdots + p_{d-1} (v_{d-1}+u_d) + p_d u_d\} } \\
& \times h (v_1-v_2 , v_1 -v_3, \ldots, v_{d-1} ) dv_1 \cdots dv_{d-1} du_d
 \\
= & \int_{[0,2\pi)^{d-1}} e^{{\rm i} ( p_1 v_1 + \cdots + p_{d-1} v_{d-1}) } h (v_1-v_2 , v_1 -v_3, \ldots, v_{d-1} ) dv_1 \cdots dv_{d-1} \\
& \times \int_{[0,2\pi)} e^{{\rm i} \{ (p_1+ \cdots + p_d) u_d \} } du_d.
\end{align*}
Since $p_1+\cdots + p_d \neq 0$, we have $\int_{[0,2\pi)} e^{{\rm i} \{ (p_1+ \cdots + p_d) u_d \} } du_d=0.$
Therefore 
$$
E \left[ e^{ {\rm i} (p_1 U_1 + \cdots p_d U_d)} \right] = 0.
$$
}
\end{proof}

{Note that Theorem \ref{thm:multivariate}(ii) is an extension of Theorem \ref{thm:moments}(i) which provides the trigonometric moments of the {TWCC} for $p_1+p_2+p_3 \neq 0$.}
These general results  make it appealing to work on the extension of {TWCC} to~\eqref{multi} in the future.

\section{A generalization of the TWC copula} \label{sec:generalc1}

Here we consider a generalization of the {TWC copula (\ref{eq:tri_density}) to allow for greater flexibility.}
This generalization can be derived by removing the constraint on $c_1$ of the density (\ref{eq:tri_density}) and is given by
\begin{align}
	\lefteqn{ c(u_1,u_2,u_3) } \nonumber \\
    & = C_2 \Bigl[ C_1 + 2 \left\{ \rho_{12} \cos (u_1 - u_2) + \rho_{13} \cos (u_1 - u_3) + \rho_{23} \cos (u_2 - u_3) \right\} \Bigr]^{-1}, \nonumber \\
	& \hspace{8cm} 0 \leq u_1,u_2,u_3 < 2\pi, \label{eq:tri_density_g} 
\end{align}
where $\rho_{12},\rho_{13},\rho_{23} \in \mathbb{R}$ satisfy $|\rho_{j k}| < |\rho_{ij} \rho_{i k}| / ( |\rho_{ij}| + |\rho_{i k}|)$ for some $(i,j,k)$, that is, a permutation of $(1,2,3)$, $C_1>2(|\rho_{ij}| + |\rho_{i k}|-|\rho_{j k}|)$, $C_2 = (4 \pi)^{-2} \alpha_2$ $ \times [ K \{ (\alpha_1+\alpha_2^2/2)^{1/2} / \alpha_2 \} ]^{-1}$, $\alpha_1 = - C_1^2/2 + 2 \rho_{12}^2 + 2 \rho_{13}^2 + 2 \rho_{23}^2$,
{ \begin{align*}
	\alpha_2 = & \ 2 \, \Biggl\{ \left( \frac{C_1}{2} + \rho_{12} + \rho_{13} + \rho_{23} \right) \times \prod_{(i,j,k) \in T}   \left( \frac{C_1}{2} + \rho_{ij} - \rho_{i k} - \rho_{j k} \right) \Biggr\}^{1/4},
\end{align*} 
$T=\{(1,2,3),(1,3,2),(2,3,1)\}$}, and $K$ denotes the complete elliptic integral of the first kind (e.g., \cite[equation (8.112.1)]{suppGR2007}) given by
$$
K(\alpha) = \int_0^{\pi/2} \frac{1}{(1-\alpha^2 \sin^2 t)^{1/2}} d t.
$$

{If $C_1=c_1$, then the argument of $K$ in $C_2$ equals 0 and therefore $C_2=c_2$.
This can be proved as follows.
First it can be shown that, with $C_1=c_1$,
$$
\frac{c_1}{2} + \rho_{ij} + q \rho_{i k} +q \rho_{j k} = 
\frac{\left( \rho_{i k} \rho_{j k} + q \rho_{ij} \rho_{i k} + q \rho_{i j} \rho_{j k} \right)^2}{2 \rho_{12} \rho_{13} \rho_{23}} , \quad q =-1,1.
$$
Hence
\begin{align*}
\alpha_2^2 & = 4  \, \Biggl\{ \left( \frac{c_1}{2} + \rho_{12} + \rho_{13} + \rho_{23} \right) \times \prod_{(i,j,k) \in T}   \left( \frac{c_1}{2} + \rho_{ij} - \rho_{i k} - \rho_{j k} \right) \Biggr\}^{1/2} \\
& = \frac{1}{\rho_{12}^2 \rho_{13}^2 \rho_{23}^2} \left\{ (\rho_{12}\rho_{13} + \rho_{12}\rho_{23} + \rho_{13} \rho_{23}) \times \prod_{(i,j,k) \in T} (\rho_{i k} \rho_{j k} - \rho_{ij} \rho_{i k} - \rho_{i j} \rho_{j k}) \right\},
\end{align*}
which can be shown to be equal to
$$
 \alpha_2^2 = c_1^2 - 4 \rho_{12}^2 - 4 \rho_{13}^2 - 4 \rho_{23}^2,
$$
so 
$ \alpha_1 + \alpha_2^2 /2 = 0$, and the argument of $K$ in $C_2$ is equal to 0.
Hence the normalizing constant reduces to
$$
C_2 = \frac{ \alpha_2 }{(4\pi)^2 K(0)} = \frac{\{ c_1^2 - 4 \rho_{12}^2 - 4 \rho_{13}^2 - 4 \rho_{23}^2 \}^{1/2}}{(4\pi)^2 \cdot \pi/2 } = c_2.
$$
}

It is possible to generalize the model (\ref{eq:tri_density}) further by lifting the condition on the parameters $|\rho_{j k}| < |\rho_{ij} \rho_{i k}| / ( |\rho_{ij}| + |\rho_{i k}|)$.
However the domain of $C_1$ is more involved in that case.

The generalized distribution (\ref{eq:tri_density_g}) shares some tractable properties of the distribution (\ref{eq:tri_density}).
For example, the following hold for marginal and conditional distributions of the distribution (\ref{eq:tri_density_g}).
\begin{thm}
	Let $(U_1,U_2,U_3)'$ follow the distribution (\ref{eq:tri_density_g}).
	Then the following hold for the marginal or conditional distributions of $(U_1,U_2,U_3)'$.
	
	\begin{enumerate}
		\item[(i)] The marginal distribution of $U_i$ is the uniform distribution on the circle.
		\item[(ii)] The conditional distribution of $(U_i,U_j)'$ given $U_{k}=u_{k}$ is the special case of the distribution of \cite{suppKP15} with density (\ref{eq:conditional_1st}).
		\item[(iii)] The conditional distribution of $U_i$ given $(U_j,U_{k})'=(u_j,u_{k})'$ is the wrapped Cauchy distribution (\ref{eq:conditional_density2}).
	\end{enumerate}
\end{thm}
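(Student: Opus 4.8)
The plan is to exploit the fact that the generalized density (\ref{eq:tri_density_g}) depends on $(u_1,u_2,u_3)'$ only through the three pairwise differences $u_1-u_2$, $u_1-u_3$ and $u_2-u_3$. Consequently it satisfies the structural hypothesis (\ref{eq:general_model}) of Theorem~\ref{thm:multivariate} with $d=3$, and part (i) follows immediately: each univariate marginal is the circular uniform distribution by Theorem~\ref{thm:multivariate}(i). No further computation is needed here, and in particular the argument does not use the specific form of $C_1$.

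For part (ii), I would form the conditional density of $(U_i,U_j)'$ given $U_k=u_k$ as the ratio of (\ref{eq:tri_density_g}) to the marginal $t_1(u_k)=1/(2\pi)$ established in (i). This yields $2\pi C_2\bigl[C_1+2\{\rho_{ij}\cos(u_i-u_j)+\rho_{ik}\cos(u_i-u_k)+\rho_{jk}\cos(u_j-u_k)\}\bigr]^{-1}$, which is structurally identical to (\ref{eq:conditional_1st}) with $c_1,c_2$ replaced by $C_1,C_2$. Rewriting the bracket exactly as in (\ref{eq:conditional_2nd}) then exhibits it as a reparametrized member of the family of \cite{suppKP15}. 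The normalization is automatic, since integrating $(u_i,u_j)'$ out of the conditional must return $1$ (the joint integrated over these two variables equals the uniform marginal $t_1(u_k)$ by (i)), and a density of the prescribed functional form is uniquely the corresponding \cite{suppKP15} density.

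For part (iii), I would fix $u_j$ and $u_k$ and regard (\ref{eq:tri_density_g}) as a function of $u_i$ alone. The terms $C_1$ and $2\rho_{jk}\cos(u_j-u_k)$ are then constant in $u_i$, while the identity $\alpha\cos\theta+\beta\sin\theta=\sqrt{\alpha^2+\beta^2}\cos(\theta-\gamma)$ collapses $2\rho_{ij}\cos(u_i-u_j)+2\rho_{ik}\cos(u_i-u_k)$ into a single sinusoid $2R\cos(u_i-\gamma)$ with $R=|\rho_{ij}e^{{\rm i}u_j}+\rho_{ik}e^{{\rm i}u_k}|$. The conditional is thus proportional to $[D+2R\cos(u_i-\gamma)]^{-1}$ with $D=C_1+2\rho_{jk}\cos(u_j-u_k)$, which is exactly the wrapped Cauchy form (\ref{eq:conditional_density2}). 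The identity $\rho_{ik}^{-1}e^{{\rm i}u_j}+\rho_{ij}^{-1}e^{{\rm i}u_k}=(\rho_{ij}\rho_{ik})^{-1}(\rho_{ij}e^{{\rm i}u_j}+\rho_{ik}e^{{\rm i}u_k})$ shows that the location direction of this sinusoid still agrees with $\eta_{i|jk}=\arg(\phi_{i|jk})$ from (\ref{eq: params_conditional}), so the location parameter is preserved under the generalization and only the concentration is affected by $C_1$.

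The main obstacle is precisely the bookkeeping in part (iii): the concentration parameter will be the admissible root in $[0,1)$ of $\delta/(1+\delta^2)=R/D$, which depends on $C_1$ and hence differs from the $C_1=c_1$ value $|\phi_{i|jk}|$, so the claim should be read as ``a wrapped Cauchy of the form (\ref{eq:conditional_density2})'' with the location preserved and the concentration rescaled. Verifying that the conditional is a genuine wrapped Cauchy, i.e. that $D>2R$ so the denominator never vanishes, requires no separate work: it is inherited from the positivity of (\ref{eq:tri_density_g}) as a bona fide density, which forces the full bracket to stay strictly positive for all $u_i$.
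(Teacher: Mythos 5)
Your argument follows the same route as the paper's own (very short) proof: part (i) is exactly the appeal to Theorem~\ref{thm:multivariate}(i), and parts (ii)--(iii) rest, as in the paper, on the observation that (\ref{eq:tri_density_g}) has the same functional form as (\ref{eq:tri_density}), so the conditionals are obtained by dividing by the uniform marginal and reading off the resulting family. Where you go beyond the paper is the bookkeeping in (iii), and your caveat there is correct and worth keeping: the conditional of $U_i$ given $(U_j,U_k)'$ is proportional to $[D+2R\cos(u_i-\gamma)]^{-1}$ with $D=C_1+2\rho_{jk}\cos(u_j-u_k)$ and $R=|\rho_{ij}e^{{\rm i}u_j}+\rho_{ik}e^{{\rm i}u_k}|$, so while the location $\eta_{i|jk}$ of (\ref{eq: params_conditional}) is unchanged (your factorization of $\phi_{i|jk}$ shows the mode direction does not involve $C_1$), the concentration is the admissible root of $2\delta/(1+\delta^2)=2R/D$ and therefore depends on $C_1$; it coincides with $\delta_{i|jk}=|\phi_{i|jk}|$ only when $C_1=c_1$. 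The paper's proof, which says (iii) ``follows immediately'' from Theorem~\ref{thm:conditionals}(iii) because the functional form is preserved, silently elides this: statement (iii) should be read as asserting membership in the wrapped Cauchy family of the form (\ref{eq:conditional_density2}), not equality of the concentration parameter with that of the original model, and the analogous reading applies to (ii), where $c_1,c_2$ must be replaced by $C_1,C_2$ in (\ref{eq:conditional_1st}). Your closing remark that $D>2R$ is inherited from the positivity of the joint density, so no separate verification of the wrapped Cauchy admissibility is needed, is also correct.
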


\begin{proof}
	 The property (i) is clear from Theorem \ref{thm:multivariate}.
	The properties (ii) and (iii) follow immediately from Theorem \ref{thm:conditionals}(i) and (iii), respectively, because the generalized density (\ref{eq:tri_density_g}) has the same functional form as the density (\ref{eq:tri_density}). 
\end{proof}

It is remarked that differences between the generalized model (\ref{eq:tri_density_g}) and the original one (\ref{eq:tri_density}) are that the marginal distribution of $(U_i,U_j)'$ is not a submodel of \cite{suppWJ80} and that the conditional distribution of $U_i$ given $U_j=u_j$ is not the wrapped Cauchy in general.

The expressions of modes and antimodes for the generalized density (\ref{eq:tri_density_g}) are as simple as those for the original density (\ref{eq:tri_density}).
The proof is straightforward from Theorem (\ref{thm:modes}) and is therefore omitted.
\begin{cor} \label{cor:g_modes}
	The modes and antimodes of the density (\ref{eq:tri_density_g}) are the same as those of the density (\ref{eq:tri_density}) given in Theorem \ref{thm:modes}.
\end{cor}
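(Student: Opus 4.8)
The plan is to exploit the fact that the generalized density \eqref{eq:tri_density_g} and the original density \eqref{eq:tri_density} share an identical functional form and differ only in their constants. Writing $A(\ub) = \rho_{12}\cos(u_1-u_2) + \rho_{13}\cos(u_1-u_3) + \rho_{23}\cos(u_2-u_3)$ for the common angular part, the passage from \eqref{eq:tri_density} to \eqref{eq:tri_density_g} amounts to replacing the additive constant $c_1$ of \eqref{eq:c1} by the free constant $C_1$, and the normalizing constant $c_2$ by $C_2$. Since the angular function $A$ and the overall pattern $\mathrm{const}\cdot[\mathrm{const}'+2A(\ub)]^{-1}$ are unchanged, I would argue that the locations of the extrema are completely insensitive to this replacement.

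First I would reduce the problem to the denominator. Because $C_2 > 0$ and the imposed condition $C_1 > 2(|\rho_{ij}| + |\rho_{ik}| - |\rho_{jk}|)$ guarantees that $D(\ub) := C_1 + 2A(\ub) > 0$ for every $\ub \in [0,2\pi)^3$, the density $c(\ub) = C_2 / D(\ub)$ is a strictly decreasing function of $D(\ub)$. Consequently the modes of \eqref{eq:tri_density_g} are precisely the global minimizers of $D$ over the torus, and the antimodes are precisely its global maximizers.

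Next I would observe that $D(\ub) = C_1 + 2A(\ub)$ and the original denominator $c_1 + 2A(\ub)$ differ only by the constant $C_1 - c_1$, which does not depend on $\ub$. Adding a constant to a function leaves unchanged the set of points at which it attains its minimum and its maximum. Hence the minimizers and maximizers of $D$ coincide with those of $c_1 + 2A$, and the latter were identified in the proof of Theorem \ref{thm:modes} (via the reduction to minimizing $B(x)$ in $x=\cos(u_1-u_2)$ together with the optimal choice of $u_3$) as exactly the modes \eqref{eq:modes} and antimodes \eqref{eq:antimodes}. This yields the claim.

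The only genuine obstacle is the positivity of $D$ under the relaxed constant $C_1$, which is what licenses the monotonicity step; here I would note that the minimum of $A$ over the torus is attained at the same location as the mode in Theorem \ref{thm:modes}, and its value is exactly $-(|\rho_{ij}|+|\rho_{ik}|-|\rho_{jk}|)$ for the relevant permutation, so the stated threshold on $C_1$ is precisely the one keeping $D$ bounded away from zero. Once this is in hand no further computation is needed, since the extremal locations are read off verbatim from Theorem \ref{thm:modes}.
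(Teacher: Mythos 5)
Your argument is correct and is precisely the ``straightforward'' reasoning the paper invokes when it omits the proof: since \eqref{eq:tri_density_g} differs from \eqref{eq:tri_density} only through the additive constant in the denominator and the normalizing constant, and adding a constant does not move the extremal locations of $C_1+2A(\ub)$, the modes and antimodes are read off from Theorem~\ref{thm:modes}. Your closing observation that $\min_{\ub} A = -(|\rho_{ij}|+|\rho_{ik}|-|\rho_{jk}|)$ also correctly explains the stated lower bound on $C_1$, matching the remark following the corollary.
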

Note that this corollary leads to the simplicity of the range of the parameter $C_1>2(|\rho_{ij}| + |\rho_{i k}|-|\rho_{j k}|)$.


\end{document}